\newtheorem{theorem}{Theorem}[section]
\newtheorem{lemma}[theorem]{Lemma}
\newtheorem{proposition}[theorem]{Proposition}
\newtheorem{definition}[theorem]{Definition}
\newtheorem{remark}[theorem]{Remark}
\newtheorem{assumption}[theorem]{Assumption}
\numberwithin{equation}{section}
\newcommand{\F}{\mathcal{F}}
\newcommand{\Pred}{\mathcal{P}}
\newcommand{\filt}{\mathbb{F}}
\newcommand{\essinf}{\mathop{\mbox{ess inf}}}
\newcommand{\esssup}{\mathop{\mbox{ess sup}}}
\newcommand{\argmax}{\mathop{\mbox{argmax}}}
\newcommand{\argmin}{\mathop{\mbox{argmin}}}
\newcommand\abs[1]{\left\vert {#1} \right\vert}
\newcommand {\R}{\mathbb {R}}
\newcommand {\N}{\mathbb {N}}
\date{}
\title[Good-deal hedging and valuation under combined uncertainty]{Good deal hedging and valuation under combined uncertainty about drift and volatility}
\author[D. Becherer]{Dirk Becherer}
\address[D. Becherer]{Institut f\"ur Mathematik, Humboldt-Universit\"at zu Berlin, D-10099 Berlin, Germany}
\email{becherer\,@\,mathematik.hu-berlin.de}
\author[K. Kentia]{Klebert Kentia}
\address[K. Kentia]{Institut f\"ur Mathematik, Goethe-Universit\"at Frankfurt, D-60054 Frankfurt a.M., Germany}
\email{kentia\,@\,math.uni-frankfurt.de}
\thanks{We would like to thank DFG, Berlin Mathematical School and RTG 1845 for support, and Xiaolu Tan for discussions and helpful suggestions.}
\begin{document}

\keywords{Combined drift and volatility uncertainty, good-deal bounds, robust good-deal hedging, second-order BSDE, stochastic control}
\subjclass[2010]{60G44, 60h30, 91G10, 93E20, 91B06, 91B30}

\begin{abstract}
We study robust notions of good-deal hedging and valuation under combined uncertainty about the drifts and volatilities of asset prices.  
Good-deal bounds are determined by a subset of  risk-neutral pricing measures
 such that not only opportunities for arbitrage are excluded but also deals that are too good, by restricting instantaneous Sharpe ratios. 
A non-dominated multiple priors approach to model uncertainty (ambiguity) leads to worst-case good-deal bounds.
Corresponding hedging strategies arise as minimizers of a suitable coherent risk measure.
Good-deal bounds and hedges for measurable claims are characterized by solutions to second-order backward stochastic differential equations 
whose generators are non-convex in the volatility. 
These hedging strategies are robust with respect to uncertainty in the sense that their tracking errors satisfy a supermartingale property under all a-priori 
valuation measures, uniformly over all priors. 
\end{abstract}
\maketitle

\section{Introduction}
Hedging and valuation under model uncertainty (ambiguity) about volatility has been a seminal problem in the topical area of robust finance. In mathematics, it has motivated to no small extend recent advances on subjects
such as second order backward stochastic differential equations (2BSDEs), G-expectations and related stochastic calculus,  sub-linear conditional expectations and control of non-linear kernels,  using a variety of different methods from stochastic control, quasi-sure analysis and capacity theory, 
or expectation-spaces and PDE-theory, see \citet{ DenisMartini, DenisHuPeng, Nutz12, NutzSoner,SonerTouziZhang-Wellposedness,NutzHandel13, PengHuetal14,PossamaiTanZhou}
and many more references therein.
   The research has been challenging (and fruitful) since one has to deal (in a probabilistic setup) with families of non-dominated probability measures, also called multiple priors, that  can be mutually singular. In contrast, uncertainty solely 
   about drifts in a continuous time setting of stochastic It\^o-processes could be dealt with in a dominated framework of measures which are absolutely continuous with respect to a single reference probability measure.  

The main contributions of the current paper are twofold. For incomplete markets 
in continuous time, we solve the problem of robust  hedging and valuation under \emph{combined uncertainty about both the drifts and the volatilities} of the It\^o processes 
which describe the evolution of the tradeable asset prices  in an  underlying non-Markovian model for the financial market. 
Further, we investigate to this end the \emph{no-good-deal approach} to hedging and valuation, that is much cited in the finance literature  (cf. \citet{CochraneRequejo,CernyHodges,BjorkSlinko}) and provides more narrow valuation bounds and less extreme hedges than the more fundamental 
approach of almost-sure-hedging by  superreplication with its corresponding no-arbitrage valuation bounds.

Concerning the influential application of hedging under volatility uncertainty in continuous-time models which has been stipulated, at least, by \citet{AvellanedaetAl,Lyons95}, 
the  literature so far has almost entirely been concerned with the superreplication approach, for uncertainty being restricted solely to volatility as ambiguity about the 
drift of the asset prices has no effect there. 
While the notion of superreplication is fundamental to the theory of stochastic processes and for applications, being related to the optional decomposition and excluding the 
possibility of losses, it is also known from a practical point of view that superreplication could be overly expensive already in the absence of uncertainty for
 incomplete market models. 
 This calls for an adaption of other, less conservative, concepts for partial (not almost-sure) hedging in incomplete markets to a framework that is robust with respect to model ambiguity. 
Under combined uncertainties on drift and volatility, which are going to be relevant to such alternative approaches where the (ambiguous) distributions of hedging errors matter, however 
new mathematical challenges have to be overcome, as
the non-dominated family of measures will not just consist of local martingale measures, as e.g.\ in \citet{SonerTouziZhang-Wellposedness}.
Such has been noted  and addressed just recently in \citet{Nutz12,EpsteinJi2014,PossamaiTanZhou}. Likewise, we are aware of only few
recent  articles on the related problem of expected utility maximization under uncertainty about both drifts and volatilities  \citet{TevzadzeetAl2013,BiaginiPinar,NeufeldNutzUtilityMax}, some of which achieve quite explicit results for models with specific parametric structure. 
Among many interesting contributions on utility optimization under only one type of uncertainty, see for instance \citet{ChenEpstein,Quenez04,Garlappi,Schied07,OksendalSulem14} 
for solely (dominated) uncertainty about drifts,  or \citet{MatoussiPZ,HuEtal14} for ambiguity solely about volatilities but not about drifts. For equilibrium prices of a 
representative agent under ambiguity about the volatility, see \citep[][Sect.3.3]{EpsteinJi2013}.
 To the best of our knowledge, there appear to be hardly any studies on hedging approaches for (generically) incomplete markets  under combined ambiguity about drifts and volatilities  
- apart from superreplication.
 
We are going to investigate a robust extension of the no-good-deal hedging approach in continuous time under combined ambiguity about the volatilities and drifts.
Without model uncertainty, good-deal bounds have been introduced as valuation bounds in incomplete markets which do not only prevent 
opportunities for arbitrage  but also for deals with an overly attractive risk-to-reward ratio.
The most cited reference in the finance literature appears to be  \citet{CochraneRequejo}. We refer 
to  \citet{BjorkSlinko,CernyHodges, KloppelSchweizer} for mathematical and conceptual ideas and many more references.
By using only a suitable subset of ``no-good-deal'' risk neutral prices, the resulting valuation bounds 
are tighter than the classical no-arbitrage bounds (which are often too wide) but still have economic meaning.
Good-deal bounds have been defined predominantly by constraints on the  instantaneous Sharpe ratios in (any) extension 
of the financial market by additional derivatives' price processes, see \citet{BjorkSlinko}. For model without jumps, such is equivalent to imposing constraints on the 
optimal expected growth rates, see \citet{Becherer-Good-Deals}.
 Although good-deal theory has been initiated merely as a valuation approach \citep[cf.\ the conclusions in][]{BjorkSlinko}, 
 a corresponding approach to  hedging has been proposed, cf.\  \citet{Becherer-Good-Deals}, where  (good-deal)  hedging strategies are defined as 
minimizers of a certain dynamic coherent risk measure, in the spirit of \citet{BarrieuElKaroui}, such that  the good-deal bounds appear as 
a market consistent risk measures.
Naturally, results on valuations and hedges in good-deal theory, like in \citet{CochraneRequejo,BjorkSlinko,Becherer-Good-Deals}, 
are sensitive to the assumptions of the probability model on the drifts and volatilities of the asset prices. 
Since the objective real world measure is not precisely known and financial models can, at best, be useful but idealized simplifications of reality, 
robust approaches to model ambiguity are relevant to good-deal theory.

As far as we know, a robust approach to good-deal hedging in continuous time under non-dominated 
uncertainty has not been available so far.  For drift uncertainty and more references see  \citet{BechererKentiaTonleu}. Robust results on valuation and hedging  
 will be obtained by  2BSDE theory, building on recent advances by \citet{PossamaiTanZhou} whose analysis  provides a general wellposedness result that fits well for the present 
 application under combined uncertainty, cf.\ Remark \ref{rem:DynProgPrinciple}.
 Indeed, their result neither requires convexity nor uniform continuity of the generator, and it can deal with general (measurable) contingent claims without assuming further 
 regularity (like e.g.\ uniform continuity) or a Markovian framework.

The organization of the present paper is as follows. The setup and preliminaries are explained in
Section \ref{sec:Preliminaries}, with a brief summary of key results on 2BSDEs. 
Then we begin Section \ref{sec:NGDRestrictionandFinMarketVolUncer} by a description of the financial market and the (non-dominated) confidence set of (uncertain) priors that 
captures the combined ambiguity about  drifts and volatilities. Let us note that, in comparison to most literature on hedging under ambiguous volatility, 
like \citet{AvellanedaetAl,Lyons95}, we are going to consider a model for asset prices that constitutes a 
generically \emph{incomplete market}, even if seen just under (any) one individual prior. That means not only that there exists in general  
no perfect hedging (i.e.\ replicating) strategy which is robust with respect to ambiguity on priors, but that there does not even exists a replicating strategy in general in 
the model for (any) one given probability prior, without ambiguity. Section \ref{sec:NGDRestrictionandFinMarketVolUncer} then proceeds by taking drift and volatility  to be known  at first, in order to explain the idea for the good-deal approach as simply as possible. Following classical good-deal theory, good-deal 
restrictions are defined by constraints on the instantaneous Shape ratios, i.e.\ by radial bounds on the Girsanov kernels of pricing measures,
and standard BSDE descriptions of valuation bounds and hedges are summarized. 
Adopting a multiple-priors approach, like e.g.\ in \citet{GilboaS89,ChenEpstein}, while accommodating 
for the fact that priors here are non-dominated, Section \ref{sec:GoodDealValandHedgingwithVolUncer} starts by defining the good-deal bounds under uncertainty as 
the worst-case bounds over all priors. For hedging purposes, we define good-deal hedging strategies 
as minimizers of suitable a-priori risk measures under optimal risk sharing with the market. We derive  2BSDE characterizations 
for the dynamic valuation bounds and the hedging strategies. We show that 
tracking errors from  good-deal hedging satisfy a supermartingale property under all a-priori valuation measures uniformly for all priors. 
The proof relies on saddle-point arguments to identify the robust good-deal hedging strategy through a 
minmax identity. Finally, we finish Section \ref{sec:GoodDealValandHedgingwithVolUncer} 
with a simple but instructive  example about hedging a put option on a non-traded (but correlated) asset in an incomplete market. 
 This  allows for an elementary closed form solution, offering  intuition for the general but abstract main Theorem~\ref{thm:GDHedgingTheorem}.
It illustrates for instance that the good-deal hedging strategy generally is very different from the super-replicating strategy, which has been studied in e.g.\ 
 \citet{AvellanedaetAl,Lyons95,DenisMartini,NutzSoner,NeufeldNutz,Vorbrink2014}. The concrete case study also illustrates, how already in an elementary Markovian example 
 additional complications arise from \emph{combined} uncertainty.

\section{Mathematical framework and preliminaries}\label{sec:Preliminaries}
We consider filtered probability space $(\Omega,\F=\F_T,P^0,\filt)$ where $\Omega$ is the canonical space $\left\lbrace\omega\in \mathcal{C}([0,T],\R^n):\omega(0)=0\right\rbrace$ of continuous 
paths starting at $0$ endowed with the norm $\Vert \omega\Vert_\infty := \sup_{t\in[0,T]}\abs{\omega(t)}$. The filtration $\filt =(\F_t)_{t\in[0,T]}$ is generated by the 
canonical process $B_t(\omega):=\omega(t)$, $\omega\in\Omega$ and $P^0$ is the Wiener measure. 
We denote by $\filt_+= (\F_t^+)_{t\in[0,T]}$ the right-limit of $\filt$, with $\F^+_t=\F_{t+}:=\cap_{s>t}\F_s$. For a probability measure $Q$, the conditional expectation 
given $\F_t$ will be denoted by $E^Q_t[\cdot]$.
A probability measure $P$ is called a local martingale measure if $B$ is a local martingale w.r.t.\ $(\filt, P)$.
One can, cf.\   \citet{Karandikar}, construct the quadratic variation process 
$\left\langle B\right\rangle$ pathwise such that it coincides with $\langle B\rangle^P$ $P$-a.s.\ for all local martingale measures $P$. 
In particular this yields a pathwise definition of the density  $\widehat{a}$ of $\langle B\rangle$ w.r.t.\ the Lebesgue measure as
\begin{equation*}
 \widehat{a}_t(\omega) :=\limsup_{\epsilon\searrow0}\frac{1}{\epsilon}\big(\langle B\rangle_t(\omega) - \langle B\rangle_{t-\epsilon}(\omega)\big),\quad (t,\omega)\in[0,T]\times \Omega.
\end{equation*}
We denote by $\overline{\Pred}_W$ the set of all local martingale measures $P$ for which $\widehat{a}$ is well-defined and takes values $P$-almost surely in the space $\mathbb{S}^{>0}_n\subset \R^{n\times n}$ of positive 
definite symmetric $n\times n$-matrices. Note that the measures in $\overline{\Pred}_W$ can be mutually singular, as illustrated e.g. in \citet{SonerTouziZhang-Aggregation}. 
For any $P\in \overline{\Pred}_W$, the process $W^P:={\phantom{}}^{(P)}\hspace{-0.1cm}\int_0^\cdot\widehat{a}^{-\frac{1}{2}}_sdB_s$ is a Brownian motion under $P$. 
To formulate volatility uncertainty, we concentrate only on 
the subclass $\overline{\Pred}_S\subset \overline{\Pred}_W$ of measures
\begin{equation*}
 P^\alpha :=P^0\circ(X^\alpha)^{-1},\quad \text{where } X^\alpha:=\phantom{}^{\phantom{}^{(P^0)}}\hspace{-0.15cm}\int_0^\cdot\alpha^{1/2}_sdB_s,
\end{equation*}
with $\mathbb{S}^{>0}_n-$valued $\filt$-progressive $\alpha$ satisfying $\int_0^T\lvert\alpha_t\rvert dt<\infty,\ P^0\text{-a.s.}$. 
A benefit of restricting to the subclass $\overline{\Pred}_S$ is the following aggregation property \citep[cf.][Lem.8.1, Lem.8.2]{SonerTouziZhang-Aggregation}.
\begin{lemma}\label{lem:QSAggregPS}
 For $P\in\overline{\mathcal{P}}_W$, let $\filt^P$  denote the $P$-augmentation of the filtration $\filt$ and $\overline{\filt^{W^P}}^P$ that of the natural filtration $\filt^{W^P}$ of $W^P$. Then
 $B$ has the martingale representation property  w.r.t.\  $(\filt^P,P)$ for all  $P\in \overline{\Pred}_S$, 
and  $ \overline{\Pred}_S = \big\lbrace P\in \overline{\Pred}_W:\ \filt^P = \overline{\filt^{W^P}}^P\big\rbrace$. Moreover, $(P,\filt)$ satisfies the Blumenthal 
 zero-one law for any $P\in \overline{\Pred}_S$.
\end{lemma}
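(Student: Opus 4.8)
The plan is to prove the set identity $\overline{\Pred}_S = \{P\in\overline{\Pred}_W : \filt^P = \overline{\filt^{W^P}}^P\}$ first, and then to read off both the martingale representation property and the Blumenthal zero-one law as consequences of this identity combined with the corresponding classical facts for Brownian motion. I would establish the set identity by two inclusions: the inclusion ``$\subseteq$'' rests on an explicit change of variables, while the reverse inclusion ``$\supseteq$'' requires a functional strong-solution representation and is, I expect, the genuinely hard step.

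For ``$\subseteq$'', fix $P=P^\alpha=P^0\circ(X^\alpha)^{-1}$. The key point is that $W^{P^\alpha}$ undoes the stochastic integral defining $X^\alpha$: the pathwise definition of $\widehat{a}$ gives $\widehat{a}_s(X^\alpha)=\alpha_s$ (since $\langle X^\alpha\rangle_\cdot=\int_0^\cdot\alpha_s\,ds$), $P^0$-a.s., so that under $P^0$
\begin{equation*}
 W^{P^\alpha}\circ X^\alpha = {}^{(P^0)}\!\!\int_0^\cdot\widehat{a}^{-1/2}_s(X^\alpha)\,d(X^\alpha)_s = {}^{(P^0)}\!\!\int_0^\cdot\alpha_s^{-1/2}\alpha_s^{1/2}\,dB_s = B.
\end{equation*}
Combined with the reverse relation $B={}^{(P)}\!\int_0^\cdot\widehat{a}^{1/2}_s\,dW^P_s$, which holds by the very definition of $W^P$, these two mutually inverse stochastic-integral identities exhibit $B$ as $\overline{\filt^{W^P}}^P$-adapted and $W^P$ as $\filt^P$-adapted, whence $\filt^P=\overline{\filt^{W^P}}^P$.

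Granting the filtration identity for $P\in\overline{\Pred}_S$, the first assertion follows quickly: since $W^P$ is a $P$-Brownian motion and $\filt^P=\overline{\filt^{W^P}}^P$, the classical predictable representation theorem yields, for any $(\filt^P,P)$-local martingale $M$, a representation $M=M_0+{}^{(P)}\!\int_0^\cdot H_s\,dW^P_s$; substituting $dW^P_s=\widehat{a}^{-1/2}_s\,dB_s$ rewrites this as an integral against $B$, which is the martingale representation property for $B$. The Blumenthal zero-one law follows in the same spirit: the augmented Brownian filtration $\overline{\filt^{W^P}}^P$ satisfies it, so by the identity $\F^P_{0+}$ is $P$-trivial, which is exactly the assertion for $(P,\filt)$.

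The reverse inclusion ``$\supseteq$'' is the main obstacle. Given $P$ with $\filt^P=\overline{\filt^{W^P}}^P$, the density $\widehat{a}$ is $\overline{\filt^{W^P}}^P$-progressive and the canonical process solves the strong equation $dB=\widehat{a}^{1/2}\,dW^P$ driven by the $P$-Brownian motion $W^P$. One must produce an $\filt$-progressive, $\mathbb{S}^{>0}_n$-valued process $\alpha$ with $\int_0^T\abs{\alpha_t}\,dt<\infty$, $P^0$-a.s., for which $P=P^0\circ(X^\alpha)^{-1}$. This calls for a measurable functional representation (choosing an $\filt^{W^P}$-progressive, rather than merely $P$-augmented, version of $\widehat{a}$, and expressing the solution $B$ as a fixed pathwise functional of the driving path) followed by a relabeling of $W^P$ as the canonical process, under which $P^0$ is the law of $W^P$ and the functional becomes $X^\alpha$. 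The delicate points are the progressive measurability of the selected $\alpha$ and the nondegeneracy ensuring $X^\alpha$ is well defined; I would carry out this construction along the aggregation arguments of \citet{SonerTouziZhang-Aggregation}, Lemmas 8.1--8.2.
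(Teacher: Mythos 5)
Your proposal is correct and takes essentially the route the paper itself relies on: the paper does not prove Lemma \ref{lem:QSAggregPS} but imports it from \citet{SonerTouziZhang-Aggregation} (Lemmas 8.1--8.2), whose argument is exactly your scheme --- the mutually inverse identities $W^{P^\alpha}\circ X^\alpha=B$ ($P^0$-a.s.) and $B={}^{(P)}\!\int_0^\cdot\widehat{a}^{1/2}_s\,dW^P_s$ for the inclusion ``$\subseteq$'', a measurable functional (strong-solution) representation for ``$\supseteq$'', and then the classical predictable representation theorem and Blumenthal law for the augmented Brownian filtration transferred through $\filt^P=\overline{\filt^{W^P}}^P$. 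The only step worth phrasing more carefully is that $B$ being $\overline{\filt^{W^P}}^P$-adapted comes from the inverse relation via the law identification of $(W^P,B)$ under $P^\alpha$ with $(B,X^\alpha)$ under $P^0$ together with adaptedness of the integral $X^\alpha$ to the driver's filtration (the identity $B=\int\widehat{a}^{1/2}\,dW^P$ alone is circular, its integrand being a functional of $B$), which is precisely how the cited lemmas argue.
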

\begin{remark}
For any $P\in\overline{\Pred}_S$, Lemma \ref{lem:QSAggregPS} implies $E^P_t[X] = E^P[X\,\vert\,\F^+_t]$ $P\text{-a.s.}$ for any $X$ in $L^1(P)$, $t\in[0,T] $. In particular, any
$\F_t^+$-measurable random variable has a $\F_t$-measurable $P$-version.
\end{remark}
Let $\underline{a},\overline{a}\in \mathbb{S}^{>0}_n$. We will work with the subclass $\Pred_{[\underline{a},\overline{a}]}$ of $\overline{\Pred}_S$ defined by 
\begin{equation}\label{eq:DefPH}
 \Pred_{[\underline{a},\overline{a}]} := \left\lbrace P\in \overline{\Pred}_S:\ \underline{a}\le\widehat{a}\le\overline{a},\ P\otimes dt\text{-a.e.}\right\rbrace
\end{equation}
and assumed to be non-empty. We use  
the language of quasi-sure analysis as it appears in the framework of capacities of \citet{DenisMartini} as follows.
\begin{definition}
 A property is said to hold $\mathcal{Q}$-quasi-everywhere ($\mathcal{Q}$-q.e.\ for short) for a family $\mathcal{Q}$ of measures on the same measurable space if it holds outside of a set, which is 
 a nullset under each element of $\mathcal{Q}$. 
\end{definition}
Unless stated otherwise, inequalities between random variables will be meant in a $\Pred_{[\underline{a},\overline{a}]}$-quasi-sure sense (written $\Pred_{[\underline{a},\overline{a}]}$-q.s.\ for short), while inequalities 
between $\filt_+$-progressive 
processes will be in the $\Pred_{[\underline{a},\overline{a}]}\otimes dt$-q.e.\ sense,  for $\Pred_{[\underline{a},\overline{a}]}\otimes dt := \big\lbrace P\otimes dt,\ P\in\Pred_{[\underline{a},\overline{a}]}\big\rbrace$. 
We now introduce spaces and norms of interest for the paper. Some of these spaces are already quite classical, and have been modified here to account for the possible mutual 
singularity of measures in $\Pred_{[\underline{a},\overline{a}]}$. For a filtration $\mathbb{X}=(\mathcal{X}_t)_{t\in[0,T]}$ on 
$(\Omega,\F_T)$ with augmentation $\mathbb{X}^P:=\big(\mathcal{X}^P_t\big)_{t\in[0,T]}$  under  measure $P\in\Pred_{[\underline{a},\overline{a}]}$, we consider the following function 
spaces:
 \\
 a) $L^2_{\Pred_{[\underline{a},\overline{a}]}}(\mathcal{X}_T)$ (resp. $L^2(\mathcal{X}_T,P)$) of $\mathcal{X}_T-$measurable real-valued random variables $X$ with norm
			    \(
			    \lVert X\rVert^2_{L^2_{\Pred_{[\underline{a},\overline{a}]}}}=\sup_{P\in\Pred_{[\underline{a},\overline{a}]}}E^P\big[\abs{X}^2\big]<\infty\
			    \)
			    \(\left(\text{resp. } \lVert X\rVert^2_{L^2(P)}=E^P\big[\abs{X}^2\big]<\infty\right),
			    \) 
\\
	 b) $\mathbb{H}^2(\mathbb{X})$ (resp. $\mathbb{H}^2(\mathbb{X},P)$) of $\mathbb{X}-$predictable $\R^n-$valued processes $Z$ with 
 \begin{equation*}
 \lVert Z\rVert^2_{\mathbb{H}^2}=\sup_{P\in\Pred_{[\underline{a},\overline{a}]}}E^P\Big[{\int_0^T}\big\vert\widehat{a}_t^{\frac{1}{2}}Z_t\big\vert^2dt\Big]\hspace{-0.13cm}<\hspace{-0.1cm}\infty\
 	\Big(\text{resp.}\ \lVert Z\rVert^2_{\mathbb{H}^2(P)}=E^P\Big[{\int_0^T}\big\vert\widehat{a}_t^{\frac{1}{2}}Z_t\big\vert^2dt\Big]\hspace{-0.12cm}<\hspace{-0.1cm}\infty\Big),
  \end{equation*}
\\
c)
 $\mathbb{D}^2(\mathbb{X})$ (resp.\ $\mathbb{D}^2(\mathbb{X},P)$)  of all $\mathbb{X}-$progressive $\R$-valued processes $Y$ with   
  c\`adl\`ag paths $\Pred_{[\underline{a},\overline{a}]}$-q.s.\  (resp.\ $P$-a.s.), and satisfying 
	  \begin{equation*}\lVert Y\rVert_{\mathbb{D}^2}:=\Big\lVert \sup_{t\in[0,T] }\vert Y_t\vert\Big\rVert_{L^2_{\Pred_{[\underline{a},\overline{a}]}}}<\infty\ \Big(\text{resp.\ } \lVert Y\rVert_{\mathbb{D}^2(P)}:=\Big\lVert \sup_{t\in[0,T] }\vert Y_t\vert\Big \rVert_{L^2(P)}<\infty\Big),\end{equation*}
\\
d)  $\mathbb{L}^2(\mathbb{X})$ the subspace of $L^2_{\Pred_{[\underline{a},\overline{a}]}}(\mathcal{X}_T)$ consisting of random variables $X$ satisfying 
	  \begin{equation*}\lVert X\rVert^2_{\mathbb{L}^2}:=\sup_{P\in\Pred_{[\underline{a},\overline{a}]}}E^P\Big[{{\esssup^{\qquad\quad P}_{t\in[0,T] }}}\esssup^{\qquad\ \quad P}_{P'\in\Pred_{[\underline{a},\overline{a}]}(t,P,\mathbb{X})}E^{P'}\big[\lvert X\rvert^2\,\big\vert\, \mathcal{X}_t\big]\Big]<\infty,\end{equation*}
 for the set of measures
 $\Pred_{[\underline{a},\overline{a}]}(t,P,\mathbb{X}) := \left\lbrace P'\in\Pred_{[\underline{a},\overline{a}]}: P'=P\text{ on }\mathcal{X}_t\right\rbrace,$
\\
e) 
$\mathbb{I}^2(\mathbb{X},P)$ the space of $\mathbb{X}$-predictable processes $K$ with c\`adl\`ag and non-decreasing paths $P$-a.s., $K_0=0\ P\text{-a.s.}$, and 
 $\lVert K\rVert^2_{\mathbb{I}^2(P)} := E^P[K^2_T]<\infty.$ In particular we denote by $\mathbb{I}^2\Big(\big(\mathbb{X}^P\big)_{P\in\Pred_{[\underline{a},\overline{a}]}}\Big)$
 the family of tuples $(K^P)_{P\in \Pred_{[\underline{a},\overline{a}]}}$ s.t.\ $K^P\in \mathbb{I}^2(\mathbb{X}^P,P)$ for any $P\in \Pred_{[\underline{a},\overline{a}]}$
 and $\sup_{P\in \Pred_{[\underline{a},\overline{a}]}}\lVert K^P\rVert_{\mathbb{I}^2(P)}<\infty.$

The reader will note the analogy  with the spaces and norms defined in \citet{PossamaiTanZhou} (though with slightly different notations) for the specific family of collection of measures $\Pred(t,\omega):=\Pred_{[\underline{a},\overline{a}]}$ for 
any $(t,\omega)\in[0,T]\times\Omega$. A filtration which might in the sequel play the role of $\mathbb{X}$ in the definitions of spaces above is 
$\filt^{\Pred_{[\underline{a},\overline{a}]}}=\big(\F_t^{\Pred_{[\underline{a},\overline{a}]}}\big)_{t\in[0,T]},$ with 
$\F_t^{\Pred_{[\underline{a},\overline{a}]}}:=\bigcap_{P\in\Pred_{[\underline{a},\overline{a}]}}\F^P_t,\ t\in[0,T].$ 
\subsection{Second order backward stochastic differential equations}\label{sec:2BSDEs}
Following \citet{PossamaiTanZhou}, we summarize an existence and uniqueness result for Lipschitz 2BSDEs and state a 
representation of solutions that will be key to characterize the good-deal bounds and hedging strategies under combined drift and volatility uncertainties: See Proposition \ref{pro:ExistenceUniquenessSol2BSDEs}.
The generator for a 2BSDE is a function $F:[0,T]\times\Omega\times\R\times\R^n\times\mathbb{S}^{>0}_n\to \R$ for which we denote 
$\widehat{F}_t(\omega,y,z):=F_t(\omega_{\cdot\wedge t},y,z,\widehat{a}_t(\omega))$ and $\widehat{F}^0_t:=\widehat{F}_t(0,0)$. 
For wellposedness we will  require generators $F$ and terminal conditions $X$ that satisfy the following combination of Assumption 2.1.(i)-(ii) and Assumption 3.1.\ in \citet{PossamaiTanZhou}  (for $\kappa=p=2$).
\begin{assumption}\phantomsection{}\label{asp:Assumption1} 
 \begin{enumerate}
           \item [(i)]  $X$ is $\F_T$-measurable, 
           \item [(ii)] $F$ is jointly 
Borel measurable, and $\filt$-progressive in $(t,\omega)$ for each $(y,z,a)$,
           \item [(iii)] $\exists\, C>0$ such that for all $(t,\omega,a)\in[0,T]\times\Omega\times \mathbb{S}^{>0}_n,$ $y,y'\in\R,\ z,z'\in\R^n,$ 
           \begin{equation*}
            \big\lvert F_t(\omega,y,z,a) - F_t(\omega,y',z',a)\big\rvert\le C\Big(\lvert y-y'\rvert+\lvert z-z'\rvert\Big),
           \end{equation*}
	  \item [(iv)] $\widehat{F}^0$ satisfies $\Big(\int_0^T\lvert\widehat{F}^0_s\rvert^2ds\Big)^{1/2}\in \mathbb{L}^2(\filt_+)$.
 \end{enumerate}
\end{assumption}

\begin{remark}
Assumption \ref{asp:Assumption1}-(\text{iv}) is satisfied for $F$ such that Assumption \ref{asp:Assumption1}-(ii) holds and $\widehat{F}^0$ is bounded $\Pred_{[\underline{a},\overline{a}]}$-q.s..
It implies $\displaystyle \sup_{P\in\Pred_{[\underline{a},\overline{a}]}}E^P\Big[\int_0^T\lvert\widehat{F}^0_s\rvert^2ds\Big]<\infty.$ 
\end{remark}
A second-order BSDE is a stochastic integral equation of the type
\begin{equation}\label{eq:2BSDE}
 Y_t= X - \int_t^T\widehat{F}_s(Y_s,\widehat{a}^{1/2}_sZ_s)ds - \phantom{}^{\phantom{}^{(P)}}\hspace{-0.15cm}\int_t^TZ_s^{\text{tr}}dB_s + K^P_T-K^P_t, \ t\in[0,T] ,\ \Pred_{[\underline{a},\overline{a}]}\text{-q.s.}.
\end{equation}
In comparison to \citet{PossamaiTanZhou},  because the canonical process $B$ satisfies the martingale representation property simultaneously under all measures in $\Pred_{[\underline{a},\overline{a}]}$ (cf.\ Lemma \ref{lem:QSAggregPS}),
we do not have the orthogonal martingale components in the formulation of 2BSDEs as (\ref{eq:2BSDE}). The same formulation can be used in the more general framework with semimartingale laws for 
the canonical process, but working under a saturation property for the set of priors \citep[cf.][Def.5.1]{PossamaiTanZhou}.
\begin{definition}
 $(Y,Z,(K^P)_{P\in\Pred_{[\underline{a},\overline{a}]}})\in \mathbb{D}^2\big(\filt^{\Pred_{[\underline{a},\overline{a}]}}\big)\times \mathbb{H}^2\big(\filt^{\Pred_{[\underline{a},\overline{a}]}}\big)\times \mathbb{I}^2\big(\big(\filt^P\big)_{P\in\Pred_{[\underline{a},\overline{a}]}}\big)$ 
 is called solution (triple) to the 2BSDE (\ref{eq:2BSDE}) if it satisfies the required dynamics $\Pred_{[\underline{a},\overline{a}]}\text{-q.s.}$ and 
the family $\big\{K^P,\ P\in\Pred_{[\underline{a},\overline{a}]}\big\}$ satisfies the minimum condition 
\begin{equation}\label{eq:MinCond}
 K^P_t = \essinf^{\qquad\quad P}_{P'\in\Pred_{[\underline{a},\overline{a}]}(t,P,\filt_+)}E^{P'}_t[K^{P'}_T],\ t\in[0,T],\ P\text{-a.s.},\text{ for all } P\in \Pred_{[\underline{a},\overline{a}]}.
\end{equation}
If the family $\{K^P,\ P\in\Pred_{[\underline{a},\overline{a}]}\}$ can be 
aggregated 
\label{aggrnotion}
into a single process $K$, i.e.\ $K^P=K,\ P\text{-a.s.}$ for all $P\in\Pred_{[\underline{a},\overline{a}]}$, 
then $(Y,Z,K)$ is said to solve the 2BSDE.
\end{definition}
\begin{remark}\label{rem:OnAggregationOfK}	
	Note in the 2BSDE dynamics (\ref{eq:2BSDE}) the dependence of the stochastic integrals $\phantom{}^{(P)}\hspace{-0.1cm}\int_0^\cdot Z_s^{\text{tr}}dB_s$ on the probability measures 
	$P\in \Pred_{[\underline{a},\overline{a}]}$. Indeed since the measures in $\Pred_{[\underline{a},\overline{a}]}$ may be non-dominated, it might be that these integrals do not aggregate 
	\citep[see][for more on aggregation]{SonerTouziZhang-Aggregation}. However under additional set theoretical assumptions (for instance continuum hypothesis 
	plus the axiom of choice) 
	a method by \citet{Nutz} can be used to construct the stochastic integral $\int_0^\cdot Z_s^{\text{tr}}dB_s$ pathwise for any predictable process $Z$.
As a by-product, 
the family $\{K^P,\, P\in\Pred_{[\underline{a},\overline{a}]}\}$  for a 2BSDE solution $(Y,Z, (K^P)_{P\in\Pred_{[\underline{a},\overline{a}]}})$ would automatically aggregate into a 
single process $K:=Y_0-Y+\int_0^\cdot \widehat{F}_s(Y_s,\widehat{a}^{1/2}_sZ_s)ds+ \int_0^\cdot Z_s^{\text{tr}}dB_s$ yielding a 2BSDE solution $(Y,Z,K)$. 
Reciprocally for a solution $(Y,Z,K)$, the family $\big\lbrace\phantom{}^{(P)}\hspace{-0.1cm}\int_0^\cdot Z_s^{\text{tr}}dB_s,\ P\in \Pred_{[\underline{a},\overline{a}]}\big\rbrace$ 
also aggregates. 
\end{remark}
The pair $(F,X)$ will be called the parameters of the 2BSDE (\ref{eq:2BSDE}). We will refer to $Y$ as the value process and $Z$ as the control process. 
The following proposition provides the wellposedness result of interest in this paper, as well as a representation of the value process in terms of 
solutions to standard BSDEs.
The proof relies on an application of \citep[][Thm.4.1, Thm.4.2]{PossamaiTanZhou} for the specific family of measures
$\Pred_{[\underline{a},\overline{a}]}$, the details being deferred to the appendix. 
We employ the classical sign convention for standard BSDE generators according to which the generator of the BSDE (\ref{eq:BSDEStandard}) below is $-\widehat{F}$  
(i.e.\ with a minus sign). The convention for associated 2BSDE generator however remains as already introduced, e.g.\ the 2BSDE (\ref{eq:2BSDE}) 
has the generator $F$. 
\begin{proposition}\label{pro:ExistenceUniquenessSol2BSDEs}
 If $X\in \mathbb{L}^2(\filt_+)$ and Assumption~\ref{asp:Assumption1}, then the 2BSDE (\ref{eq:2BSDE}) has a
 \\
1.\  unique solution $(Y,Z,(K^P)_{P\in\Pred_{[\underline{a},\overline{a}]}}) \in  \mathbb{D}^2\big(\filt^{\Pred_{[\underline{a},\overline{a}]}}\big)\times \mathbb{H}^2\big(\filt^{\Pred_{[\underline{a},\overline{a}]}}\big)\times \mathbb{I}^2\big(\big(\filt^P\big)_{P\in\Pred_{[\underline{a},\overline{a}]}}\big)$
\\
2.\ and for any $P\in\Pred_{[\underline{a},\overline{a}]}$, the $Y$-part of the solution 
 has the representation 
 \begin{equation}\label{eq:RepSol2BSDEs}
  Y_s = \esssup^{\qquad\quad P}_{P'\in\Pred_{[\underline{a},\overline{a}]}(s,P,\filt_+)}\mathcal{Y}^{P'}_s(t,Y_t),\ s\le t\le T,\ P\text{-a.s.},
 \end{equation}
where $(\mathcal{Y}^{P}(\tau,H),\mathcal{Z}^{P}(\tau,H))$ denotes the unique solution to the standard BSDE 
\begin{equation}\label{eq:BSDEStandard}
  \mathcal{Y}^{P}_t= H - \int_t^\tau\widehat{F}_s(\mathcal{Y}^{P}_s,\widehat{a}^{1/2}_s\mathcal{Z}^{P}_s)ds - \phantom{}^{\phantom{}^{(P)}}\hspace{-0.15cm}\hspace{-0.1cm}\int_t^\tau(\mathcal{Z}^{P}_s)^{\text{tr}}dB_s, \ t\le \tau,\ P\text{-a.s.},
\end{equation}
with parameters $(-\widehat{F},H)$, for an $\filt^P$-stopping time $\tau$ and $H\in L^2(\F^P_{\tau},P)$. 
\end{proposition}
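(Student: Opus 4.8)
The plan is to verify that the present setting is a special case of the general wellposedness theory of \citet{PossamaiTanZhou}, applied to the \emph{constant} family of priors $\Pred(t,\omega)\equiv\Pred_{[\underline{a},\overline{a}]}$, and then to invoke their Theorems 4.1 and 4.2 essentially verbatim. Since part 1 and part 2 of the statement are exactly the existence/uniqueness and the representation asserted by those two theorems, the whole task reduces to matching the hypotheses.

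First, I would check that $\Pred_{[\underline{a},\overline{a}]}$ qualifies as an admissible family in the sense required by \citet{PossamaiTanZhou} for a dynamic programming principle to hold. Because $\Pred_{[\underline{a},\overline{a}]}\subset\overline{\Pred}_S$, every $P$ is a semimartingale law whose volatility density $\widehat{a}$ is confined between the \emph{fixed} matrices $\underline{a}$ and $\overline{a}$. The structural stability properties they require---measurability of the set-valued map $(t,\omega)\mapsto\Pred_{[\underline{a},\overline{a}]}$, stability under regular conditioning, and stability under concatenation (pasting) across stopping times---become transparent here precisely because the bounds are constant in $(t,\omega)$: conditioning a measure whose density lies in $[\underline{a},\overline{a}]$ again yields such a measure, and pasting two such measures again produces a density in $[\underline{a},\overline{a}]$.

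Next, I would translate Assumption~\ref{asp:Assumption1} into their hypotheses. Their Assumption 2.1(i)--(ii) corresponds to~(ii) here (joint Borel-measurability and $\filt$-progressivity of $F$), the uniform Lipschitz condition~(iii) is exactly their requirement with $\kappa=p=2$, and the integrability of $\widehat{F}^0$ in~(iv) together with $X\in\mathbb{L}^2(\filt_+)$ supplies their Assumption 3.1. One subtlety is matching the function spaces: the norms of Section~\ref{sec:Preliminaries}, in particular $\mathbb{L}^2$ and $\mathbb{I}^2$, have been adapted to the possible mutual singularity of the priors, and I would confirm that they coincide with the \citet{PossamaiTanZhou} spaces specialized to the constant family $\Pred(t,\omega)\equiv\Pred_{[\underline{a},\overline{a}]}$. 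A second, more substantive point is the absence of orthogonal martingale components in~(\ref{eq:2BSDE}): because Lemma~\ref{lem:QSAggregPS} provides the martingale representation property of $B$ under every $P\in\Pred_{[\underline{a},\overline{a}]}$, the orthogonal part $M^P$ appearing in the general formulation vanishes $P$-a.s., so their result yields exactly the reduced dynamics written in~(\ref{eq:2BSDE}) without invoking the saturation property.

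With these identifications in place, Theorem 4.1 of \citet{PossamaiTanZhou} delivers part~1 (existence and uniqueness of the triple in the stated spaces, with the minimum condition~(\ref{eq:MinCond})), and their Theorem 4.2 delivers part~2 (the representation~(\ref{eq:RepSol2BSDEs}) of $Y$ as an essential supremum, over conditionally-pinned priors $P'\in\Pred_{[\underline{a},\overline{a}]}(s,P,\filt_+)$, of solutions to the standard BSDEs~(\ref{eq:BSDEStandard}) with generator $-\widehat{F}$). I expect the main obstacle to be the careful verification of the consistency and stability of the prior family $\Pred_{[\underline{a},\overline{a}]}$ and the exact correspondence of the modified function spaces, since these are precisely the points where the non-dominated, mutually singular nature of the priors must be reconciled with the quasi-sure formulation; once they are settled, the two cited theorems apply directly.
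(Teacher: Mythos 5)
Your proposal is correct and follows essentially the same route as the paper: specialize the \citet{PossamaiTanZhou} framework to the constant family $\Pred(t,\omega)\equiv\Pred_{[\underline{a},\overline{a}]}$, verify their Assumptions 2.1 and 3.1 (where the paper outsources the measurability and conditioning/pasting stability checks to \citet{NeufeldNutz}, Example 2.1 and Corollary 2.6, rather than arguing them by hand as you sketch), and apply their Theorems 4.1 and 4.2, with Lemma \ref{lem:QSAggregPS} eliminating the orthogonal martingale component. The one detail you leave implicit, which the paper makes explicit, is that the same lemma renders each $\filt^P$ right-continuous, so that $\F^P_{t+}=\F^P_t$, the minimum condition of \citet{PossamaiTanZhou} (formulated for the $\filt_+$-augmented filtrations) is literally equivalent to (\ref{eq:MinCond}), and $\filt^{\Pred_{[\underline{a},\overline{a}]}}_+=\filt^{\Pred_{[\underline{a},\overline{a}]}}$, which is needed to land the solution in the stated spaces.
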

\begin{remark}
Note in comparison to \citet{SonerTouziZhang-Wellposedness} that uniform continuity (in $(\omega)$) and convexity (in $a$) of the generator function $F$ are not required 
for the more general wellposedness results of \citet{PossamaiTanZhou} summarized here in Proposition \ref{pro:ExistenceUniquenessSol2BSDEs}.
The latter is what we need in Section \ref{sec:GoodDealValandHedgingwithVolUncer}  for applications to valuation and hedging under combined drift and volatility uncertainty, as the results of \citet{SonerTouziZhang-Wellposedness} 
may not be applicable in that situation; cf.\ Part 1 of Remark \ref{rem:DynProgPrinciple} for a detailed justification. Note that the generalized theory also works for terminal conditions 
which are merely Borel measurable and do not need to be in the closure of uniformly continuous functions as required in \citet{SonerTouziZhang-Wellposedness}. 
\end{remark}

\section{Financial market model and good-deal constraints}\label{sec:NGDRestrictionandFinMarketVolUncer}
We apply the 2BSDE theory of Section \ref{sec:2BSDEs} to good-deal valuation and hedging of contingent claims in incomplete financial markets under drift and volatility uncertainty. 
In comparison to standard BSDEs which are used in \citet{BechererKentiaTonleu} in the presence of solely drift uncertainty, 2BSDEs are an appropriate tool for describing worst-case 
valuations in the presence of volatility uncertainty. 
As in \citet{CochraneRequejo,BjorkSlinko}, we consider good-deal constraints imposed as bounds on the Sharpe ratios
\cite[equivalently bounds on the optimal growth rates as in][]{Becherer-Good-Deals}
in the financial market extended by further (derivative) asset price processes. 
But first we specify the model for the market with uncertainty about the  volatility and the market price of risk.
\subsection{Financial market with combined uncertainty about drift and volatility}\label{subsec:FinanMarkModelVolUncer}
The financial market consists of $d$ tradeable stocks ($d\le n$)
with discounted price processes $(S^i)_{i=1}^d=S$ modeled by 
\begin{equation*}
 dS_t = \mathrm{diag}(S_t)(b_tdt +\sigma_tdB_t),\ t\in[0,T] ,\ \Pred_{[\underline{a},\overline{a}]}\text{-q.s.},\quad S_0\in (0,\infty)^d,
\end{equation*}
where $b$ (resp.\ $\sigma$) is a $\R^d$-valued (resp.\ $\R^{d\times n}$-valued) $\filt$-predictable uniformly bounded process, with $\sigma$ being such that 
\begin{equation}\label{eq:AggregCond}
\text{the family } \Big\lbrace\phantom{}^{\phantom{}^{(P)}}\hspace{-0.15cm}\hspace{-0.1cm}\int_0^\cdot\sigma_sdB_s,\, P\in\Pred_{[\underline{a},\overline{a}]}\Big\rbrace \text{ aggregates into a single process } \int_0^\cdot\sigma_sdB_s.
\end{equation}
We assume in addition that $\sigma\sigma^{\text{tr}}$ is uniformly elliptic in the sense that
\begin{equation}\label{eq:SigmaUnifEll}
 \text{there exists } \Upsilon,\Lambda\in(0,\infty)\ \text{such that}\quad \Upsilon\,\text{I}_{d\times d}\le \sigma\sigma^{\text{tr}}\le \Lambda\,\text{I}_{d\times d},\quad \Pred_{[\underline{a},\overline{a}]}\otimes dt\text{-q.e.},
\end{equation}
where $\text{I}_{d\times d}$ denotes the $d\times d$ identity matrix. In particular $\sigma\widehat{a}^{1/2}$ is $\Pred_{[\underline{a},\overline{a}]}\otimes dt$-q.e.\ of maximal rank $d\le n$, since $\sigma\widehat{a}\sigma^{\text{tr}}$
is uniformly elliptic and bounded (by (\ref{eq:DefPH}),(\ref{eq:SigmaUnifEll})). 
\begin{remark}\label{rem:AboutModel}
  For economic interpretation, one should clearly have aggregation of $S$, which should be quasi-surely defined as a single process. The latter is ensured here by the aggregation condition 
  (\ref{eq:AggregCond}) which might seem restrictive at first sight, but is ensured for instance if $\sigma$ is c\`adl\`ag in which case $\int_0^\cdot\sigma_sdB_s$ 
  can even be constructed pathwise as in \citet{Karandikar}.
\end{remark}
The market model captures uncertainty about the volatility of the stock prices $S$ which is $\sigma\widehat{a}^{1/2}$ under each measure $P\in\Pred_{[\underline{a},\overline{a}]}$.
Since $dB_t=\widehat{a}^{1/2}_tdW^P_t\ P$-a.s.\ for a $P$-Brownian motion $W^P$, then the dynamics of $(S_t)_{t\in[0,T]}$ under $P\in \Pred_{[\underline{a},\overline{a}]}$ is 
\begin{align*}
 dS_t = \mathrm{diag}(S_t)\sigma_t\widehat{a}^{1/2}_t(\widehat{\xi}_tdt +dW^P_t),\ 
\text{with } 
\widehat{\xi}:=\widehat{a}^{1/2}\sigma^{\text{tr}}(\sigma\widehat{a}\sigma^{\text{tr}})^{-1}b 
\end{align*}
denoting the market price of risk in each reference model $P\in\Pred_{[\underline{a},\overline{a}]}$. Note that $\widehat{\xi}$ is $\R^n$-valued, $\filt$-predictable and uniformly bounded by 
a constant depending only on $\underline{a},\overline{a},\Lambda,\Upsilon$, and the uniform bound on $b$. 
The financial market described is thus typically incomplete for any reference measure $P\in\Pred_{[\underline{a},\overline{a}]}$ for the volatility $\sigma\widehat{a}^{1/2}$ if $d<n$.
In practice, the bounds $\underline{a},\bar{a}$ and the uniform bounds on $\sigma\sigma^{\text{tr}}$ can be viewed as describing some confidence region for future volatility values, 
which might be set e.g.\ according to expert opinion about the range of historical or future (implied) volatility scenarios. 

To incorporate also uncertainty about the drift, we admit for market prices of risk $\widehat{\xi}^\theta$
being from a radial set of which $\widehat{\xi}$ is the center, that is we consider 
\begin{equation}\label{eq:SetCandMPR}
\Big\lbrace \widehat{\xi}^{\,\theta}:=\widehat{\xi}+\widehat{\Pi}(\theta)\ \Big \lvert\ \theta\ \filt\text{-predictable with }\lvert\theta_t(\omega)\rvert\le \delta_t(\omega),\ \text{} (t,\omega)\in[0,T]\times\Omega\Big\rbrace,
\end{equation}
where 
$\widehat{\Pi}_{(t,\omega)}(z):=(\sigma_t(\omega)\widehat{a}_t^{1/2}(\omega))^{\text{tr}}(\sigma_t(\omega)\widehat{a}_t(\omega)\sigma_t^{\text{tr}}(\omega))^{-1}(\sigma_t(\omega)\widehat{a}_t^{1/2}(\omega))z$ denotes the orthogonal 
projection of $z\in\R^n$ onto $\mathrm{Im}\,(\widehat{a}^{1/2}_t(\omega)\sigma^{\text{tr}}_t(\omega)),\ t\in[0,T]$, and $\delta$ is a fixed non-negative bounded $\filt$-predictable 
process. 
The set (\ref{eq:SetCandMPR}) of 
market prices of risk corresponds to an ellipsoidal confidence region 
of drift uncertainty on risky asset prices $S$, such that ambiguous drifts could attain values in ellipsoids 
\begin{equation*}
\Big\lbrace x\in\R^d:\ \big(x-b_t(\omega)\big)^{\text{tr}}\big(\sigma_t(\omega)\widehat{a}_t(\omega)\sigma_t^{\text{tr}}(\omega)\big)^{-1}\big(x-b_t(\omega)\big)\le \delta^2_t(\omega)\Big\rbrace,
\end{equation*}
at $(t,\omega)\in[0,T]\times\Omega$.
Ellipsoidal specifications of uncertainty are common in the literature, and appear naturally in the context of uncertainty about the drifts of tradeable asset prices  in multivariate Gaussian settings, 
cf.\ e.g.\ \citet[][]{Garlappi,BiaginiPinar}. 

We denote by $\Theta:[0,T]\times\Omega\rightsquigarrow \R^n$ the correspondence (set-valued mapping) 
\begin{equation*}
\Theta_t(\omega):=\Big\lbrace x\in\R^n:\ \lvert x\rvert\le \delta_t(\omega)\Big\rbrace,\quad \text{for } (t,\omega)\in[0,T]\times\Omega,
\end{equation*}
with radial values. The notation ``$\rightsquigarrow$'' emphasizes that $\Theta$ is set-valued. 
$\filt$-predictability of $\delta$ implies that $\Theta$ is $\filt$-predictable 
in the sense of \citet{Rockafellar}, i.e.\ for each closed set $F\subset \R^n$ the set $\Theta^{-1}(F):=\{(t,\omega)\in[0,T]\times\Omega: \Theta_t(\omega)\cap F\neq \emptyset \}$ is $\filt$-predictable.
Hence by measurable selection arguments \citep[e.g.][Cor.1.Q]{Rockafellar}, $\Theta$ admits $\filt$-predictable selections, i.e.\ $\filt$-predictable functions $\theta$ satisfying $\theta_t(\omega)\in \Theta_t(\omega)$ for all $[0,T]\times\Omega.$
For arbitrary correspondence $\Gamma:[0,T]\times\Omega\rightsquigarrow \R^n$, we will shortly write $\lambda\in \Gamma$ to mean that the function $\lambda$ is a $\filt$-predictable selection of $\Gamma$.
Moreover we will say that $\lambda$ is selection of $\Gamma$ (not necessarily measurable) if $\lambda_t(\omega)\in\Gamma_t(\omega)$ for all $(t,\omega)\in[0,T]\times\Omega$.

Combined Knightian uncertainty (i.e.\ ambiguity) about drift and volatility scenarios is then captured by a (typically non-dominated) set 
\begin{equation*}
\mathcal{R} := \Big\lbrace Q: Q\sim P,\ {dQ} = \phantom{}^{(P)}\hspace{-0.05cm}\mathcal{E}\big( \theta\cdot W^P\big){dP}\text{ for some }\theta\in\Theta\text{ and } P\in\Pred_{[\underline{a},\overline{a}]}\Big\rbrace 
\end{equation*}
of candidate reference probability measures (priors), where $\phantom{}^{(P)}\hspace{-0.05cm}\mathcal{E}(M):= \exp\big(M -M_0- \frac{1}{2}\langle M\rangle^P\big)$ denotes the stochastic exponential of a local martingale $M$ under $P$.
For any $Q\in\mathcal{R}$, there are $P^Q\in \Pred_{[\underline{a},\overline{a}]}$, $\theta^Q\in\Theta$ such that the canonical process $B$ is a $Q$-semimartingale with 
decomposition $B= \int_0^\cdot\widehat{a}^{1/2}_s\theta^Q_sds+\int_0^\cdot\widehat{a}^{1/2}_sdW^Q_s$,
where $W^Q = W^P-\int_0^\cdot\theta^Q_sds$ is a $Q$-Brownian motion. We will simply denote by $Q^{P,\theta}$ a reference measure $Q\in\mathcal{R}$ associated to $P\in\Pred_{[\underline{a},\overline{a}]}$
and $\theta\in\Theta$, and we take note that the specification range for $P$ and $\theta$, respectively, 
 account for uncertainty about volatilities and drifts, respectively.
$S$ evolves under $Q^{P,\theta}\in\mathcal{R}$,  $P$-a.s.,  as
\begin{equation*}
	 dS_t = \mathrm{diag}(S_t)\sigma_t\widehat{a}^{1/2}_t(\widehat{\xi}^{\,\theta}_tdt +dW^{P,\theta}_t),
\end{equation*}
where $W^{P,\theta}:=W^P-\int_0^\cdot\theta_sds$ is a $Q^{P,\theta}$-Brownian motion. Hence $\widehat{\xi}^{\,\theta}$ is the market price of risk in the model $Q^{P,\theta}$, with volatility $\sigma\widehat{a}^{1/2}$
and $\widehat{a}$ satisfying $\underline{a}\le \widehat{a}\le \overline{a},\ P\otimes dt\text{-a.e.}$.
Let $\mathcal{M}^e(Q) := \mathcal{M}^e(S,Q)$ denote the set of equivalent local martingale measures for $S$ in a model $Q$. 
Then standard arguments \citep[analogously to ][Prop.4.1]{BechererKentiaTonleu} easily lead to the following 
\begin{lemma}\label{lem:EMMP}
 For any $P\in\Pred_{[\underline{a},\overline{a}]}$ and $\theta\in \Theta$, the set $\mathcal{M}^e(Q^{P,\theta})$ is equal to 
 \begin{align*}
 	 &
 	 \Big\lbrace Q\sim Q^{P,\theta}\ \Big\lvert\Big.\ dQ = \phantom{}^{(P)}\hspace{-0.05cm}\mathcal{E}(\lambda\cdot W^{P,\theta})\,dQ^{P,\theta},\ \text{}\lambda = -\widehat{\xi}^{\,\theta}+\eta,\ \eta\in\mathrm{Ker}\,(\sigma\widehat{a}^{1/2})\Big\rbrace \\
 &= \Big\lbrace Q\sim P \ \Big\lvert\Big.\  dQ = \phantom{}^{(P)}\hspace{-0.05cm}\mathcal{E}(\lambda\cdot W^{P})\,dP,\ \text{} \lambda = -\widehat{\xi}+\eta,\ \eta\in\mathrm{Ker}\,(\sigma\widehat{a}^{1/2})\Big\rbrace 
				    =\mathcal{M}^e(P).
 \end{align*}
\end{lemma}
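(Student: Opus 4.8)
The plan is to prove the two asserted equalities separately, in both cases by a Girsanov computation. Since $P\in\Pred_{[\underline{a},\overline{a}]}$ and $\theta\in\Theta$ are fixed and every candidate measure $Q$ is equivalent to $P$ (hence to $Q^{P,\theta}$), the whole argument takes place in the dominated world generated by the single prior $P$, so no quasi-sure subtleties intervene.

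For the first equality I would fix $Q\sim Q^{P,\theta}$ and write its density process in the form ${}^{(P)}\mathcal{E}(L)$ for a $Q^{P,\theta}$-local martingale $L$. By Lemma \ref{lem:QSAggregPS}, $B$ has the martingale representation property w.r.t.\ $(\filt^P,P)$, hence so do $W^P$ and, under the equivalent measure $Q^{P,\theta}$, the $Q^{P,\theta}$-Brownian motion $W^{P,\theta}=W^P-\int_0^\cdot\theta_sds$; therefore $L=\lambda\cdot W^{P,\theta}$ for some $\filt$-predictable $\R^n$-valued $\lambda$. By Girsanov, $W^{P,\theta}-\int_0^\cdot\lambda_sds$ is a $Q$-Brownian motion, so under $Q$ the price $S$ acquires drift $\mathrm{diag}(S)\sigma\widehat{a}^{1/2}(\widehat{\xi}^{\,\theta}+\lambda)$. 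Thus $Q\in\mathcal{M}^e(Q^{P,\theta})$ if and only if this drift vanishes, i.e.\ $\sigma\widehat{a}^{1/2}(\widehat{\xi}^{\,\theta}+\lambda)=0$, which is exactly $\widehat{\xi}^{\,\theta}+\lambda\in\mathrm{Ker}(\sigma\widehat{a}^{1/2})$, i.e.\ $\lambda=-\widehat{\xi}^{\,\theta}+\eta$ with $\eta\in\mathrm{Ker}(\sigma\widehat{a}^{1/2})$. This yields the first displayed set.

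For the second equality I would change the reference Brownian motion from $W^{P,\theta}$ to $W^P$. Using $dQ^{P,\theta}={}^{(P)}\mathcal{E}(\theta\cdot W^P)\,dP$ together with $dW^{P,\theta}=dW^P-\theta\,dt$, multiplying the two density processes and completing the square via $-\lambda\cdot\theta-\tfrac12\lvert\lambda\rvert^2-\tfrac12\lvert\theta\rvert^2=-\tfrac12\lvert\lambda+\theta\rvert^2$ gives
\[\frac{dQ}{dP}\Big|_t={}^{(P)}\mathcal{E}\big((\lambda+\theta)\cdot W^P\big)_t,\]
so the Girsanov kernel of $Q$ relative to $W^P$ is $\lambda+\theta$. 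Now invoke the linear-algebra identity: since $\widehat{\Pi}$ is the orthogonal projection onto $\mathrm{Im}(\widehat{a}^{1/2}\sigma^{\text{tr}})=\mathrm{Ker}(\sigma\widehat{a}^{1/2})^\perp$, the complementary projection $z\mapsto z-\widehat{\Pi}(z)$ maps into $\mathrm{Ker}(\sigma\widehat{a}^{1/2})$, so in particular $\theta-\widehat{\Pi}(\theta)\in\mathrm{Ker}(\sigma\widehat{a}^{1/2})$, while $\widehat{\xi}^{\,\theta}=\widehat{\xi}+\widehat{\Pi}(\theta)$. Hence for $\lambda=-\widehat{\xi}^{\,\theta}+\eta$ one obtains $\lambda+\theta=-\widehat{\xi}+\big(\eta+(\theta-\widehat{\Pi}(\theta))\big)=-\widehat{\xi}+\eta'$ with $\eta'\in\mathrm{Ker}(\sigma\widehat{a}^{1/2})$, proving the inclusion ``$\subseteq$''; the reverse inclusion follows by solving $\eta=\eta'-(\theta-\widehat{\Pi}(\theta))\in\mathrm{Ker}(\sigma\widehat{a}^{1/2})$ for given $\eta'$. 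The final identification with $\mathcal{M}^e(P)$ is just the $\theta=0$ instance of the first equality applied to $P$ itself.

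The routine parts are the Girsanov drift computation and the completing-the-square algebra. The step deserving the most care is the interplay between the two reference measures: one must account correctly for the cross-variation term in the product of the stochastic exponentials so that the kernel relative to $W^P$ is precisely $\lambda+\theta$, and then apply the projection identity $\mathrm{Im}(\widehat{a}^{1/2}\sigma^{\text{tr}})=\mathrm{Ker}(\sigma\widehat{a}^{1/2})^\perp$, which is exactly what makes the drift-ambiguity direction $\theta$ collapse into the martingale-measure kernel. One should also record that membership in each set tacitly requires the relevant stochastic exponential to be a true martingale; this is guaranteed by the uniform boundedness of $\widehat{\xi}$ and of $\theta$ (via $\delta$) noted earlier, while admissibility of $\eta$ is subsumed in the requirement $Q\sim P$.
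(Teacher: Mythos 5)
Your proof is correct and is essentially the ``standard argument'' the paper itself invokes without writing out (it merely cites the analogue of Prop.~4.1 in \citet{BechererKentiaTonleu}): martingale representation from Lemma~\ref{lem:QSAggregPS} plus Girsanov yields the kernel characterization $\lambda=-\widehat{\xi}^{\,\theta}+\eta$, the completing-the-square (Yor) computation recombines kernels relative to $W^P$, and the projection identity $\theta-\widehat{\Pi}(\theta)\in\mathrm{Ker}\,(\sigma\widehat{a}^{1/2})$ together with $\widehat{\xi}^{\,\theta}=\widehat{\xi}+\widehat{\Pi}(\theta)$ absorbs the drift direction, with $\mathcal{M}^e(P)$ recovered as the $\theta=0$ case since $0\in\Theta$. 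Your closing caveat is also handled correctly: boundedness of $\widehat{\xi}$ and $\theta$ covers the exponentials actually needed, while for unbounded $\eta$ the true-martingale property is not claimed but is encoded in the requirement $Q\sim P$ appearing in the set definitions.
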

\begin{remark}\phantomsection{}\label{rem:AboutZeroDrift}
Note by Lemma \ref{lem:EMMP} that for each $P\in\Pred_{[\underline{a},\overline{a}]}$ and $\theta\in\Theta$, the minimal martingale measure $\widehat{Q}^{P,\theta}$ given by 
$d\widehat{Q}^{P,\theta} = \phantom{}^{(P)}\hspace{-0.05cm}\mathcal{E}(-\widehat{\xi}^{\,\theta}\cdot W^{P,\theta})\,dQ^{P,\theta}$ is in $\mathcal{M}^e(P)$ since $\widehat{\xi}$ and $\delta$ are uniformly bounded. 
This implies that 
$\mathcal{M}^e(Q)\neq\emptyset$ for any $Q\in\mathcal{R}$. Thus, the market satisfies the no-free lunch with vanishing risk 
 condition \citep[][]{DelbaenSchachermayer} under any (uncertain) prior $Q\in\mathcal{R}$. We will interpret this as a notion for no-arbitrage under drift and volatility 
 uncertainty \citep[as in e.g.][]{BiaginiBouchardKardarasNutz}.
\end{remark}
We parametrize trading strategies $\varphi = (\varphi^i)_{i=1}^d$ in terms of amount $\varphi^i$ of wealth invested in the stock
with price process $S^i$, such that $\varphi$ is a $\filt^{\Pred_{[\underline{a},\overline{a}]}}$-predictable process satisfying suitable integrability properties that will be made precise. 
In this respect, the wealth process $V^\varphi$ associated to a 
trading strategy $\varphi$ with initial capital $V_0$ (so that $(V_0,\varphi)$ quasi-surely satisfies the self-financing requirement) would have dynamics, on $[0,T]$, 
\begin{align*}
 V^\varphi_t &=V_0 +\int_0^t\varphi^{\text{tr}}_s(b_sds+\sigma_sdB_s) = V_0 +\int_0^t\varphi^{\text{tr}}_s\sigma_s(\widehat{a}^{1/2}_s\widehat{\xi}_sds+dB_s)   ,\ \Pred_{[\underline{a},\overline{a}]}\text{-q.s.}.
 \end{align*}
Re-parameterizing trading strategies in terms of integrands $\phi:=\sigma^{\text{tr}}\varphi \in\mathrm{Im}\,\sigma^{\text{tr}}$ w.r.t.\  $B+\int_0^\cdot\widehat{a}^{1/2}_s\widehat{\xi}_sds$ yields as dynamics for the wealth process $V^\phi:=V^\varphi$ 

\begin{equation*}
 V^\phi_t = V_0 + \int_0^t\phi^{\text{tr}}_s(\widehat{a}^{1/2}_s\widehat{\xi}_sds+dB_s) =V_0 + \phantom{}^{\phantom{}^{(P)}}\hspace{-0.15cm}\int_0^t\phi^{\text{tr}}_s\widehat{a}_s^{\frac{1}{2}}(\widehat{\xi}_sds+dW^P_s),\ t\in[0,T] ,\ P\text{-a.s.}
\end{equation*}
for any $P\in\Pred_{[\underline{a},\overline{a}]}$.
The set $\Phi(P)$ of permitted strategies in the model $P$ (shortly $P$-permitted), $P\in\Pred_{[\underline{a},\overline{a}]}$, is defined as 
\begin{equation*}
 \Phi(P):=\big\lbrace \phi \in \mathbb{H}^2(\filt^P,P): \phi_t(\omega)\in\mathrm{Im}\ \sigma^{\text{tr}}_t(\omega),\ \text{for all } (t,\omega)\in[0,T]\times\Omega\big\rbrace.
\end{equation*}
Under uncertainty and i.p.\ for non-dominated priors $P\in\Pred_{[\underline{a},\overline{a}]}$, we want the wealth process for a trading strategy to be defined (q.e.) as single process, and not to vary with the prior. 
This requires as additional  condition on  strategies $\phi$ that the family $\big\lbrace\phantom{}^{(P)}\hspace{-0.1cm}\int_0^\cdot\phi^{\text{tr}}_sdB_s,\ P\in \Pred_{[\underline{a},\overline{a}]}\big\rbrace$ of 
``profit\&loss''-processes aggregates  (cf.\ p.\pageref{aggrnotion}) into a single process, denoted $\int_0^\cdot\phi^{\text{tr}}_sdB_s$. To this end, we make the
\begin{definition}\label{eq:DefPhi}
	The set $\Phi$ of permitted trading strategies under drift and volatility uncertainty consists of all processes $\phi$ in $\mathbb{H}^2\big(\filt^{\Pred_{[\underline{a},\overline{a}]}}\big)$
	with $\phi_t(\omega)\in\mathrm{Im}\,\sigma^{\text{tr}}_t(\omega)$ for all $(t,\omega)\in[0,T]\times\Omega$ and such that 
the family $\big\lbrace \phantom{}^{(P)}\hspace{-0.1cm}\int_0^\cdot\phi^{\text{tr}}_sdB_s,\, P\in\Pred_{[\underline{a},\overline{a}]}\big\rbrace$ aggregates into a 
	single process $\int_0^\cdot\phi^{\text{tr}}_sdB_s$. 
\end{definition}
The trading strategies in $\Phi$ will be termed as $\Pred_{[\underline{a},\overline{a}]}$-permitted (in short: permitted) strategies. 
By definition, one has $\Phi\subseteq \Phi(P)$ for all $P\in\Pred_{[\underline{a},\overline{a}]}$. 
Hence $V^\phi$ is a martingale under some measure equivalent to $P$ for any $\phi\in\Phi$ and $P\in\Pred_{[\underline{a},\overline{a}]}$. This excludes arbitrage strategies 
from $\Phi$ for any scenario $\sigma \widehat{a}^{1/2}$ of the volatility. 
\begin{remark}\label{rem:OnAggPhi}
	Note that by a result of \citet{Karandikar} the aggregation property of Definition \ref{eq:DefPhi} would be satisfied for $\phi$ being c\`adl\`ag and $\filt$-adapted (like in  Remark~\ref{rem:AboutModel} 
 for the volatility $\sigma$).
	Alternatively, aggregation would automatically hold under additional (non-standard) axioms for set-theory as in \citet{Nutz}; cf.\  Remark \ref{rem:OnAggregationOfK}.
\end{remark}
\subsection{No-good-deal restriction and implications to dynamic valuation and hedging}\label{subsect:NGDRestriction}
In the absence of uncertainty, we consider a (classical) no-good-deal restriction defined as a bound on the instantaneous Sharpe ratios, for any extension of the market by additional 
derivatives' prices computed by the no-good-deal pricing measures, following  \citet{BjorkSlinko,CochraneRequejo}.  
In a setup without jump, such a constraint is equivalent to a bound on the optimal expected growth rate of returns in any market extension, as in e.g.\ \citet{Becherer-Good-Deals}. 
It is known that such is ensured by imposing a bound on the norm of the Girsanov kernels from the risk-neutral pricing measures. 
As slightly more detailed elaboration on the features of this valuation approach is given in Remark \ref{refsforngdrestrictions}. 
Under solely drift uncertainty, robust good-deal hedging has been studied by \citet{BechererKentiaTonleu}.
To extend the theory by including moreover ambiguity on volatility, we impose as no-good-deal restriction under combined  drift and volatility 
uncertainty now the same Sharpe ratio bound but require it under each model $Q\in\mathcal{R}$ separately. By doing so, we obtain for each $Q\in\mathcal{R}$ a set of 
no-good-deal measures $\mathcal{Q}^{\text{ngd}}(Q)\subseteq \mathcal{M}^e(Q)$. Then following a worst-case approach to good-deal valuation under uncertainty 
we define the robust good-deal bound under drift and volatility uncertainty as the supremum over all no-good-deal prices for  all reference priors $Q\in\mathcal{R}$.

To explain the idea for the good-deal approach, we will next do so at first without addressing model uncertainty. 
To this end, let us assume just for the remaining of this subsection that we neither have ambiguity about drifts nor about volatility and let us consider,  just as a starting point, one arbitrary probability 
reference measure   $Q^{P,\theta}$ from $ \mathcal{R}$ to be the objective real world measure for the standard (non-robust) good deal problem of valuation and hedging. This means that we take one specification $P$ and $\theta$ 
for volatility and drift to be given. Having explained the problem without ambiguity, this will set the stage to explain the robust problem with model uncertainty thereafter in the main Section~\ref{sec:GoodDealValandHedgingwithVolUncer}.

\subsubsection{The no-good-deal problem under a given model $Q^{P,\theta}$  without uncertainty}

Let $h$ be a fixed non-negative bounded $\filt$-predictable process. 
For a given probability measure $Q^{P,\theta}$ in $ \mathcal{R}$ (with $P\in\Pred_{[\underline{a},\overline{a}]} $ and $\theta\in\Theta$), we consider the set $\mathcal{Q}^{\text{ngd}}(Q^{P,\theta})$ of no-good-deal measures in the model 
$Q^{P,\theta}$ as the subset of $\mathcal{M}^e(Q^{P,\theta})$ consisting of all equivalent local martingale measures 
$Q$, 
whose Girsanov kernels $\lambda$ w.r.t.\  the $Q^{P,\theta}$-Brownian motion $W^{P,\theta}$ are bounded by $h$ in the sense that $\lvert\lambda_t(\omega)\rvert \le h_t(\omega)$ for all $(t,\omega)\in[0,T]\times\Omega$. 
More precisely using Lemma \ref{lem:EMMP}, we define 
\begin{equation}\label{eq:NGDMeasures}
	    \begin{split}
	      \mathcal{Q}^{\text{ngd}}(Q^{P,\theta}):=\Big\lbrace Q &\sim  Q^{P,\theta}\ \Big\lvert\  \frac{dQ}{dQ^{P,\theta}}=\phantom{}^{(P)}\hspace{-0.05cm}\mathcal{E}\big(\lambda\cdot W^{P,\theta}\big),\text{ for a } \filt\text{-predictable } 
\Big.\\
					    &\Big.\lambda = -\widehat{\xi}^{\,\theta}+\eta \text{ with }\lvert \lambda\rvert \le h \ \text{ for }\eta\in \mathrm{Ker}\,\big(\sigma\widehat{a}^{\frac 1 2}\big) \Big\rbrace. 
	      \end{split}
\end{equation}
In the sequel, any selection $\theta$ (not necessarily measurable)  of $\Theta$ is taken to satisfy 
\begin{equation}\label{eq:AspOnxitheta}
	\big\lvert\widehat{\xi}^{\,\theta}_t(\omega)\big\rvert\le h_t(\omega),\text{ for all }(t,\omega)\in[0,T]\times\Omega,\ \theta\in\Theta.
\end{equation}
Clearly, (\ref{eq:AspOnxitheta}) implies that the minimal martingale measure $\widehat{Q}^{P,\theta}$ is in $\mathcal{Q}^{\text{ngd}}(Q^{P,\theta})\neq \emptyset$ for any $P\in\Pred_{[\underline{a},\overline{a}]},\ \theta\in\Theta$. 
Beyond this (\ref{eq:AspOnxitheta}) will also be used to prove wellposedness of a 2BSDE whose solution will describe the robust good-deal valuation bound and hedging strategy (see Theorems \ref{thm:2BSDECHaractforPiU} and \ref{thm:GDHedgingTheorem}, respectively).
As in \citep[][Lem.2.1, part b)]{BechererKentiaTonleu}, one can show for any $P\in\Pred_{[\underline{a},\overline{a}]}$ and $\theta\in\Theta$ that the set $\mathcal{Q}^{\text{ngd}}(Q^{P,\theta})$ is convex and 
multiplicatively stable (in short m-stable). The latter property, also referred to as rectangularity in the economic literature \citet{ChenEpstein}, is usually required for time-consistency 
of essential suprema of conditional expectations over priors; see \citet{Delbaen} for a general study.

For fixed $P\in\Pred_{[\underline{a},\overline{a}]}$ and $\theta\in\Theta$, the upper good-deal bound for a claim $X\in L^2(\F^P_T, P)$ in the model $Q^{P,\theta}$, for given drift and volatility specification, is defined  as
\begin{equation}\label{eq:GDBfixedPrior}
\pi^{u,P,\theta}_t(X):=\esssup^{\qquad\quad P}_{Q\in\mathcal{Q}^{\text{ngd}}(Q^{P,\theta})}E^Q_t[X],\ t\in[0,T],\ P\text{-a.s.}.
\end{equation}

\begin{remark}\label{refsforngdrestrictions}
According to classical good-deal theory in continuous time with continuous price dynamics, 
the set $\mathcal{Q}^{\text{ngd}}(Q^{P,\theta})$ of risk-neutral measures $Q$ is specified in such a way, that in 
any extension $(S,S')$ of the financial market by additional derivatives' price processes $S'$, which are (local) $Q$-martingales,
the extended market is not just arbitrage-free but moreover does not permit opportunities for dynamic trading with overly attractive reward to risk ratios. More precisely, the choice of $\mathcal{Q}^{\text{ngd}}(Q^{P,\theta})$ is such that 
the extended market does not offer trading opportunities that yield instantaneous Sharpe ratios which exceed the bound $h$ \citep[see][Section~3 and Appendix~A, for details]{BjorkSlinko}; 
Equivalently, one can show that the extended market
does not admit for portfolio strategies that offer (conditional)  expected growth rates larger than $h$ \citep[see][Section~3]{Becherer-Good-Deals}, and the respective good-deal restrictions are sharp.
In this sense, the dynamic upper $\pi^{u,P,\theta}_t(X)$ and lower $-\pi^{u,P,\theta}_t(-X)$ good-deal bounds determine at any time $t$ a sub-interval of the arbitrage-free prices, that is determined by those valuation measures $Q$ which restrict the opportunities for overly good deals.
\end{remark}

The hedging objective for the seller of a claim $X\in L^2(P)$ who believes in the model $Q^{P,\theta}$ is to find a trading strategy $\bar{\phi}^{P,\theta}\in \Phi(P)$ that minimizes the residual risk 
under a risk measure $\rho^{P,\theta}$ from holding $X$ and trading dynamically in the market. As the seller charges a premium $\pi^{u,P,\theta}_\cdot(X)$ for $X$, she would like
this premium to be the minimal capital that dynamically makes her position acceptable, so that $\pi^{u,P,\theta}$ becomes the market consistent risk measure 
corresponding to $\rho^{P,\theta}$, in the spirit of \citet{BarrieuElKaroui}. Following \citet{Becherer-Good-Deals}, define $\rho^{P,\theta}$ as a dynamic coherent risk measure by 
\begin{equation}\label{eq:DefRhoPtheta}
\rho^{P,\theta}_t(X):=\esssup^{\qquad\quad P}_{Q\in\mathcal{P}^{\text{ngd}}(Q^{P,\theta})}E^Q_t[X],\ t\in[0,T],\ P\text{-a.s.}, \quad\text{  with}
\end{equation}
\begin{equation*}
	    \begin{split}
	      \mathcal{P}^{\text{ngd}}(Q^{P,\theta}):=\Big\lbrace Q\sim P\ \Big\lvert\ &\frac{dQ}{dQ^{P,\theta}}=\phantom{}^{(P)}\hspace{-0.05cm}\mathcal{E}\big(\lambda\cdot W^{P,\theta}\big),\ \lambda\ \filt\text{-prog., }\lvert \lambda\rvert \le h\ P\otimes dt\text{-a.e.}\Big\rbrace.
	      \end{split}
\end{equation*}
Hence $\mathcal{P}^{\text{ngd}}(Q^{P,\theta})$ is the set of a-priori valuation measures equivalent to $Q^{P,\theta}$, which satisfy the no-good-deal restriction under $Q^{P,\theta}$ 
but might not be local martingale measures for the stock price process $S$ (yet they are w.r.t.\ the market with only the riskless asset $S^0\equiv1$). 
This implies $\mathcal{Q}^{\text{ngd}}(Q^{P,\theta}) = \mathcal{P}^{\text{ngd}}(Q^{P,\theta})\cap \mathcal{M}^{e}(Q^{P,\theta})$,  
and thus $\rho^{P,\theta}_t(X)\ge\pi^{u,P,\theta}_t(X),\ t\in[0,T],\ P$-a.s.\ for all $X\in L^2(P)$. Moreover as $\mathcal{Q}^{\text{ngd}}(Q^{P,\theta})$ is m-stable and convex, the set $\mathcal{P}^{\text{ngd}}(Q^{P,\theta})$ also is, 
for any $P\in\Pred_{[\underline{a},\overline{a}]}$ and $ \theta\in\Theta$.
Therefore by \citep[][Lem.2.1]{BechererKentiaTonleu} the dynamic (coherent) risk measure $\rho^{P,\theta}:L^2(\F^P_T,P)\to L^2(\F^P_t,P)$ is time-consistent. 
For a fixed reference measure $Q^{P,\theta}$,  the hedging problem of the seller for $X\in L^2(\F^P_T,P)$ is then to find $\bar{\phi}^{P,\theta}\in\Phi(P)$ such that 
$P$-almost surely for all $t\in[0,T]$ holds
\begin{equation}\label{eq:GD-hedging-Problem-P-Theta}
\begin{split}
	\pi^{u,P,\theta}_t(X)&= \essinf^{\qquad\quad P}_{\phi\in\Phi(P)}\rho^{P,\theta}_t\Big( X-\phantom{}^{\phantom{}^{(P)}}\hspace{-0.15cm}\int_t^T\phi^{\text{tr}}_s\,\widehat{a}_s^{\frac{1}{2}}\big(\widehat{\xi}_sds+dW^{P}_s\big)\Big)\\
		  &=\rho^{P,\theta}_t\Big( X-\phantom{}^{\phantom{}^{(P)}}\hspace{-0.15cm}\int_t^T(\bar{\phi}^{P,\theta}_s)^{\text{tr}}\,\widehat{a}_s^{\frac{1}{2}}\big(\widehat{\xi}_sds+dW^{P}_s\big)\Big).
\end{split}
\end{equation}

One can show (cf.\  Cor.5.6 in \citet{Becherer-Good-Deals}, or Prop.4  in \citet{BechererKentiaTonleu}) that for each $P$ in $\Pred_{[\underline{a},\overline{a}]}$ and $\theta$ in $\Theta$, the tracking (or hedging) error 
\begin{equation*}
R^{\bar{\phi}^{P,\theta}}(X):= \pi^{u,P,\theta}_\cdot(X) - \pi^{u,P,\theta}_0(X)-\phantom{}^{\phantom{}^{(P)}}\hspace{-0.15cm}\int_0^\cdot(\bar{\phi}^{P,\theta}_s)^{\text{tr}}\,\widehat{a}_s^{\frac{1}{2}}\big(\widehat{\xi}_sds+dW^{P}_s\big)
\end{equation*}
of the good-deal hedging strategy $\bar{\phi}^{P,\theta}$ is a supermartingale under any  measure $Q\in \mathcal{P}^{\text{ngd}}(Q^{P,\theta})$. This supermartingale property of tracking errors
could be viewed as a robustness property of the hedging strategy with respect to the family $\mathcal{P}^{\text{ngd}}(Q^{P,\theta})$ of 'valuation' probability measures as generalized scenarios \citep[cf.][]{ArtznerDelbaenEberHeath}.
\subsubsection{Standard BSDEs for valuation and hedging under $Q^{P,\theta}$  without uncertainty}
The solution to the valuation and hedging problem described by (\ref{eq:GDBfixedPrior}) and (\ref{eq:GD-hedging-Problem-P-Theta}) can be obtained in terms of standard BSDEs
under $P\in\Pred_{[\underline{a},\overline{a}]}$. In order to be more precise, let us introduce some notations that will also be used throughout the sequel.
For $a\in\mathbb{S}_n^{>0}$, we denote by $\Pi^a_{(t,\omega)}(\cdot)$ and $\Pi^{a,\bot}_{(t,\omega)}(\cdot)$ respectively the orthogonal projections onto the 
subspaces $\mathrm{Im}\,\big(\sigma_t(\omega)a^{\frac{1}{2}}\big)^{\text{tr}}$ and $\mathrm{Ker}\,\big(\sigma_t(\omega)a^{\frac{1}{2}}\big)$ of $\R^n,$ $(t,\omega)\in[0,T]\times\Omega $.
Explicitly, for each $a\in\mathbb{S}_n^{>0}$ and $t\in[0,T]$  (omitting $\omega$-symbols for simplicity), the projections for $z\in\R^n$ are given by
 \begin{equation*}
  \Pi^a_t(z) = (\sigma_ta^{1/2})^{\text{tr}}(\sigma_ta\sigma_t^{\text{tr}})^{-1}(\sigma_ta^{1/2})z
  \quad\text{and}\quad \Pi^{a,\bot}_t(z) = z-\Pi^a_t(z).
 \end{equation*}
In particular we define $\widehat{\Pi}_{(t,\omega)}(\cdot) := \Pi^{\widehat{a}_t(\omega)}_{(t,\omega)}(\cdot)$ and $\widehat{\Pi}^\bot_{(t,\omega)}(\cdot):=\Pi^{\widehat{a}_t(\omega),\bot}_{(t,\omega)}(\cdot)$. 
For $P\in\Pred_{[\underline{a},\overline{a}]}$ and $\theta\in\Theta$, let us consider the standard BSDE under $P$,
\begin{equation}\label{eq:BSDEValandHedging-fixed-P-theta}
\mathcal{Y}_t= X - \int_t^T\widehat{F}_s^{\,\theta}(\mathcal{Y}_s,\widehat{a}^{1/2}_s\mathcal{Z}_s)ds - \phantom{}^{\phantom{}^{(P)}}\hspace{-0.15cm}\int_t^T\mathcal{Z}_s^{\text{tr}}dB_s, \ t\in[0,T],\ P\text{-a.s.},  	
\end{equation}
for data $X\in L^2(\F_T^P,P)$ and generator $-\widehat{F}_t^{\,\theta}(\omega,\cdot) = -F^\theta(t,\omega_{\cdot\wedge t},\cdot,\widehat{a}_t(\omega))$ with
\begin{equation}\label{eq:GeneratorFunctionFtheta}
 \begin{split}
 	F^\theta(t,\omega,z,a) := &-\Pi^{a,\bot}_{(t,\omega)}\big(\theta_t(\omega)\big)^\textrm{tr}\Pi^{a,\bot}_{(t,\omega)}(z)+\widehat{\xi}_t^{\,\textrm{tr}}(\omega)\Pi^a_{(t,\omega)}(z)\\
					    &- \Big(h^2_t(\omega)-\big\lvert \widehat{\xi}_t(\omega)+\Pi^a_{(t,\omega)}(\theta_t(\omega))\big\rvert^2\Big)^{\frac{1}{2}}\big\lvert \Pi^{a,\bot}_{(t,\omega)}(z)\big\rvert
 \end{split}
\end{equation}
for all $(t,\omega,z,a)\in[0,T]\times\Omega\times\R^n\times\mathbb{S}^{>0}_n$. By \citep[][Thm.5.4]{Becherer-Good-Deals}, the upper valuation bound $\pi^{u,P,\theta}_\cdot(X)$ and 
good-deal hedging strategy $\bar{\phi}^{P,\theta}$ in the model $Q^{P,\theta}$ are identified in terms of the solution $(\mathcal{Y}^{P,\theta},\mathcal{Z}^{P,\theta})\in\mathbb{D}^2(\filt^P,P)\times\mathbb{H}^2(\filt^P,P)$ to the standard BSDE (\ref{eq:BSDEValandHedging-fixed-P-theta}) 
as $\pi^{u,P,\theta}_\cdot(X) = \mathcal{Y}^{P,\theta}$ and
\begin{equation*}
	\widehat{a}^{\frac{1}{2}}_t\,\bar{\phi}^{P,\theta}_t = \widehat{\Pi}_t\big(\widehat{a}_t^{\frac{1}{2}}\mathcal{Z}^{P,\theta}_t\big)+ \frac{\big\lvert \Pi^\bot_t\big(\widehat{a}_t^{\frac{1}{2}}\mathcal{Z}^{P,\theta}_t\big)\big\rvert}{\sqrt{h^2_t-\big\lvert\widehat{\xi}_t+\widehat{\Pi}_t\big(\theta_t\big)\big\rvert^2}}\Big(\widehat{\xi}_t+\widehat{\Pi}_t\big(\theta_t\big)\Big)\quad P\otimes dt\text{-a.e.}. 
\end{equation*}

As drift uncertainty can be incorporated in a setup with one dominating reference measure, the above standard BSDE description of good-deal bounds and hedging strategies 
has been extended by \citet{BechererKentiaTonleu} to the presence of uncertainty solely about the drift. Under uncertainty about the volatility, however, a dominating probability measure would fail to 
exist and hence standard BSDE techniques will not be applicable anymore. Instead, we will rely on 2BSDEs to provide in Section \ref{sec:GoodDealValandHedgingwithVolUncer}, under combined uncertainty about drifts 
and volatilities, an infinitesimal characterization of (robust) good-deal bounds and hedging strategies, after suitably defining the latter in such a typically non-dominated setup. 

\section{Good-deal hedging and valuation  under combined uncertainty}\label{sec:GoodDealValandHedgingwithVolUncer}
We describe good-deal bounds in the market model of Section \ref{subsec:FinanMarkModelVolUncer}  using 2BSDEs and study a corresponding notion of hedging that is robust w.r.t.\ combined drift and 
volatility uncertainty. We first define the good-deal valuation bounds from the no-good-deal restriction of Section \ref{subsect:NGDRestriction}, but taking into account 
the investor's aversion towards drift and volatility uncertainty. Furthermore, still as in Section \ref{subsect:NGDRestriction},
hedging strategies are defined as minimizers of some dynamic coherent risk measure $\rho$ under uncertainty, 
so that the good-deal bound arises as the market consistent risk measure for $\rho$, in the spirit of \citet{BarrieuElKaroui}, allowing for optimal risk sharing with the market under uncertainty.

For valuation, it seems natural to view uncertainty aversion as a penalization to the no-good-deal 
restriction in the sense that it implies good-deal bounds under uncertainty that are wider than in the absence of uncertainty. 
To formalize this idea, we rely on a classical worst-case  approach to uncertainty in  
the spirit of e.g.\ \citet{GilboaS89,HansenSargent}. 
The idea is that an agent who is averse towards ambiguity about the actual drifts and volatilities may opt, to be conservative, for a worst-case approach to valuation in order to compensate for losses that could occur 
due to the wrong choice of model parameters. Acting this way, she would sell (resp.\ buy) financial risks at the largest upper (resp.\ smallest lower) good-deal bounds over all plausible priors 
in her confidence set $\mathcal{R}$ capturing drift and volatility uncertainty. 
Given the technical difficulties that may arise from $\mathcal{R}$ being non-dominated, in particular for writing essential suprema, we define the  
worst-case upper good-deal bound $\pi^u_\cdot(X)$ for a financial risk $X\in L^2_{\Pred_{[\underline{a},\overline{a}]}}$ as the unique process $\pi^u_\cdot(X)\in\mathbb{D}^2\big(\filt^{\Pred_{[\underline{a},\overline{a}]}}\big)$ (if it exists) that satisfies 
\begin{equation}\label{eq:DefGDBounds}
 \pi^u_t(X)= \esssup^{\qquad\quad P}_{P'\in\Pred_{[\underline{a},\overline{a}]}(t,P,\filt_+)}\esssup^{\qquad\quad P}_{\theta\in\Theta}\pi^{u,P',\theta}_t(X),\quad t\le T,\ P\text{-a.s.},\ \text{}P\in\Pred_{[\underline{a},\overline{a}]},
\end{equation}
with $\pi^{u,P',\theta}_\cdot(X)$ defined in (\ref{eq:GDBfixedPrior}).
The lower bound $\pi^l_\cdot(X)=-\pi^u_\cdot(-X)$ is defined analogously, replacing essential suprema in (\ref{eq:DefGDBounds}) by essential infima, and
for this reason we focus only on studying the upper bound. For $X\in\mathbb{L}^2(\filt_+)$, we shown in Section \ref{subsec:ValuationBounds2BSDE} that (\ref{eq:DefGDBounds}) 
indeed defines a single process $\pi^u_\cdot(X)$ that can be identified as the $Y$-component of the solution of a specific 2BSDE. 

For robust good-deal hedging, we define the hedging strategy similarly as in Section \ref{subsect:NGDRestriction}, but taking into account model uncertainty. 
Indeed, under combined drift and volatility uncertainty, the hedging objective of the investor for a liability $X$ is to find a $\Pred_{[\underline{a},\overline{a}]}$-permitted 
trading strategy that dynamically minimizes the residual risk (under a suitable worst-case risk measure $\rho$) from holding $X$ and trading in the market. 
As a seller charging the premium $\pi^u_\cdot(X)$ for $X$, she would like the upper good-deal bound to be the minimal capital to make her  
position $\rho$-acceptable at all times so that $\pi^u_\cdot(\cdot)$ becomes the market consistent risk measure corresponding to $\rho_\cdot(\cdot)$.
The second objective of the investor being robustness (of hedges and valuations) w.r.t.\ ambiguity about both drifts and volatilities,
$\rho$ should be compatible with the no-good-deal restriction in the market and should also capture the investor's aversion towards uncertainty.
From the definition of $\pi^u_\cdot(X)$ (for $X\in \mathbb{L}^2(\filt_+)$) in (\ref{eq:DefGDBounds}) and the hedging problem (\ref{eq:GD-hedging-Problem-P-Theta}) 
in the absence of uncertainty, we define $\rho_\cdot(X)$ for $X\in L^2_{\Pred_{[\underline{a},\overline{a}]}}$ as the unique process in 
$\mathbb{D}^2\big(\filt^{\Pred_{[\underline{a},\overline{a}]}}\big)$ (if it exists) that satisfies, for $t\in [0,T]$, 
\begin{equation}\label{eq:DefRhoFirst}
 \rho_t(X)= \esssup^{\qquad\quad P}_{P'\in\Pred_{[\underline{a},\overline{a}]}(t,P,\filt_+)} \esssup^{\qquad\quad P}_{\theta\in\Theta}\rho^{P',\theta}_t(X),\ P\text{-a.s.\  for }P\in\Pred_{[\underline{a},\overline{a}]},
\end{equation}
for $\rho^{P,\theta}_\cdot(\cdot)$ defined in (\ref{eq:DefRhoPtheta}).
The above considerations then lead to a good-deal hedging problem that is analogous to (\ref{eq:GD-hedging-Problem-P-Theta}) in the case without uncertainty and, 
mathematically, for a $\F_T$-measurable contingent claim $X\in \mathbb{L}^2(\filt_+)$ writes: Find $\bar{\phi}\in\Phi$ such that for all  $P\in\Pred_{[\underline{a},\overline{a}]}$ holds $P$-almost surely
for all $t\in[0,T]$,
\begin{equation}\label{eq:GD-hedging-Problem}
\begin{split}
	\pi^u_t(X)&= \essinf^{\qquad\quad P}_{\phi\in\Phi}\rho_t\Big( X-\int_t^T\phi_s^{\text{tr}}\big(\widehat{a}^{1/2}_s\widehat{\xi}_sds+dB_s\big)\Big)\\
		  &=\rho_t\Big( X-\int_t^T\bar{\phi}_s^{\text{tr}}\big(\widehat{a}^{1/2}_s\widehat{\xi}_sds+dB_s\big)\Big).
\end{split}
\end{equation}
Analogously to \citet{BechererKentiaTonleu} 
we relate robustness of the hedging strategy (w.r.t.\ uncertainty) to a supermartingale property of its tracking (hedging) error under a class $\mathcal{P}^{\text{ngd}}$ of a-priori 
valuation measures containing all $\mathcal{P}^{\text{ngd}}(Q)$, uniformly over all reference models $Q\in\mathcal{R}$. 
To introduce in more precise terms the notion of robustness w.r.t.\ ambiguity about drifts and volatilities, we define the tracking error $R^\phi(X)$ of a strategy $\phi\in\Phi$ for a claim $X\in \mathbb{L}^2(\filt_+)$ as 
\begin{equation}
\label{eq:GD-tracking-error}
 R^\phi(X) := \pi^u_\cdot(X) - \pi^u_0(X)-\int_0^\cdot\phi_s^{\text{tr}}\big(\widehat{a}^{1/2}_s\widehat{\xi}_sds+dB_s\big). 
\end{equation}
In words, $R^\phi_t(X)$ is the difference between the dynamic variations in the (monetary) capital requirement for $X$ and the profit/loss from trading (hedging) according to $\phi$ up to 
time $t$. Note that also $\pi^u$ is a dynamic coherent risk measure. 
Subsequently, we will say that that a good-deal hedging strategy $\bar{\phi}(X)$ for a claim $X$ is 
robust w.r.t.\ (drift and volatility) uncertainty if $R^{\bar{\phi}}(X)$ is a $(\filt^P,Q)$-supermartingale for every $Q\in\mathcal{P}^{\text{ngd}}(Q^{P,\theta})$ uniformly for all 
$P\in\Pred_{[\underline{a},\overline{a}]}$ and $\theta\in\Theta$. 
\begin{remark}
If the tracking error $R^{\bar{\phi}}(X)$ is a $(\filt^P,Q)$-supermartingale for any $Q$ in $\Pred^{\text{ngd}}(Q^{P,\theta})$, then the strategy $\bar{\phi}$ is said to be 
``at least mean-self-financing'' in the model $Q^{P,\theta}$,
analogously to the property of being mean-self-financing \citep[like risk-minimizing strategies studied in][Sect.2,  with valuations taken under the minimal martingale measure]{Schweizer}
which would correspond to a martingale property (under $Q^{P,\theta}$)
of the tracking errors. Holding uniformly over all $P\in\Pred_{[\underline{a},\overline{a}]}$ and $\theta\in\Theta$, 
such can be interpreted as a robustness property of $\bar{\phi}$ \citep[cf.][]{BechererKentiaTonleu} w.r.t.\ ambiguity about drifts and volatilities.
\end{remark}
We show in Section \ref{sec:HedgingStrat2BSDE} that the robust good-deal hedging strategy $\bar{\phi}$ can be obtained in terms of the control process of a 2BSDE describing the good-deal valuation 
bound $\pi^u$, and that it is robust with respect to combined uncertainty. This strategy will be shown to be quite different from the almost-sure hedging (i.e.\ superreplicating) strategy
in general, in Section \ref{sec:ExamplewithPutOptionVolUncert}.

\subsection{Good-deal valuation bounds}\label{subsec:ValuationBounds2BSDE}

For each $P\in\Pred_{[\underline{a},\overline{a}]},\ t\in[0,T] $ and $P'\in\Pred_{[\underline{a},\overline{a}]}(t,P,\filt_+)$, the worst-case good-deal bound under drift uncertainty in 
the model $P'$ is defined for $X\in L^2(\F^{P'}_T,P')$ by 
\begin{equation*}\pi^{u,P'}_t(X):=\esssup^{\qquad\quad P'}_{\theta\in\Theta}\pi^{u,P',\theta}_t(X),\ P'\text{-a.s.},
\end{equation*}
so that (\ref{eq:DefGDBounds}) rewrites for $\F_T$-measurable $X\in\mathbb{L}^2(\filt_+)$ as
\begin{equation}\label{eq:PiuFunctionPiuPPrime}
\pi^u_t(X) = \esssup^{\qquad\quad P}_{P'\in\Pred_{[\underline{a},\overline{a}]}(t,P,\filt_+)}\pi^{u,P'}_t(X),\  t\in[0,T],\ P\text{-a.s.},\ P\in\Pred_{[\underline{a},\overline{a}]}.
\end{equation}
Note from \citep[][Thm.4.11]{BechererKentiaTonleu} that the worst-case good-deal bound under drift uncertainty $\pi^{u,P}_\cdot(X)$ for $P\in\Pred_{[\underline{a},\overline{a}]}$ is the value process of the 
standard BSDE under $P$ with terminal condition $X$ and generator $-\widehat{F}(t,\omega,\cdot) = -F(t,\omega_{\cdot\wedge t},\cdot,\widehat{a}_t(\omega))$, where
\begin{equation}\label{eq:Generator2BSDEPiU}
F(t,\omega,z,a) := \inf_{\theta\in\Theta} F^\theta(t,\omega,z,a),\quad \text{for all }(t,\omega,z,a)\in[0,T]\times\Omega\times\R^n\times\mathbb{S}^{>0}_n,
\end{equation}
and $F^\theta$ is given by (\ref{eq:GeneratorFunctionFtheta}) for each selection $\theta$ of $\Theta$.
With (\ref{eq:PiuFunctionPiuPPrime}) and the above BSDE representation of $\pi^{u,P}_\cdot(X)$, $P\in \Pred_{[\underline{a},\overline{a}]}$, the robust good-deal bound $\pi^u_\cdot(X)$ 
can alternatively be formulated like in \citep[][eq.(4.12) and Prop.4.10]{SonerTouziZhang-Dual} \citep[or alternatively][eq.(2.6) and Lem.3.5]{PossamaiTanZhou} 
as the value process of a stochastic control problem of nonlinear kernels. Roughly, this means to write, up to taking right-limits in time over rationals,
\begin{equation}\label{eq:ControlPbnonlinExp}
	\pi^u_t(X)(\omega) = \sup_{P\in\Pred_{[\underline{a},\overline{a}]}^t}\pi^{u,P,t,\omega}_t(X) \quad \text{for all } (t,\omega)\in[0,T]\times\Omega,
\end{equation}
where $\pi^{u,P,t,\omega}_\cdot(X)$ are given by solutions to standard BSDEs under $P\in \Pred_{[\underline{a},\overline{a}]}^t$, with terminal value $X$ and generator as in (\ref{eq:Generator2BSDEPiU}), 
but defined on the canonical space $\Omega^t:=\{\tilde{\omega}\in \mathcal{C}([t,T],\R^n): \tilde{\omega}(t)=0\}$ of (shifted) paths with (shifted) canonical process $B^t$, (shifted) natural filtration $\filt^t$, and associated (shifted) set of priors $\Pred_{[\underline{a},\overline{a}]}^t$, $t\in[0,T]$.
Note that by a zero-one law as in Lemma \ref{lem:QSAggregPS}, $\pi^{u,P,t,\omega}_t(X)$ is indeed constant for any $(t,\omega)$ and $P\in \Pred_{[\underline{a},\overline{a}]}^t$ (like in Remark \ref{rem:DynProgPrinciple}, Part 2.), 
and hence the pointwise supremum (\ref{eq:ControlPbnonlinExp}) is well-defined. Although such a pathwise description of the worst-case good-deal bound $\pi^u_t(X)$ reflects better the economic intuition behind the latter
and, as a stochastic control problem, would be more classical for the literature, it might be less suitable for approaching the hedging problem (\ref{eq:GD-hedging-Problem}) for which the 
essential supremum formulation (\ref{eq:PiuFunctionPiuPPrime}) seems more appropriate. 

It is known \citet{SonerTouziZhang-Wellposedness,PossamaiTanZhou} that under suitable assumptions (on terminal condition and generator) 
the value process of a control problem like (\ref{eq:ControlPbnonlinExp}) can be described in terms of the solution to a 2BSDE. We provide such a description for $\pi^u_\cdot(X)$ by considering the 2BSDE
\begin{equation}\label{eq:2BSDEpiU}
  Y_t = X-\phantom{}^{(P)}\hspace{-0.1cm}\int_t^TZ_s^{\text{tr}}dB_s - \int_t^T\widehat{F}_s(\widehat{a}^{1/2}_sZ_s)ds +K^P_T-K^P_t,\ t\in[0,T],\ \Pred_{[\underline{a},\overline{a}]}\text{-q.s.},
  \end{equation}
with generator $F$ given by (\ref{eq:Generator2BSDEPiU}) and terminal condition $X$. We have the following 
\begin{theorem}[Good-deal valuation under combined uncertainty]\phantomsection{}\label{thm:2BSDECHaractforPiU}
Let $X$ be a $\F_T$-measurable claim in $\mathbb{L}^2(\filt_+)$. Then a unique solution $(Y,Z,(K^P)_{P\in\Pred_{[\underline{a},\overline{a}]}})\in \mathbb{D}^2\big(\filt^{\Pred_{[\underline{a},\overline{a}]}}\big)\times \mathbb{H}^2\big(\filt^{\Pred_{[\underline{a},\overline{a}]}}\big)\times \mathbb{I}^2\big(\big(\filt^P\big)_{P\in\Pred_{[\underline{a},\overline{a}]}}\big)$ to the 2BSDE (\ref{eq:2BSDEpiU}).
Moreover the process $Y$ uniquely satisfies (\ref{eq:DefGDBounds}) and any $P\in\Pred_{[\underline{a},\overline{a}]}$ holds $\pi^u_t(X)=Y_t,\ t\le T,\ P\text{-a.s.}$. 
\end{theorem}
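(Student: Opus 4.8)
The plan is to read the statement off Proposition~\ref{pro:ExistenceUniquenessSol2BSDEs} applied to the specific (non-convex) generator $F=\inf_{\theta\in\Theta}F^\theta$ from~(\ref{eq:Generator2BSDEPiU}), and then to identify the resulting value process $Y$ with $\pi^u_\cdot(X)$ by combining the essential-supremum representation~(\ref{eq:RepSol2BSDEs}) with the standard-BSDE characterisation of the drift-only bound $\pi^{u,P}_\cdot(X)$ recalled from \citet{BechererKentiaTonleu} and the aggregation formula~(\ref{eq:PiuFunctionPiuPPrime}). The two displayed assertions of the theorem then follow, existence and uniqueness of the triple from the first step and the identity $Y=\pi^u_\cdot(X)$ from the second.

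First I would check that the pair $(F,X)$ meets Assumption~\ref{asp:Assumption1}. Part (i) is the hypothesis $X\in\mathbb{L}^2(\filt_+)$. For (ii)--(iii), I note from~(\ref{eq:GeneratorFunctionFtheta}) that every $F^\theta$ is independent of $y$ and, at fixed $(t,\omega,a)$, splits into two terms linear in $z$ (with gradients controlled by $\lvert\widehat\xi\rvert$ and $\lvert\theta\rvert\le\delta$) plus a term proportional to $\lvert\Pi^{a,\bot}(z)\rvert$ whose coefficient $\big(h^2-\lvert\widehat\xi+\Pi^a(\theta)\rvert^2\big)^{1/2}$ is bounded thanks to~(\ref{eq:AspOnxitheta}); using in addition the boundedness of $\widehat\xi,\delta,h$ and the ellipticity bounds~(\ref{eq:SigmaUnifEll}) one obtains a Lipschitz constant in $z$ that is uniform in $\theta$, and this bound is inherited by the infimum $F=\inf_\theta F^\theta$. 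Joint Borel-measurability and $\filt$-progressivity of $F$ follow from measurable-selection theory for the $\filt$-predictable radial correspondence $\Theta$ (as in \citet{Rockafellar}), since $\theta\mapsto F^\theta$ is continuous on the compact values of $\Theta$ and jointly measurable in the remaining variables, so the infimum is a normal integrand. Finally, evaluating~(\ref{eq:GeneratorFunctionFtheta}) at $z=0$ gives $F^\theta(t,\omega,0,a)=0$ for every $\theta$, whence $\widehat F^0\equiv 0$ is bounded and (iv) holds. Proposition~\ref{pro:ExistenceUniquenessSol2BSDEs}(1) then yields the unique solution $(Y,Z,(K^P)_{P\in\Pred_{[\underline{a},\overline{a}]}})$ in the asserted spaces.

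For the identification I would apply the representation~(\ref{eq:RepSol2BSDEs}) with $t=T$, using $Y_T=X$, to get, for each $P\in\Pred_{[\underline{a},\overline{a}]}$,
\begin{equation*}
Y_s=\esssup^{P}_{P'\in\Pred_{[\underline{a},\overline{a}]}(s,P,\filt_+)}\mathcal{Y}^{P'}_s(T,X),\quad s\le T,\ P\text{-a.s.},
\end{equation*}
where $\mathcal{Y}^{P'}(T,X)$ solves the standard BSDE~(\ref{eq:BSDEStandard}) under $P'$ with terminal value $X$ and generator $-\widehat F$, $F=\inf_\theta F^\theta$. But this is exactly the BSDE whose value process is, by \citep[][Thm.4.11]{BechererKentiaTonleu}, the worst-case good-deal bound under drift uncertainty, i.e.\ $\mathcal{Y}^{P'}_s(T,X)=\pi^{u,P'}_s(X)$ $P'$-a.s.; substituting and invoking~(\ref{eq:PiuFunctionPiuPPrime}) gives $Y_s=\esssup_{P'}\pi^{u,P'}_s(X)=\pi^u_s(X)$ $P$-a.s.\ for every $P$, so $Y$ satisfies~(\ref{eq:DefGDBounds}). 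Uniqueness of the $\mathbb{D}^2\big(\filt^{\Pred_{[\underline{a},\overline{a}]}}\big)$-process satisfying~(\ref{eq:DefGDBounds}) is then immediate, since the right-hand side of~(\ref{eq:DefGDBounds}) does not involve the candidate process: any two solutions agree $P$-a.s.\ for each $P$, hence $\Pred_{[\underline{a},\overline{a}]}$-q.s., and both coincide with $Y$.

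I expect the main obstacle to lie in the first step, namely in making the verification of Assumption~\ref{asp:Assumption1} for the non-convex generator $F=\inf_{\theta\in\Theta}F^\theta$ fully rigorous: one must carry out the measurable-selection argument for the infimum over the set-valued map $\Theta$ (so that $F$ is a genuine $\filt$-progressive, jointly Borel generator), and extract from~(\ref{eq:GeneratorFunctionFtheta}), using~(\ref{eq:AspOnxitheta}) together with the boundedness and ellipticity assumptions, a Lipschitz constant in $z$ that is uniform over $\theta\in\Theta$ and over $a\in\mathbb{S}^{>0}_n$ (in particular ensuring the square-root coefficient is well-defined and bounded) so that it survives passage to the infimum. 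Once $(F,X)$ is seen to satisfy Assumption~\ref{asp:Assumption1}, the remainder is a direct reading of Proposition~\ref{pro:ExistenceUniquenessSol2BSDEs} and of the cited drift-only characterisation.
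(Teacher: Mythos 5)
Your proposal is correct and follows essentially the same route as the paper's own proof: verify Assumption~\ref{asp:Assumption1} for $F=\inf_{\theta\in\Theta}F^\theta$ (equi-Lipschitzness in $z$ uniformly over $\theta$ via~(\ref{eq:AspOnxitheta}) and boundedness of $\delta,h,\widehat{\xi}$; joint measurability and $\filt$-progressivity via Rockafellar's measurable maximum/selection results, with $F^\vartheta$ a Carath\'eodory function; $\widehat{F}^0\equiv 0$ for (iv)), then invoke Parts~1 and~2 of Proposition~\ref{pro:ExistenceUniquenessSol2BSDEs} and identify $\mathcal{Y}^{P'}_\cdot(T,X)=\pi^{u,P'}_\cdot(X)$ through \citep[Thm.4.11]{BechererKentiaTonleu} and~(\ref{eq:PiuFunctionPiuPPrime}). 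The measurability step you flag as the main obstacle is exactly where the paper invests its effort (the Carath\'eodory argument on the auxiliary constant correspondence $\widetilde{\Theta}$), and your appeal to~(\ref{eq:SigmaUnifEll}) in the Lipschitz estimate is harmless but unnecessary, since the orthogonal projections are contractions.
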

\begin{proof}
We aim to apply Part 1.\ of Proposition \ref{pro:ExistenceUniquenessSol2BSDEs} to show existence and uniqueness of the solution to the 2BSDE (\ref{eq:2BSDEpiU}). To this end and since $X$ is $\F_T$-measurable and in $\mathbb{L}^2(\filt_+)$ 
it suffices to check that $X$ and the function $F$ satisfy conditions (ii) to (iv) of Assumption \ref{asp:Assumption1}.
Because $\widehat{F}^0\equiv0$, then (iv) obviously holds. As for (iii) about the uniform Lipschitz continuity of $F$ in $z$, it is enough to show that the functions
$F^{\,\theta}$, for $\theta\in\Theta$, are equi-Lipschitz in $z$ uniformly for all $(t,\omega,a)\in[0,T]\times\Omega\times \mathbb{S}^{>0}_n$.
The latter holds since by Minkowski and Cauchy-Schwarz inequalities one has 
\begin{align*}
& \big\lvert F^{\,\theta}(t,\omega,z,a) - F^{\,\theta}(t,\omega,z',a)\big\rvert \\
&\qquad \le \big\lvert\Pi^{a,\bot}_{(t,\omega)}\big(\theta_t(\omega)\big)\big\rvert\cdot\big\lvert\Pi^{a,\bot}_{(t,\omega)}(z-z')\big\rvert+\big\lvert\widehat{\xi}_t(\omega)\big\rvert\cdot\big\lvert \Pi^a_{(t,\omega)}(z-z')\big\rvert\\
&\qquad\ \quad + \big(h^2_t(\omega)-\big\lvert \widehat{\xi}_t(\omega)+\Pi^a_{(t,\omega)}(\theta_t(\omega))\big\rvert^2\big)^{\frac{1}{2}} \big\lvert\Pi^{a,\bot}_{(t,\omega)}(z-z')\big\rvert\\
								     &\qquad  \le \delta_t(\omega)\big\lvert z-z'\big\rvert +\big\lvert\widehat{\xi}_t(\omega)\big\rvert\big\lvert z-z'\big\rvert + h_t(\omega) \big\lvert z-z'\big\rvert
								      \le C\big\lvert z-z'\big\rvert, 
\end{align*}
for all $\theta\in\Theta$, for $C\in(0,\infty),$ making use of (\ref{eq:AspOnxitheta}) and boundedness of the functions $\delta$ and $h$. 
To infer the first claim of the theorem, it remains to show (ii) about $F$ being jointly and $\filt$-progressive. 
We first show that $F$ is $\filt$-progressive in $(t,\omega)$. For each $z\in\R^n,\ a\in \mathbb{S}^{>0}_n$, the map $[0,T]\times\Omega\times\R^n\ni(t,\omega,\vartheta)\mapsto F^\vartheta(t,\omega,z,a)$ defined from (\ref{eq:GeneratorFunctionFtheta}), after straightforward calculations by (omitting the dependence on $\omega$ for notational simplicity)
\begin{align}\nonumber
	  & F^\vartheta(t,\cdot,z,a) =  -\vartheta^{\text{tr}}z+\vartheta^{\text{tr}}a^{\frac{1}{2}}\sigma_t^{\text{tr}}(\sigma_ta\sigma_t^{\text{tr}})^{-1}\sigma_ta^{\frac{1}{2}}z +\widehat{\xi}^{\,\text{tr}}_ta^{\frac{1}{2}}\sigma_t^{\text{tr}}(\sigma_ta\sigma_t^{\text{tr}})^{-1}\sigma_ta^{\frac{1}{2}}z \\
		\; \label{eq:GeneratorFunctionFthetaExplicit}
			      &-\big(h^2_t - \big\lvert \widehat{\xi}_t+ a^{\frac{1}{2}}\sigma_t^{\text{tr}}(\sigma_ta\sigma_t^{\text{tr}})^{-1}\sigma_ta^{\frac{1}{2}}\vartheta\big\rvert^2\big)^{\frac{1}{2}}
					      \big(\lvert z\rvert^2 - z^{\text{tr}}a^{\frac{1}{2}}\sigma_t^{\text{tr}}(\sigma_ta\sigma_t^{\text{tr}})^{-1}\sigma_ta^{\frac{1}{2}}z \big)^{\frac{1}{2}}
\end{align}
is continuous in $\vartheta$ and $\filt$-predictable in $(t,\omega)$ since $\sigma, h,\widehat{\xi}$
are $\filt$-predictable. Since the correspondence $\Theta$ is $\filt$-predictable then by measurable maximum and measurable selection results \citep[][Thms 2K and 1.C]{Rockafellar} one has for each $(z,a)\in \R^n\times\mathbb{S}^{>0}_n$
that $F(\cdot,\cdot,z,a)$ is $\filt$-predictable (hence $\filt$-progressive) and there exists $\theta^{z,a}\in\Theta$ such that $F(t,\omega,z,a) = F^{\theta^{z,a}(t,\omega)}(t,\omega,z,a)=\inf_{\vartheta\in\Theta_t(\omega)} F^\vartheta(t,\omega,z,a)$ for all $(t,\omega)\in[0,T]\times\Omega$. 
Now consider $\widetilde{\Theta}:[0,T]\times\Omega\times\R^n\times\mathbb{S}^{>0}_n\rightsquigarrow \R^n$, a correspondence that is constant in its last two arguments and 
defined by $\widetilde{\Theta}_t(\omega,z,a):=\Theta_t(\omega)$
for all $(t,\omega,z,a)\in[0,T]\times\Omega\times\R^n\times\mathbb{S}^{>0}_n.$ Then $\widetilde{\Theta}$ is a $\Pred(\filt)\otimes \mathcal{B}(\R^n)\otimes\mathcal{B}(\mathbb{S}^{>0}_n)$-measurable correspondence because $\Theta$ is $\filt$-predictable,
with $\Pred(\filt)$ denoting the predictable sigma-field w.r.t.\ $\filt$. To prove joint measurability of $F$, it suffices by measurable selection arguments analogous to the ones above 
to show that the map $(t,\omega,z,a)\mapsto \inf_{\vartheta\in\widetilde{\Theta}_t(\omega,z,a)} F^\vartheta(t,\omega,z,a)$ is $\mathcal{B}([0,T])\otimes\F_T\otimes\mathcal{B}(\R^n)\otimes\mathcal{B}(\mathbb{S}^{>0}_n)$-measurable.
Showing that $(t,\omega,z,a,\vartheta)\mapsto F^\vartheta(t,\omega,z,a)$ is a Carath\'eodory function would imply the result by \citep[][Thm.2K]{Rockafellar}. 
As $\vartheta\mapsto F^\vartheta(t,\omega,z,a)$ is continuous for each $(t,\omega,z,a)$, it remains only to show that $(t,\omega,z,a)\mapsto F^\vartheta(t,\omega,z,a)$ is jointly measurable for each $\vartheta$.
First note that $F^\vartheta(\cdot,\cdot,z,a)$ is $\mathcal{B}([0,T])\otimes\F_T$-measurable for each $(z,a,\vartheta)\in\R^n\times\mathbb{S}^{>0}_n\times\R^n$ since it is $\filt$-predictable. 
Moreover for any $x\in\R^{n\times k},y\in\R^{n\times p}$ (with $k,p\in\N$) the function $\R^{n\times n}\ni a\mapsto x^{\text{tr}}ay\in\R^{k\times p}$ is continuous. Likewise $a\mapsto a^{-1}$ and $a\mapsto a^{1/2}$ are continuous over $\mathbb{S}^{>0}_n$,
the former being a consequence of the continuity of the determinant operator. 
Hence from (\ref{eq:GeneratorFunctionFthetaExplicit}) one infers that the map $\mathbb{S}^{>0}_n\ni a\mapsto F^\vartheta(t,\omega,z,a)$ is continuous. Overall because $F^\vartheta$ 
is Lipschitz in $z$ uniformly in $(t,\omega,a)$ and continuous in $a$, it is continuous in $(z,a)$. As a Carath\'eodory function, 
$(t,\omega,z,a)\mapsto F^\vartheta(t,\omega,z,a)$ is jointly measurable and this concludes 
joint measurability of $F$. Overall we have shown that $F$ satisfies Assumption \ref{asp:Assumption1}. Hence the first claim of the theorem follows by Part 1.\ of Proposition \ref{pro:ExistenceUniquenessSol2BSDEs} while  
the second claim is obtained from Part 2.\ of Proposition \ref{pro:ExistenceUniquenessSol2BSDEs}, recalling the expression (\ref{eq:PiuFunctionPiuPPrime}) for the good-deal bound 
and using \citep[][Thm.4.11]{BechererKentiaTonleu}. 
\end{proof}
\begin{remark}\phantomsection{}\label{rem:DynProgPrinciple}
  1.\  
In a situation with only volatility uncertainty and zero drift, one could apply instead of \citet{PossamaiTanZhou} an earlier but less general 
  wellposedness result by \citet{SonerTouziZhang-Wellposedness} to the 2BSDE (\ref{eq:2BSDEpiU}),
  as noted in \citep[][Thm.4.20]{KentiaPhD}. 
 The latter requires (besides some continuity of $X$) the generator function $F$ to be convex (in $a$) and uniformly continuous (UC)
  in $\omega$, which would hold under UC-assumptions on $\sigma,h$. However as soon as one considers non-zero drift (let alone drift uncertainty), convexity of $F$ from our application is 
  no longer clear. Uniform continuity for $F$ given by (\ref{eq:Generator2BSDEPiU}) in the present situation with combined uncertainties is neither. E.g., a sufficient condition for it would be that the family 
  $(F^\theta)_{\theta\in\Theta}$ is equicontinuous (in $\omega$), what however seems restrictive.
  
  2.\ Theorem \ref{thm:2BSDECHaractforPiU} shows in particular that the family of essential suprema in (\ref{eq:DefGDBounds}) indexed by the measures $P\in\Pred_{[\underline{a},\overline{a}]}$ effectively aggregates into a single 
  process $\pi^u_\cdot(X)\in \mathbb{D}^2\big(\filt^{\Pred_{[\underline{a},\overline{a}]}}\big)$ for Borel-measurable claims $X\in \mathbb{L}^2(\filt_+)$. In this case $\pi^u_\cdot(X)$
  is $\filt^{\Pred_{[\underline{a},\overline{a}]}}$-progressive, and therefore $\pi^u_0(X)$ is $\F^{\Pred_{[\underline{a},\overline{a}]}}_0-$measurable. 
   Hence $\pi^u_0(X)$ is deterministic constant by the zero-one law of Lemma \ref{lem:QSAggregPS}, with $\pi^u_0(X) = \sup_{P\in \Pred_{[\underline{a},\overline{a}]}}\pi^{u,P}_0(X)$.

  3.\  By Proposition \ref{pro:ExistenceUniquenessSol2BSDEs}, the good-deal bound $\pi^u_\cdot(\cdot)$ satisfies a dynamic programing principle  for each Borel-measurable claim $X$ in $\mathbb{L}^2(\filt_+)$: 
   for any $P\in\Pred_{[\underline{a},\overline{a}]}$ holds
  \begin{equation*}\displaystyle \pi^u_s(X)= \esssup^{\qquad\quad P}_{P'\in\Pred_{[\underline{a},\overline{a}]}(s,P,\filt_+)} \pi^{u,P'}_s(\pi^u_t(X))
             = \pi^u_s(\pi^u_t(X)),\ s\le t\le T,\ P\text{-a.s.}.
   \end{equation*}
   
  4.\ Like in \citep[][Thm.2.7]{KloppelSchweizer} 
  (cf.\ \citealt[][Prop.2.6]{Becherer-Good-Deals} or \citealt[][Lem.2.1, Part a)]{BechererKentiaTonleu}) 
  one can show that  $(t,X)\mapsto\pi^u_t(X)$ has the properties of a time-consistent dynamic coherent risk measure. 
\end{remark}
\subsection{Robust good-deal hedging}\label{sec:HedgingStrat2BSDE}
We now investigate robust good-deal hedging strategies relying on the 2BSDE theory of Section \ref{sec:2BSDEs}. 
By (\ref{eq:DefRhoFirst}) we have for $X\in L^2_{\Pred_{[\underline{a},\overline{a}]}}$ that 
\begin{equation}\label{eq:EqRho}
 \rho_t(X)= \esssup^{\qquad\quad P}_{P'\in\Pred_{[\underline{a},\overline{a}]}(t,P,\filt_+)} \rho^{P'}_t(X),\ t\in[0,T],\ P\text{-a.s.\  for all }P\in\Pred_{[\underline{a},\overline{a}]},
\end{equation}
\begin{equation*}
\text {where } \quad \rho^{P}_t(X):=\esssup^{\qquad\quad P}_{\theta\in\Theta}\rho^{P,\theta}_t(X),\ t\in[0,T].
\end{equation*}
As the set $\mathcal{P}^{\text{ngd}}(Q^{P,\theta})$ is m-stable and convex for each $P\in\Pred_{[\underline{a},\overline{a}]},\ \theta\in\Theta$, then by \citep[][Lem.4.9]{BechererKentiaTonleu} the union $\bigcup_{\theta\in\Theta}\mathcal{P}^{\text{ngd}}(Q^{P,\theta})$ is also m-stable and convex. 
Thus the dynamic coherent risk measure $\rho^P:L^2(P)\to L^2(P,\F_t)$ is time-consistent \citep[see e.g.][Lem.2.1]{BechererKentiaTonleu}, with $\rho^{P}_t(X)\ge\pi^{u,P}_t(X),\ t\in[0,T],\ P$-a.s.\ for all $X\in L^2(P,\F_T)$.
Consider the generator function $F':[0,T]\times\Omega\times\R^n\times\mathbb{S}^{>0}_n\rightarrow \R$ defined by 
\begin{equation}\label{eq:DefforRho}
F'(t,\omega,z,a) := -(h_t+\delta_t)\lvert z\rvert,\quad \text{for all } (t,\omega,z,a)\in[0,T]\times\Omega\times\R^n\times\mathbb{S}^{>0}_n, 
\end{equation}
and the associated 2BSDE
\begin{equation}\label{eq:2BSDEYPRime}
	  Y'_t = X-\phantom{}^{(P)}\hspace{-0.1cm}\int_t^T{Z'}_s^{\text{tr}}dB_s - \int_t^T\widehat{F}'_s(\widehat{a}^{1/2}_sZ'_s)ds +K'^{P}_T-K'^{P}_t,\ t\in[0,T],\ \Pred_{[\underline{a},\overline{a}]}\text{-q.s.}.
\end{equation}
Proposition \ref{Pro:RepresentationRhoX} below gives a 2BSDE description of $\rho_\cdot(X)$ that, similar to parts 3.-4.\ of Remark \ref{rem:DynProgPrinciple}  
for $\pi^u_\cdot(\cdot)$, implies that $\rho$ defines a dynamic risk measure (analogous to $\pi^u_\cdot(\cdot)$) that is time-consistent over 
all $\F_T$-measurable $X$ in $\mathbb{L}^2(\filt_+)$. The proof is analogous to that of Theorem \ref{thm:2BSDECHaractforPiU}, and its details are included in the appendix. 
\begin{proposition}\phantomsection{}\label{Pro:RepresentationRhoX}
Let $X$ be a $\F_T$-measurable claim in $\mathbb{L}^2(\filt_+)$. Then there exists a unique solution $(Y',Z',(K'^{P})_{P\in\Pred_{[\underline{a},\overline{a}]}})\in \mathbb{D}^2\big(\filt^{\Pred_{[\underline{a},\overline{a}]}}\big)\times \mathbb{H}^2\big(\filt^{\Pred_{[\underline{a},\overline{a}]}}\big)\times \mathbb{I}^2\big(\big(\filt^P\big)_{P\in\Pred_{[\underline{a},\overline{a}]}}\big)$ to the 2BSDE (\ref{eq:2BSDEYPRime}).
Moreover the process $Y'$ uniquely satisfies (\ref{eq:EqRho}) and for all $P\in\Pred_{[\underline{a},\overline{a}]}$ holds $\rho_t(X)=Y'_t,\ t\in[0,T],\ P\text{-a.s.}$.
\end{proposition}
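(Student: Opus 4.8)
The plan is to follow closely the proof of Theorem~\ref{thm:2BSDECHaractforPiU}, exploiting that the generator $F'$ from (\ref{eq:DefforRho}) has the simple radial form $-(h_t+\delta_t)\lvert z\rvert$. First I would verify that the pair $(F',X)$ satisfies Assumption~\ref{asp:Assumption1}, so that Part~1 of Proposition~\ref{pro:ExistenceUniquenessSol2BSDEs} delivers the unique solution triple $(Y',Z',(K'^{P})_{P\in\Pred_{[\underline{a},\overline{a}]}})$ to the 2BSDE~(\ref{eq:2BSDEYPRime}) in the asserted spaces. Then I would invoke Part~2 of that proposition together with the representation (\ref{eq:EqRho}) of $\rho_\cdot(X)$ to identify $Y'$ with $\rho_\cdot(X)$.

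Checking Assumption~\ref{asp:Assumption1} is markedly easier here than for the generator $F=\inf_\theta F^\theta$ of Theorem~\ref{thm:2BSDECHaractforPiU}. Condition~(i) holds by hypothesis. Since $F'_t(\omega,z,a)=-(h_t(\omega)+\delta_t(\omega))\lvert z\rvert$ is independent of $a$ and depends on $(t,\omega)$ only through the bounded $\filt$-predictable processes $h$ and $\delta$, joint Borel-measurability and $\filt$-progressivity (condition~(ii)) are immediate, and no measurable-selection argument is needed. For condition~(iii), $F'$ does not depend on $y$, and the reverse triangle inequality gives $\lvert F'_t(\omega,z,a)-F'_t(\omega,z',a)\rvert=(h_t+\delta_t)\big\lvert\lvert z\rvert-\lvert z'\rvert\big\rvert\le C\lvert z-z'\rvert$ with $C$ a uniform bound on $h+\delta$. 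Finally $\widehat{F}'$ vanishes at $z=0$, so condition~(iv) is trivial.

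The substantive step is to identify, for each $P\in\Pred_{[\underline{a},\overline{a}]}$, the value process of the standard BSDE~(\ref{eq:BSDEStandard}) under $P$ with generator $-\widehat{F}'$ as $\rho^{P}_\cdot(X)$. The key observation, which is exactly what makes the simple generator $F'$ (rather than the projection-structured $F$ of (\ref{eq:Generator2BSDEPiU})) the correct one, is that the measures in $\bigcup_{\theta\in\Theta}\mathcal{P}^{\text{ngd}}(Q^{P,\theta})$ are \emph{not} constrained to be local martingale measures, so their Girsanov kernels relative to $W^{P}$ fill out the entire ball $\{\lvert\mu\rvert\le h+\delta\}$: chaining the changes of measure $P\to Q^{P,\theta}\to Q$ adds the kernels $\theta$ (with $\lvert\theta\rvert\le\delta$) and $\lambda$ (with $\lvert\lambda\rvert\le h$), while conversely any $\mu$ with $\lvert\mu\rvert\le h+\delta$ splits as $\mu=\tfrac{\delta}{h+\delta}\mu+\tfrac{h}{h+\delta}\mu$. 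Hence $\rho^{P}_t(X)=\esssup^{P}_{\lvert\mu\rvert\le h+\delta}E^{Q^\mu}_t[X]$ with $dQ^\mu/dP=\mathcal{E}(\mu\cdot W^{P})$. Rewriting the BSDE under each such $Q^\mu$, the $ds$-integrand becomes $(h_s+\delta_s)\lvert\widehat{a}^{1/2}_s\mathcal{Z}_s\rvert-(\widehat{a}^{1/2}_s\mathcal{Z}_s)^{\text{tr}}\mu_s$, which is nonnegative by Cauchy--Schwarz and vanishes for the maximizing kernel $\mu_s=(h_s+\delta_s)\widehat{a}^{1/2}_s\mathcal{Z}_s/\lvert\widehat{a}^{1/2}_s\mathcal{Z}_s\rvert$ (of norm $h_s+\delta_s$, hence admissible); the standard BSDE-duality and comparison argument, as in \citet{BechererKentiaTonleu}, then yields that this value process equals $\rho^{P}_\cdot(X)$.

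Combining this identification with Part~2 of Proposition~\ref{pro:ExistenceUniquenessSol2BSDEs}, which represents $Y'$ as the essential supremum over priors $P'\in\Pred_{[\underline{a},\overline{a}]}(s,P,\filt_+)$ of the corresponding standard-BSDE value processes, and comparing with the definition (\ref{eq:EqRho}) of $\rho_\cdot(X)$ as exactly that essential supremum, gives $Y'=\rho_\cdot(X)$ together with the claimed uniqueness. I expect the main obstacle to lie in the identification of the previous paragraph: carefully justifying that the union of kernel sets is the full ball $\{\lvert\mu\rvert\le h+\delta\}$, including the requisite measurable selection of the optimizing kernel so that the essential supremum is attained within $\bigcup_{\theta\in\Theta}\mathcal{P}^{\text{ngd}}(Q^{P,\theta})$, and verifying that the convex, positively homogeneous generator $(h+\delta)\lvert z\rvert$ reproduces this robust representation through the BSDE duality. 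The measurability and wellposedness verifications, by contrast, are routine given the simple form of $F'$.
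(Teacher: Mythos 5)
Your proposal is correct and takes essentially the same route as the paper's proof: verify Assumption~\ref{asp:Assumption1} for $(F',X)$ (as in the proof of Theorem~\ref{thm:2BSDECHaractforPiU}), apply Proposition~\ref{pro:ExistenceUniquenessSol2BSDEs}, and identify the value process of the standard BSDE under each $P$ with $\rho^{P}_\cdot(X)$. Your kernel-ball argument (chaining and splitting $\mu=\lambda+\theta$ with $\lvert\lambda\rvert\le h$, $\lvert\theta\rvert\le\delta$, plus the explicit maximizing kernel) simply spells out what the paper compresses into ``by the classical comparison theorem for standard BSDEs, one easily sees that $\widetilde{\mathcal{Y}}^{P,X}=\rho^{P}_\cdot(X)$.''
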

Note from results on good-deal valuation and hedging in the presence solely of drift uncertainty \citep[cf.\ e.g.][Prop.4.12 and Thm.4.13]{BechererKentiaTonleu}, that 
for any $P\in\Pred_{[\underline{a},\overline{a}]},\ P'\in\Pred_{[\underline{a},\overline{a}]}(t,P,\filt_+)$ and $X\in L^2(\F_T^P, P)$ one has, $\ P\text{-a.s.}$  on $\ t\in[0,T]$,
 \begin{equation*}
  \pi^{u,P'}_t(X) =  \essinf^{\qquad\quad P}_{\phi\in\Phi(P')}\rho^{P'}_t\Big( X-\phantom{}^{(P')}\hspace{-0.1cm}{\int_t^T}\phi_s^{\text{tr}}\big(\widehat{a}^{1/2}_s\widehat{\xi}_sds+dB_s\big)\Big).
 \end{equation*}
Hence from (\ref{eq:PiuFunctionPiuPPrime}) and for $\F_T$-measurable claims $X$ in $\mathbb{L}^2(\filt_+)$ holds for all $P\in\Pred_{[\underline{a},\overline{a}]}$, $P$-almost surely, 
\begin{equation}\label{eq:PiUStandard}
 \pi^u_t(X)= \esssup^{\qquad\quad P}_{P'\in\Pred_{[\underline{a},\overline{a}]}(t,P,\filt_+)} \essinf^{\qquad\quad P}_{\phi\in\Phi(P')}\rho^{P'}_t\Big( X-\phantom{}^{(P')}\hspace{-0.1cm}\int_t^T\phi_s^{\text{tr}}\big(\widehat{a}^{1/2}_s\widehat{\xi}_sds+dB_s\big)\Big),\ t\in[0,T].
\end{equation}
Furthermore one can infer from \citep[][Thms.\ 4.11, 4.13]{BechererKentiaTonleu} the existence of  
a family $\left\lbrace \bar{\phi}^P\in\Phi(P),\ P\in\Pred_{[\underline{a},\overline{a}]}\right\rbrace$ of trading strategies satisfying for each $P\in\Pred_{[\underline{a},\overline{a}]},$
\begin{equation}\label{eq:KindOfSolHedgProb}
 \pi^u_t(X)= \esssup^{\qquad\quad P}_{P'\in\Pred_{[\underline{a},\overline{a}]}(t,P,\filt_+)} \rho^{P'}_t\Big( X-\phantom{}^{(P')}\hspace{-0.1cm}\int_t^T(\bar{\phi}^{P'}_s)^{\text{tr}}\big(\widehat{a}^{1/2}_s\widehat{\xi}_sds+dB_s\big)\Big),\ t\in[0,T],\ P\text{-a.s.}.
\end{equation}
In addition $\bar{\phi}^P$ is given for each $P\in\Pred_{[\underline{a},\overline{a}]}$ by
\begin{equation*}
 \widehat{a}^{\frac{1}{2}}_t\,\bar{\phi}^P_t = \widehat{\Pi}_t\big(\widehat{a}_t^{\frac{1}{2}}\mathcal{Z}^{P,X}_t\big)+ \frac{\big\lvert \Pi^\bot_t\big(\widehat{a}_t^{\frac{1}{2}}\mathcal{Z}^{P,X}_t\big)\big\rvert}{\sqrt{h^2_t-\big\lvert\widehat{\xi}_t+\widehat{\Pi}_t\big(\bar{\theta}^P_t\big)\big\rvert^2}}\Big(\widehat{\xi}_t+\widehat{\Pi}_t\big(\bar{\theta}^P_t\big)\Big),\ P\otimes dt\text{-a.e.}, 
\end{equation*}
where $(\mathcal{Y}^{P,X},\mathcal{Z}^{P,X})$ with $\mathcal{Y}^{P,X} =\pi^{u,P}_\cdot(X)$ is the solution to the standard BSDE under $P\in\Pred_{[\underline{a},\overline{a}]}$ with 
data $(-\widehat{F}(\widehat{a}^{\frac{1}{2}}\cdot),X)$ for $F$ defined in (\ref{eq:Generator2BSDEPiU}), and $\bar{\theta}^P$ is a $\filt^P$-predictable selection of $\Theta$ satisfying  
\begin{equation*}
\widehat{F}_t(\widehat{a}_t^{\frac{1}{2}}\mathcal{Z}^{P,X}_t) = \widehat{F}^{\bar{\theta}^P}_t(\widehat{a}_t^{\frac{1}{2}}\mathcal{Z}^{P,X}_t)=F^{\bar{\theta}^P}(t,\widehat{a}_t^{\frac{1}{2}}\mathcal{Z}^{P,X}_t,\widehat{a}_t)\quad P\otimes dt\text{-a.e.},
\end{equation*}
with $F^{\bar{\theta}^P}$ given as in (\ref{eq:GeneratorFunctionFtheta}) for $\theta=\bar{\theta}^P$. 
\begin{remark}
In the case of no volatility uncertainty \citep[as studied in][]{BechererKentiaTonleu}, i.e.\ for $\Pred_{[\underline{a},\overline{a}]}=\{P^0\}$ with $\underline{a}=\overline{a}=I_{n\times n}$,
the strategy $\bar{\phi}^{P^0}(X)$ for a claim $X$ in $L^2_{\Pred_{[\underline{a},\overline{a}]}}(\F_T)=\mathbb{L}^2(\filt_+)=L^2(\F_T,P^0)$ would be $\Pred_{[\underline{a},\overline{a}]}$-permitted and hence already the robust good-deal hedging strategy (w.r.t.\ drift uncertainty solely) 
for robust valuation $\pi^u_\cdot(X)=\pi^{u,P^0}_\cdot(X)$ and risk measure $\rho=\rho^{P^0}$. In the presence of volatility uncertainty however, the situation is more complicated because each 
strategy $\bar{\phi}^P$ and risk measure $\rho^P$ may be defined only up to a null-set of each (non-dominated) measure $P\in\Pred_{[\underline{a},\overline{a}]}$.
\end{remark}
Since we are looking for a single process $\bar{\phi}\in\Phi$ solution to the hedging problem  (\ref{eq:GD-hedging-Problem}), one way is to 
investigate appropriate conditions under which the family $\{\bar{\phi}^P,\ P\in\Pred_{[\underline{a},\overline{a}]}\}$ can be aggregated into a single strategy $\bar{\phi}\in\Phi$, i.e.\ with $\bar{\phi} = \bar{\phi}^P\ P\otimes dt\text{-a.e.}$, for all $P\in\Pred_{[\underline{a},\overline{a}]}$.
Were this possible, (\ref{eq:KindOfSolHedgProb}) would be written for any $P\in\Pred_{[\underline{a},\overline{a}]}$ as
\begin{align*}
 \pi^u_t(X)&= \esssup^{\qquad\quad P}_{P'\in\Pred_{[\underline{a},\overline{a}]}(t,P,\filt_+)} \rho^{P'}_t\Big( X-\int_t^T\bar{\phi}_s^{\text{tr}}\big(\widehat{a}^{1/2}_s\widehat{\xi}_sds+dB_s\big)\Big),\ t\in[0,T].\ P\text{-a.s.} \\
           &= \rho_t\Big( X-\int_t^T\bar{\phi}_s^{\text{tr}}\big(\widehat{a}^{1/2}_s\widehat{\xi}_sds+dB_s\big)\Big),\ t\in[0,T],\ P\text{-a.s.},
\end{align*}
such that the aggregate $\bar{\phi}$ would readily satisfy (\ref{eq:GD-hedging-Problem}) since $\pi^u_\cdot(\cdot)\le \rho_\cdot(\cdot)$.
However, since general conditions for aggregation are somewhat restrictive and highly technical \citep[see e.g.][]{SonerTouziZhang-Aggregation}, we shall abstain from following this path and will directly
show that a suitable strategy described in terms of the $Z$-component of a 2BSDE (cf.\ (\ref{eq:DefBarPhi}) below) solves the good-deal hedging problem.
In the simpler framework of drift uncertainty only, there exists a 
worst-case measure $\bar{P}\in \mathcal{R}$ such that $\pi^u_\cdot(\cdot)=\pi^{u,\bar{P}}_\cdot(\cdot)$ holds and the good-deal
hedging strategy $\bar{\phi}^{\bar{P}}$ in the model $\bar{P}$ is robust w.r.t.\ (drift) uncertainty. This has been shown in 
\citet{BechererKentiaTonleu} by  
first considering an auxiliary (larger) valuation bound for which the associated hedging strategy automatically satisfies the robustness property
and then by a saddle point result \citep[cf.][Thm.4.13]{BechererKentiaTonleu} by identifying the auxiliary bound and hedging strategy with the standard good-deal bound and hedging 
strategy respectively. 
In the present non-dominated framework, however, it is questionable but indeed rather unlikely that such a worst-case measure $\bar{P}$ exists in $\mathcal{R}$ for general claims. 
Instead, we shall follow a slightly different approach and show in a more straightforward manner using Lemma \ref{lem:SaddePointOptPb} that the candidate strategy in (\ref{eq:DefBarPhi}) 
indeed satisfies the required robustness property w.r.t.\ drift and volatility uncertainty, which will then be used to recover its optimality. 
Note that our proof does not use any comparison theorem for 2BSDEs \citep[as e.g.\ in][Sect.4.1.3]{KentiaPhD}. Such would need the terminal wealths $\int_0^T\phi^{\text{tr}}_s dB_s$ 
as possible terminal conditions for 2BSDEs to be in $\mathbb{L}^2(\filt_+)$, which might not be the case for $\phi\in\Phi$.

For a Borel-measurable claim $X$ in $\mathbb{L}^2(\filt_+)$, consider the process $\bar{\phi} := \bar{\phi}(X)$ defined by 
\begin{equation}\label{eq:DefBarPhi}
 \widehat{a}^{\frac{1}{2}}_t\bar{\phi}_t := \widehat{\Pi}_t\big(\widehat{a}_t^{\frac{1}{2}}Z_t\big)+ \frac{\big\lvert \Pi^\bot_t\big(\widehat{a}_t^{\frac{1}{2}}Z_t\big)\big\rvert}{\sqrt{h^2_t-\big\lvert\widehat{\xi}_t+\widehat{\Pi}_t\big(\bar{\theta}_t\big)\big\rvert^2}}\Big(\widehat{\xi}_t+\widehat{\Pi}_t\big(\bar{\theta}_t\big)\Big),\ \Pred_{[\underline{a},\overline{a}]}\otimes dt\text{-q.e.},
\end{equation}
for the unique solution $(Y,Z,(K^P)_{P\in\Pred_{[\underline{a},\overline{a}]}})$ to the 2BSDE (\ref{eq:2BSDEpiU}), $\bar{\theta}$ being a $\filt^{\Pred_{[\underline{a},\overline{a}]}}$-predictable process with $\lvert \bar{\theta}_t(\omega)\rvert\le \delta_t(\omega)$ for all $(t,\omega)\in[0,T]\times\Omega$                                                              
and  satisfying 
\begin{equation}\label{eq:OptEquaThetaBar}
\widehat{F}_t(\widehat{a}_t^{\frac{1}{2}}Z_t) = \widehat{F}^{\bar{\theta}}_t(\widehat{a}_t^{\frac{1}{2}}Z_t)=F^{\bar{\theta}}(t,\widehat{a}_t^{\frac{1}{2}}Z_t,\widehat{a}_t),\ \Pred_{[\underline{a},\overline{a}]}\otimes dt\text{-q.e.}.
\end{equation}
Existence of $\bar{\theta}\in\Theta$ satisfying (\ref{eq:OptEquaThetaBar}) easily follows by measurable selection arguments similar to those in the proof of Theorem \ref{thm:2BSDECHaractforPiU},
using the fact that $Z$ is $\filt^{\Pred_{[\underline{a},\overline{a}]}}$-predictable, $\Theta$ is $\filt$-predictable and $\Pred(\filt)\subseteq \Pred\big(\filt^{\Pred_{[\underline{a},\overline{a}]}}\big).$
The following result shows that $\bar{\phi}$ given by (\ref{eq:DefBarPhi}) is indeed a robust good-deal hedging strategy if the family $\big\lbrace\phantom{}^{(P)}\hspace{-0.1cm}\int_0^\cdot Z_s^{\text{tr}}dB_s,\, P\in\Pred_{[\underline{a},\overline{a}]}\big\rbrace$ of ``profit$\, \&\, $loss'' processes aggregates.

\begin{theorem}[Robust good-deal hedging under combined uncertainty]\label{thm:GDHedgingTheorem}
	For an $\F_T$-measurable $X$ in $\mathbb{L}^2(\filt_+)$, let $(Y,Z,(K^P)_{P\in\Pred_{[\underline{a},\overline{a}]}})$ be the unique solution in $\mathbb{D}^2\big(\filt^{\Pred_{[\underline{a},\overline{a}]}}\big)\times \mathbb{H}^2\big(\filt^{\Pred_{[\underline{a},\overline{a}]}}\big)\times \mathbb{I}^2\big(\big(\filt^P\big)_{P\in\Pred_{[\underline{a},\overline{a}]}}\big)$ 
to the 2BSDE (\ref{eq:2BSDEpiU}) 
	and assume that the family $\big\lbrace\phantom{}^{(P)}\hspace{-0.1cm}\int_0^\cdot Z_s^{\text{tr}}dB_s,\, P\in\Pred_{[\underline{a},\overline{a}]}\big\rbrace$ 
	aggregates into one process $\int_0^\cdot Z_s^{\text{tr}}dB_s$. Then:
	
	1.\ The process $\bar{\phi}=\bar{\phi}(X)$ from (\ref{eq:DefBarPhi}) is in $\Phi$ and solves the good-deal hedging problem  (\ref{eq:GD-hedging-Problem}).
	
	2.\  The tracking error $R^{\bar{\phi}}(X)$ of the  good-deal hedging strategy $\bar{\phi}$  (cf. \eqref{eq:GD-tracking-error})  is a 
$(\filt^P,Q)$-supermartingale under any $Q\in\mathcal{P}^{\text{ngd}}\big(Q^{P,\theta}\big)$,
 for all $P\in\Pred_{[\underline{a},\overline{a}]}$, $\theta\in\Theta$.
\end{theorem}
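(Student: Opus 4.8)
The plan is to establish the membership $\bar\phi\in\Phi$ and the existence of the selector $\bar\theta$ first, then to prove the robustness claim (Part 2), and finally to deduce the optimality claim (Part 1) from robustness together with the lower bound (\ref{eq:PiUStandard}). A $\filt^{\Pred_{[\underline{a},\overline{a}]}}$-predictable $\bar\theta$ satisfying (\ref{eq:OptEquaThetaBar}) is obtained by the measurable-selection argument already used in the proof of Theorem \ref{thm:2BSDECHaractforPiU}, applied to the pointwise minimization of $\vartheta\mapsto F^\vartheta(t,\widehat{a}_t^{1/2}Z_t,\widehat{a}_t)$ over $\{\lvert\vartheta\rvert\le\delta_t\}$. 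For $\bar\phi\in\Phi$ I would verify its $\filt^{\Pred_{[\underline{a},\overline{a}]}}$-predictability, the pointwise inclusion $\bar\phi_t(\omega)\in\mathrm{Im}\,\sigma_t^{\text{tr}}(\omega)$ (because $\widehat{a}^{1/2}\bar\phi$ lies in $\mathrm{Im}\,(\sigma\widehat{a}^{1/2})^{\text{tr}}$, using that $\widehat{\xi}$ does), square-integrability $\bar\phi\in\mathbb{H}^2$ (the scalar coefficient in (\ref{eq:DefBarPhi}) being bounded by (\ref{eq:AspOnxitheta}) and the uniform bounds, so that $\lvert\widehat{a}^{1/2}\bar\phi\rvert\le C\lvert\widehat{a}^{1/2}Z\rvert$ with $Z\in\mathbb{H}^2$), and finally the aggregation of its profit\&loss, which follows from the standing hypothesis that $\int_0^\cdot Z_s^{\text{tr}}dB_s$ aggregates together with condition (\ref{eq:AggregCond}) for $\int_0^\cdot\sigma_sdB_s$, since $\bar\phi$ takes values in $\mathrm{Im}\,\sigma^{\text{tr}}$. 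Throughout, (\ref{eq:DefBarPhi}) is read with the convention that the second summand vanishes on $\{\widehat{\Pi}^\bot(\widehat{a}^{1/2}Z)=0\}$, which also disposes of the negligible set where the radicand could degenerate.

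For Part 2, fix $P\in\Pred_{[\underline{a},\overline{a}]}$, $\theta\in\Theta$ and $Q\in\mathcal{P}^{\text{ngd}}(Q^{P,\theta})$ with kernel $\lambda$, $\lvert\lambda\rvert\le h$. Writing $\zeta:=\widehat{a}^{1/2}Z$ and $\psi:=\widehat{a}^{1/2}\bar\phi$, I would use $\pi^u=Y$, the forward form of the 2BSDE (\ref{eq:2BSDEpiU}) and the $Q$-dynamics $dB_s=\widehat{a}_s^{1/2}(\theta_s+\lambda_s)\,ds+\widehat{a}_s^{1/2}dW^Q_s$ to obtain
\[ R^{\bar\phi}_t=\int_0^t(\zeta_s-\psi_s)^{\text{tr}}\widehat{a}_s^{1/2}\,dW^Q_s+\int_0^t D_s\,ds-K^P_t,\quad D_s:=(\zeta_s-\psi_s)^{\text{tr}}(\theta_s+\lambda_s)+\widehat{F}_s(\zeta_s)-\psi_s^{\text{tr}}\widehat{\xi}_s. \]
As $K^P$ is nondecreasing and the stochastic integral is a $Q$-local martingale, it suffices to show $D_s\le0$ holds $\Pred_{[\underline{a},\overline{a}]}\otimes dt$-q.e. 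Maximizing over $\lvert\lambda\rvert\le h$ gives $\sup_\lambda D_s=(\zeta-\psi)^{\text{tr}}\theta+h\lvert\zeta-\psi\rvert+\widehat{F}_s(\zeta)-\psi^{\text{tr}}\widehat{\xi}$. The computational core is the identity (obtained by expanding the projections and using $\widehat{F}_s(\zeta)=\widehat{F}^{\bar\theta}_s(\zeta)$ from (\ref{eq:OptEquaThetaBar}))
\[ (\zeta-\psi)^{\text{tr}}\bar\theta+h\lvert\zeta-\psi\rvert+\widehat{F}_s(\zeta)-\psi^{\text{tr}}\widehat{\xi}=0, \]
which is exactly the tight supermartingale identity of the non-robust problem evaluated at the worst-case drift $\bar\theta$. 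Subtracting yields $\sup_\lambda D_s=(\zeta-\psi)^{\text{tr}}(\theta-\bar\theta)$. I would close this with the clean identity $\zeta-\psi=-\nabla_\vartheta\widehat{F}^{\vartheta}_s(\zeta)\big|_{\vartheta=\bar\theta}$ (its $\mathrm{Ker}$- and $\mathrm{Im}$-components being $\widehat{\Pi}^\bot(\zeta)$ and the second summand of (\ref{eq:DefBarPhi}) with a minus sign), whence $\sup_\lambda D_s=-\langle\nabla_\vartheta\widehat{F}^{\bar\theta}_s(\zeta),\theta-\bar\theta\rangle\le0$ by the first-order optimality condition for the constrained minimizer $\bar\theta$ of the convex map $\vartheta\mapsto\widehat{F}^\vartheta_s(\zeta)$ over the ball $\{\lvert\vartheta\rvert\le\delta_s\}$; convexity holds since $\vartheta\mapsto-(h^2-\lvert\widehat{\xi}+\widehat{\Pi}(\vartheta)\rvert^2)^{1/2}$ is convex. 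This pointwise sign, valid q.e. uniformly in $\theta$, is where the argument hinges and is the main obstacle.

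Having $D_s\le0$, the martingale part is a true $Q$-martingale ($\zeta,\psi\in\mathbb{H}^2$ and $dQ/dQ^{P,\theta}$ has all moments by boundedness of $\lambda$ and $\theta$, and $K^P\in\mathbb{I}^2$ transfers to $Q$-integrability), so $R^{\bar\phi}(X)$ is a genuine $(\filt^P,Q)$-supermartingale, which is Part 2. For Part 1, Part 2 gives $E^Q_t[R^{\bar\phi}_T-R^{\bar\phi}_t]\le0$ for every such $Q$. Since $\pi^u_T=X$, identity (\ref{eq:GD-tracking-error}) reads $X-\int_t^T\bar\phi_s^{\text{tr}}(\widehat{a}_s^{1/2}\widehat{\xi}_s\,ds+dB_s)=\pi^u_t(X)+(R^{\bar\phi}_T-R^{\bar\phi}_t)$; using cash-invariance of the coherent risk measure $\rho_t$ and its representation $\rho_t(\cdot)=\esssup_{P'}\esssup_{\theta}\esssup_{Q\in\mathcal{P}^{\text{ngd}}(Q^{P',\theta})}E^Q_t[\cdot]$ from (\ref{eq:EqRho}) and (\ref{eq:DefRhoPtheta}), I obtain
\[ \rho_t\Big(X-\int_t^T\bar\phi_s^{\text{tr}}(\widehat{a}_s^{1/2}\widehat{\xi}_s\,ds+dB_s)\Big)=\pi^u_t(X)+\rho_t(R^{\bar\phi}_T-R^{\bar\phi}_t)\le\pi^u_t(X). \]
Conversely, for any $\phi\in\Phi\subseteq\Phi(P')$, monotonicity of $\esssup_{P'}$ and (\ref{eq:PiUStandard}) give $\rho_t(X-\int_t^T\phi_s^{\text{tr}}(\cdots))=\esssup_{P'}\rho^{P'}_t(\cdots)\ge\esssup_{P'}\essinf_{\tilde\phi\in\Phi(P')}\rho^{P'}_t(\cdots)=\pi^u_t(X)$. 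Combining the two displays, $\pi^u_t(X)\le\essinf_{\phi\in\Phi}\rho_t(X-\int_t^T\phi(\cdots))\le\rho_t(X-\int_t^T\bar\phi(\cdots))\le\pi^u_t(X)$, so equality holds throughout and $\bar\phi$ solves (\ref{eq:GD-hedging-Problem}).
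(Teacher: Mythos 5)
Your proof is correct and follows the paper's architecture: you establish the supermartingale property (Part 2) first, from the forward form of the 2BSDE (\ref{eq:2BSDEpiU}) and a Girsanov change of measure with kernel $\lambda+\theta$, reduce everything to the sign of the drift $D_s$ after maximizing over $\lvert\lambda\rvert\le h$, and then deduce Part 1 by combining the resulting conditional bound with (\ref{eq:PiUStandard}), exactly as the paper does. Where you genuinely diverge is at the pointwise core: the paper funnels both the tight identity $\widehat{F}_t(\widehat{a}^{1/2}_tZ_t)=\bar{\phi}_t^{\text{tr}}\widehat{a}^{1/2}_t\widehat{\xi}_t-\bar{\theta}_t^{\text{tr}}\widehat{a}^{1/2}_t(Z_t-\bar{\phi}_t)-h_t\lvert\widehat{a}^{1/2}_t(Z_t-\bar{\phi}_t)\rvert$ and the inequality (\ref{eq:ForFinVarDec}) through Lemma \ref{lem:SaddePointOptPb}, whose Part 3 rests on a minmax theorem (Ekeland--Temam) together with a coercivity argument using $0\in\Theta$. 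You instead verify the identity by direct computation and replace Part 3 by the first-order variational inequality for the constrained minimizer $\bar{\theta}$ of $\vartheta\mapsto F^{\vartheta}(t,\widehat{a}^{1/2}Z,\widehat{a})$ over the ball, via the identity $\widehat{a}^{1/2}(Z-\bar{\phi})=-\nabla_\vartheta F^{\vartheta}\big\vert_{\vartheta=\bar{\theta}}$. I checked both ingredients: the gradient identity holds (the $\mathrm{Ker}$- and $\mathrm{Im}$-components come out as you say), and convexity follows since $u\mapsto-(h^2-\lvert u\rvert^2)^{1/2}$ is convex and $\vartheta\mapsto\widehat{\xi}+\widehat{\Pi}(\vartheta)$ is affine. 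This is a legitimate, more elementary and self-contained substitute for the saddle-point lemma; what the paper's lemma buys in exchange is the explicit saddle-point structure (uniqueness of the $\phi$-component, the minmax identity of Remark \ref{rem:ABoutAggregationandRObustness}, Part 2), which your route does not exhibit but which is not needed for the theorem. Both routes equally require the nondegeneracy $\lvert\widehat{\xi}+\widehat{\Pi}(\bar{\theta})\rvert<h$ for (\ref{eq:DefBarPhi}) and for differentiability; the paper's lemma assumes it outright, you flag a convention, so you are at parity there.

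One caution on the membership $\bar{\phi}\in\Phi$: your stated justification for aggregation of the profit-and-loss family --- that it follows from aggregation of $\int_0^\cdot Z_s^{\text{tr}}dB_s$ together with (\ref{eq:AggregCond}) because $\bar{\phi}$ takes values in $\mathrm{Im}\,\sigma^{\text{tr}}$ --- is a non sequitur as written. Writing $\bar{\phi}^{\text{tr}}dB=\psi^{\text{tr}}\sigma\,dB$ for a predictable $\R^d$-valued $\psi$, aggregation of the integrator $\int_0^\cdot\sigma_s dB_s$ does not transfer to the integrals of an arbitrary predictable integrand against it; that transfer is precisely the aggregation problem and needs c\`adl\`ag integrands (Karandikar) or Nutz's set-theoretic construction, cf.\ Remarks \ref{rem:OnAggregationOfK}, \ref{rem:OnAggPhi} and \ref{rem:ABoutAggregationandRObustness}. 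Since $\bar{\phi}$ is built from $Z$, $\widehat{a}$, $\bar{\theta}$, $\widehat{\xi}$ through nonlinear operations, aggregation of $\int_0^\cdot Z_s^{\text{tr}}dB_s$ alone does not formally yield it either. The paper's own proof merely asserts that $\bar{\phi}$ ``clearly belongs to $\Phi$'' under the stated hypothesis, so your write-up is at the same level of rigor as the published one on this point; to make it airtight one should either assume aggregation of $\big\lbrace\int_0^\cdot\bar{\phi}_s^{\text{tr}}dB_s\big\rbrace$ as well or invoke one of the mechanisms of Remark \ref{rem:ABoutAggregationandRObustness}. The remaining verifications (predictability, $\bar{\phi}_t\in\mathrm{Im}\,\sigma_t^{\text{tr}}$, the bound $\lvert\widehat{a}^{1/2}\bar{\phi}\rvert\le C\lvert\widehat{a}^{1/2}Z\rvert$ giving $\bar{\phi}\in\mathbb{H}^2$, the integrability transfer to $Q$ via moments of the density, and the measure-transfer from $P'$ to $P$ on $\F_t^+$ in Part 1) all match the paper's argument.
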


Note that, apart from measurability and some integrability, no regularity conditions (like e.g.\ uniform continuity) are imposed on the contingent claim $X$. As the setup is non-Markovian, the 
contingent claim could clearly be path-dependent. 

\begin{proof}
We first prove the second claim, and then use it to imply the first. Note that by the condition on the integral $Z$, the strategy $\bar{\phi}$ given by (\ref{eq:DefBarPhi}) clearly belongs to $\Phi$. 
By Theorem \ref{thm:2BSDECHaractforPiU}, we know that $\pi^u_\cdot(X)=Y$ for $(Y,Z,(K^P)_{P\in\Pred_{[\underline{a},\overline{a}]}})$ solution to the 2BSDE (\ref{eq:2BSDEpiU}). 
Let $P\in\Pred_{[\underline{a},\overline{a}]}$, $\theta\in\Theta$ and $Q\in \mathcal{P}^{\text{ngd}}(Q^{P,\theta})$. Then $Q$ is equivalent to $Q^{P,\theta}$ and $dQ = \phantom{}^{(P)}\hspace{-0.05cm}\mathcal{E}(\lambda\cdot W^{P,\theta})dQ^{P,\theta}$ for $\lvert\lambda\rvert\le h\ P\otimes dt$-a.e..
The dynamics of $R^{\bar{\phi}}:=R^{\bar{\phi}}(X)$ is then given $P$-almost surely for all $t\in[0,T]$ by 
\begin{align*}
	-dR^{\bar{\phi}}_t &=  -\widehat{F}_t(\widehat{a}_t^{\frac{1}{2}}Z_t)dt -Z_t^{\text{tr}}dB_t + \bar{\phi}_t^{\text{tr}}\big(\widehat{a}^{\frac{1}{2}}_t\widehat{\xi}_tdt+dB_t\big)+dK^P_t\\
			   &=  \Big(\bar{\phi}_t^{\text{tr}}\widehat{a}^{\frac{1}{2}}_t\widehat{\xi}_t-\widehat{F}_t(\widehat{a}_t^{\frac{1}{2}}Z_t)\Big)dt -\big(Z_t-\bar{\phi}_t\big)^{\text{tr}}\widehat{a}^{\frac{1}{2}}_tdW^P_t+dK^P_t.
\end{align*}
By a change of measures from $P$ to $Q$ for the $Q$-Brownian motion $W^Q=W^P-\int_0^\cdot(\lambda_t+\theta_t)dt$ one obtains $P$-almost surely for all $t\in[0,T]$ that 
\begin{equation*}
	-dR^{\bar{\phi}}_t =  \Big(\bar{\phi}_t^{\text{tr}}\widehat{a}^{\frac{1}{2}}_t\widehat{\xi}_t-(\lambda_t+\theta_t)^{\text{tr}}\widehat{a}^{\frac{1}{2}}_t\big(Z_t-\bar{\phi}_t\big)-\widehat{F}_t(\widehat{a}_t^{\frac{1}{2}}Z_t)\Big)dt -\big(Z_t-\bar{\phi}_t\big)^{\text{tr}}\widehat{a}^{\frac{1}{2}}_tdW^Q_t+dK^P_t.
\end{equation*}
Since $\max_{\lvert \lambda\rvert\le h} \lambda^{\text{tr}}_t \widehat{a}_t^{\frac{1}{2}}(Z_t-\bar{\phi}_t)= h_t\big\lvert \widehat{a}_t^{\frac{1}{2}}(Z_t-\bar{\phi}_t)\big\rvert,\ P\otimes dt$-a.e., then
\begin{equation}\label{eq:hdomilambdalessh}
	\bar{\phi}_t^{\text{tr}}\widehat{a}^{\frac{1}{2}}_t\widehat{\xi}_t-(\lambda_t+\theta_t)^{\text{tr}}\widehat{a}^{\frac{1}{2}}_t\big(Z_t-\bar{\phi}_t\big)\ge \bar{\phi}_t^{\text{tr}}\widehat{a}^{\frac{1}{2}}_t\widehat{\xi}_t-\theta_t^{\text{tr}}\widehat{a}^{\frac{1}{2}}_t\big(Z_t-\bar{\phi}_t\big)- h_t\big\lvert\widehat{a}^{\frac{1}{2}}_t\big(Z_t-\bar{\phi}_t\big)\big\rvert,\ P\otimes dt\text{-a.e.}.
\end{equation}
In addition by Parts 1.\ and 2.\ of Lemma \ref{lem:SaddePointOptPb} one obtains from the definition (\ref{eq:Generator2BSDEPiU}) of $F$ and the expression (\ref{eq:DefBarPhi}) of $\bar{\phi}$ that 
$
\widehat{F}_t(\widehat{a}_t^{\frac{1}{2}}Z_t) = \bar{\phi}_t^{\text{tr}}\widehat{a}^{\frac{1}{2}}_t\widehat{\xi}_t-\bar{\theta}_t^{\text{tr}}\widehat{a}^{\frac{1}{2}}_t\big(Z_t-\bar{\phi}_t\big)- h_t\big\lvert\widehat{a}^{\frac{1}{2}}_t\big(Z_t-\bar{\phi}_t\big)\big\rvert,\ \Pred_{[\underline{a},\overline{a}]}\otimes dt\text{-q.e.},
$
for $\bar{\theta}\in\Theta$ satisfying (\ref{eq:OptEquaThetaBar}). As a consequence, part 3.\ of Lemma \ref{lem:SaddePointOptPb} yields 
\begin{equation}\label{eq:ForFinVarDec}
	 \bar{\phi}_t^{\text{tr}}\widehat{a}^{\frac{1}{2}}_t\widehat{\xi}_t-\theta_t^{\text{tr}}\widehat{a}^{\frac{1}{2}}_t\big(Z_t-\bar{\phi}_t\big)- h_t\big\lvert\widehat{a}^{\frac{1}{2}}_t\big(Z_t-\bar{\phi}_t\big)\big\rvert\ge \widehat{F}_t(\widehat{a}_t^{\frac{1}{2}}Z_t),\ \Pred_{[\underline{a},\overline{a}]}\otimes dt\text{-q.e.}.
\end{equation}
Hence since $K^P$ is non-decreasing, then combining (\ref{eq:hdomilambdalessh}) and (\ref{eq:ForFinVarDec}) imply that the finite variation part of the $Q$-semimartingale $R^{\bar{\phi}}$ is non-increasing.
Furthermore one has $R^{\bar{\phi}}\in\mathbb{D}^2(\filt^P,P)$ because $\pi^u_\cdot(X)\in\mathbb{D}^2\big(\filt^{\Pred_{[\underline{a},\overline{a}]}}\big)\subseteq \mathbb{D}^2(\filt^P,P)$ and $\bar{\phi}\in \Phi\subseteq \Phi(P)$. 
Now since $\lambda+\theta$ is bounded then $\frac{dQ}{dP}\big\rvert_{\F_T}$ is in $L^p(\F_T,P)$ for any $p<\infty$ and by H\"older's inequality it follows that $R^{\bar{\phi}}\in \mathbb{D}^{2-\epsilon}(\filt^P,Q)$  holds for some $\epsilon>0$.
Thus $R^{\bar{\phi}}$ is a $(\filt^P,Q)$-supermartingale.

We turn to the proof of the first claim of the theorem. To show that $\bar{\phi}$ solves the hedging problem (\ref{eq:GD-hedging-Problem}), 
let $P\in\Pred_{[\underline{a},\overline{a}]}$. Then by (\ref{eq:PiUStandard}) and $\Phi\subseteq \Phi(P')$ for all $P'\in \Pred_{[\underline{a},\overline{a}]}$ one has $P$-a.s.\ 
\begin{equation*}\begin{split}
 \pi^u_t(X)&\le \esssup^{\qquad\quad P}_{P'\in\Pred_{[\underline{a},\overline{a}]}(t,P,\filt_+)} \essinf^{\qquad\quad P}_{\phi\in\Phi}\rho^{P'}_t\Big( X-\int_t^T\phi_s^{\text{tr}}\big(\widehat{a}^{1/2}_s\widehat{\xi}_sds+dB_s\big)\Big)\\ 
           &\le  \essinf^{\qquad\quad P}_{\phi\in\Phi}\esssup^{\qquad\quad P}_{P'\in\Pred_{[\underline{a},\overline{a}]}(t,P,\filt_+)}\rho^{P'}_t\Big( X-\int_t^T\phi_s^{\text{tr}}\big(\widehat{a}^{1/2}_s\widehat{\xi}_sds+dB_s\big)\Big)\\
           &=\essinf^{\qquad\quad P}_{\phi\in\Phi}\rho_t(X-\int_t^T\phi^{\text{tr}}_s\big(\widehat{a}^{1/2}_s\widehat{\xi}_sds+dB_s\big)) \quad \text{ for $t\in[0,T]$.}
\end{split}
\end{equation*}
 To conclude that some $\bar{\phi} \in \Phi$ is a good-deal hedging strategy satisfying (\ref{eq:GD-hedging-Problem}),
it suffices to show for all $\theta\in\Theta$, $P\in \Pred_{[\underline{a},\overline{a}]}$ that $P$-a.s.\ for all $t\in[0,T]$, $P'\in \Pred_{[\underline{a},\overline{a}]}(t,P,\filt_+)$ and $Q\in \mathcal{P}^{\text{ngd}}(Q^{P',\theta})$ holds
\begin{equation*}
\pi^{u}_t(X) \ge E^Q_t\big[X-\int_t^T\bar{\phi}^{\text{tr}}_s\big(\widehat{a}^{1/2}_s\widehat{\xi}_sds+dB_s\big)\big].
\end{equation*}
To this end, let $\theta\in\Theta$, $P\in \Pred_{[\underline{a},\overline{a}]}$. By the first claim of the theorem, the tracking error $R^{\bar{\phi}}:=R^{\bar{\phi}}_\cdot(X)$ of 
$\bar{\phi}$ is a $(\filt^{P'},Q)$-supermartingale for every $Q\in \mathcal{P}^{\text{ngd}}(Q^{P',\theta})$, $P'\in \Pred_{[\underline{a},\overline{a}]}$. 
This implies by Lemma \ref{lem:QSAggregPS} that $\pi^u_t(X) - \pi^u_0(X)-\int_0^t\bar{\phi}_s^{\text{tr}}\big(\widehat{a}^{1/2}_s\widehat{\xi}_sds+dB_s\big)\ge E^Q_t\big[X - \pi^u_0(X)-\int_0^T\bar{\phi}_s^{\text{tr}}\big(\widehat{a}^{1/2}_s\widehat{\xi}_sds+dB_s\big)\big],\ P\text{-a.s.}$,
for all $Q\in \mathcal{P}^{\text{ngd}}(Q^{P',\theta}),$ $P'\in \Pred_{[\underline{a},\overline{a}]}(t,P,\filt_+)$, for any $t\le T$.
Reorganizing that inequality yields the claim.
\end{proof}
\begin{remark}\label{rem:ABoutAggregationandRObustness}
1.\ By a result of \citet{Karandikar} it is possible to define 
the stochastic integral $\int_0^\cdot Z_t^{\text{tr}}dB_t$ pathwise if the process $Z$ is c\`adl\`ag and $\filt$-adapted, such that in particular it satisfies the hypothesis of Theorem \ref{thm:GDHedgingTheorem}. 
Although the $Z$-component of a 2BSDE solution is not guaranteed in general to be c\`adl\`ag, 
we emphasize that Theorem \ref{thm:GDHedgingTheorem} may still be applicable in some specific situations. For instance in a  Markovian diffusion setting, one may be able to use partial differential equation (PDE) arguments 
to show that the $Z$-component is even continuous. An example in such a setting is provided in Section \ref{sec:ExamplewithPutOptionVolUncert}, where beyond the continuity of $Z$ we can even 
obtain explicit solutions to the 2BSDE (\ref{eq:2BSDEpiU}), for some bounded contingent claims. In the most general situation, however, the result of \citet{Nutz} can be used under additional set-theoretical assumptions 
to get rid of the aggregation condition in $\Phi$ and in the statement of Theorem \ref{thm:GDHedgingTheorem}; cf.\ Remarks \ref{rem:OnAggregationOfK} and \ref{rem:OnAggPhi}.

2.\ A consequence of Theorem \ref{thm:GDHedgingTheorem}  is a minmax identity: For $P\in\Pred_{[\underline{a},\overline{a}]}$ holds 
a.s. 
\begin{align*}
 \pi^u_t(X)&=\esssup^{\qquad\quad P}_{P'\in\Pred_{[\underline{a},\overline{a}]}(t,P,\filt_+)} \essinf^{\qquad\quad P}_{\phi\in\Phi(P')}  \rho^{P'}_t\Big( X-\phantom{}^{(P)}\hspace{-0.1cm}\int_t^T\phi_s^{\text{tr}}\big(\widehat{a}^{1/2}_s\widehat{\xi}_sds+dB_s\big)\Big)\\
		    &= \essinf^{\qquad\quad P}_{\phi\in\Phi} \esssup^{\qquad\quad P}_{P'\in\Pred_{[\underline{a},\overline{a}]}(t,P,\filt_+)} \rho^{P'}_t\Big( X-\int_t^T\phi_s^{\text{tr}}\big(\widehat{a}^{1/2}_s\widehat{\xi}_sds+dB_s\big)\Big), \quad \text{$t\in[0,T]$}.
\end{align*}
\end{remark}
\subsection{A case study: Hedging a put on non-traded but correlated asset}\label{sec:ExamplewithPutOptionVolUncert} 
In this subsection we investigate a simple Markovian example to provide more intuition and to illustrate the general but abstract  main theorems,
giving closed-form formulas for robust good-deal valuations and hedges. To this end, we investigate here
 the particular application of a vanilla put option  on a non-traded asset in concrete detail. 
The latter is a typical problem for optimal partial hedging in incomplete markets.
The example is elementary enough to permit even for closed-form solutions, in some parameter settings, and it can be solved 
also by more standard optimal control methods, exploiting the Markovian structure and certain properties of the claim's payoff function.
We  will make use of this fact to demonstrate clear differences to hedging by superreplication, and further to elucidate the difficulties arising from combined drift and volatility uncertainty, compared to only one type of ambiguity. 
More precisely, we show how under combined uncertainty the optimal control problem of robust good-deal valuation 
becomes effectively one over a non-rectangular domain of control variables, 
making it more complex to identify optimizers and worst-case priors (or parameters), even for monotone convex payoff functions for which intuition from examples in  other related valuation approaches (e.g.\ robust superreplication)
might at first suggest otherwise.  Of course, one cannot expect to get similarly elementary solutions for measurable contingent claims in general models: For the general problem formulation, 
the solution has been fully characterized by means of 2BSDEs in the main Theorems \ref{thm:2BSDECHaractforPiU} and \ref{thm:GDHedgingTheorem}. 

Let us consider a financial market where  only one risky asset (a stock) with discounted price process $S$ is tradeable, apart from the riskless asset (with unit price). 
In addition, there is another asset  that is not tradeable but whose value $L$ is correlated with $S$. The processes $S$ and $L$
are,  $\Pred_{[\underline{a},\overline{a}]}\text{-q.s.}$, given by 
\begin{equation*}
	dS_t = S_t(bdt+\sigma^SdB^1_t)\quad \text{and}\quad dL_t=L_t\big(\gamma dt+\beta(\rho dB^1_t+\sqrt{1-\rho^2}dB^2_t)\big),
\end{equation*}
where $B=(B^1,B^2)$ is the canonical process, $\Pred_{[\underline{a},\overline{a}]}$ is the set defined as in (\ref{eq:DefPH}) for diagonal matrices 
$\underline{a} =\mathrm{diag}(\underline{a}_1,\underline{a}_2)$ and $\overline{a} =\mathrm{diag}(\overline{a}_1,\overline{a}_2)$ in $\mathbb{S}_2^{>0}$, 
with $S_0,L_0,\sigma^S,\beta$ in $(0,\infty)$, for drifts $b,\gamma$ in $\R$,  volatility matrix $\sigma:=(\sigma^S,0)\in\R^{1\times 2}$ and a $P^0$-correlation coefficient $\rho\in[-1,1]$. For a constant bound $h\in[0,\infty)$ on the instantaneous Sharpe ratios, we are going to to derive closed-form expressions for robust good-deal valuations 
and hedges, for a European put option $X:=(\mathcal{K}-L_T)^+$ on the non-traded asset $L$ with strike $\mathcal{K}\in (0,\infty)$ and maturity $T$, and to identify the corresponding 
worst-case drifts and volatilities. At first we assume the drift rate $b$ of $S$ to be zero, so that the center market price of risk $\widehat{\xi}$ vanishes quasi-surely, and 
consider the case of uncertainty solely on volatility (i.e.\ for $\delta\equiv 0$). 
For this case we will identify  a worst-case volatility for 
the robust valuation bounds in closed form, and compare robust good-deal hedging with classical robust superreplication under volatility uncertainty. 
After this, we discuss the more complex case  with combined drift 
and volatility uncertainties and the difficulties that arise in identifying worst-case drifts and volatilities.
Finally, we investigate sensitivities of the derived robust good-deal bound under volatility uncertainty with respect to variations of the drift parameter $\gamma\in\R$ for the non-traded asset.

\subsubsection{Uncertainty solely about the volatility}\label{subsubsec:onlyVolUncert}
Denoting the entries of the processes $\widehat{a}$ and its square root $\widehat{a}^{\frac{1}{2}}$ by
\begin{equation*}\widehat{a} = \left(\begin{array}{cc}\widehat{a}^{11}&\widehat{a}^{12}\\\widehat{a}^{12}&\widehat{a}^{22}\end{array}\right)\quad \text{and}\quad  
\widehat{a}^{\frac{1}{2}} = \left(\begin{array}{cc}\widehat{c}^{11}&\widehat{c}^{12}\\\widehat{c}^{12}&\widehat{c}^{22}\end{array}\right),
\end{equation*}
one has  $\sigma \widehat{a}^{\frac{1}{2}} = \sigma^S(\widehat{c}^{11}, \widehat{c}^{12})$, 
\begin{equation}
\begin{split}\label{eq:Relationaandsqrta}
\widehat{a}^{11}&=(\widehat{c}^{11})^2+(\widehat{c}^{12})^2,\quad \widehat{a}^{12}=\widehat{c}^{12}(\widehat{c}^{11}+\widehat{c}^{22}),\quad \widehat{a}^{22}=(\widehat{c}^{22})^2+(\widehat{c}^{12})^2,\\
&\quad \text{and}\quad \widehat{a}^{11}\widehat{a}^{22}-(\widehat{a}^{12})^2 = \big(\widehat{c}^{11}\widehat{c}^{22}-(\widehat{c}^{12})^2\big)^2,
\end{split}
\end{equation}
implying $\mathrm{Im}\,(\sigma\widehat{a}^{\frac{1}{2}})^{\text{tr}} = \big\lbrace z\in\R^2 : \widehat{c}^{12}z_1-\widehat{c}^{11}z_2=0\big\rbrace$ and
$\mathrm{Ker}\,(\sigma\widehat{a}^{\frac{1}{2}}) = \big\lbrace z\in\R^2 : \widehat{c}^{11}z_1+\widehat{c}^{12}z_2=0\big\rbrace$. Hence for $z = (z_1,z_2)^{\text{tr}}\in\R^2$ one gets
\begin{equation}\label{eq:ExpressionPizPibotZ}
	\widehat{\Pi}(z) = \frac{1}{\widehat{a}^{11}}\left(\begin{array}{cc}(\widehat{c}^{11})^2z_1+\widehat{c}^{11}\widehat{c}^{12}z_2\\\widehat{c}^{11}\widehat{c}^{12}z_1+(\widehat{c}^{12})^2z_2\end{array}\right)\ \text{and}
		\ \widehat{\Pi}^\bot(z) = \frac{1}{\widehat{a}^{11}}\left(\begin{array}{cc}(\widehat{c}^{12})^2z_1-\widehat{c}^{11}\widehat{c}^{12}z_2\\(\widehat{c}^{11})^2z_2-\widehat{c}^{11}\widehat{c}^{12}z_1\end{array}\right). 
\end{equation}
In the case $\delta\equiv 0$, the 2BSDE (\ref{eq:2BSDEpiU}) thus rewrites here as 
\begin{equation}\label{eq:2BSDEpiUvolUncert}
 \begin{split}
 Y_t = X-\phantom{}^{(P)}\hspace{-0.1cm}\int_t^TZ_s^{\text{tr}}dB_s- \int_t^T \widehat{F}_s(\widehat{a}^{1/2}_sZ_s) ds+K^P_T-K^P_t,\ t\in[0,T],\ \Pred_{[\underline{a},\overline{a}]}\text{-q.s.}, 
 \end{split}
\end{equation}
where from (\ref{eq:ExpressionPizPibotZ}) and (\ref{eq:Relationaandsqrta}) one has , $ \Pred_{[\underline{a},\overline{a}]}\otimes dt\text{-q.e.}  $,
\begin{equation}\label{eq:SimplerGen}
\widehat{F}_t(\widehat{a}^{1/2}_tz)= -h \big\lvert\widehat{\Pi}^{\bot}_t\big(\widehat{a}_t^{1/2}z\big)\big\rvert= -h\big(\widehat{a}_t^{11}\widehat{a}_t^{22}-(\widehat{a}_t^{12})^2\big)^{1/2}\big(\widehat{a}_t^{11}\big)^{-1/2}\big\lvert z_2\big\rvert,
\end{equation}
for $z=(z_1,z_2)^{\text{tr}}\in\R^2$. Clearly $L_T$ is $\F_T$-measurable, and since the put option payoff function $x\mapsto (\mathcal{K}-x)^+$ is bounded and continuous
it follows that $X$ is $\F_T$-measurable and in $\mathbb{L}^2(\filt_+)$. Therefore the conditions of Theorem \ref{thm:2BSDECHaractforPiU} are satisfied, yielding that
the worst-case good-deal bound $\pi^u_\cdot(X)$ coincides with the $Y$-component of the solution of the 2BSDE (\ref{eq:2BSDEpiUvolUncert}).  
In Lemma \ref{lem:Explicit2BSDESotution} below we will express the solution to the 2BSDE (\ref{eq:2BSDEpiUvolUncert}) in terms of the 
classical solution $v\in\mathcal{C}^{1,2}\big([0,T)\times(0,\infty)\big)$ to the Black-Scholes' type PDE 
\begin{align}\label{eq:BSTypePDEvOLuNC}
	&\frac{\partial v}{\partial t} + \big(\gamma-h\beta\sqrt{1-\rho^2}\sqrt{\overline{a}_2}\big)\,x\,\frac{\partial v}{\partial x}+\frac{1}{2}\beta^2\,\big(\rho^2\overline{a}_1+(1-\rho^2)\overline{a}_2)\,x^2\,\frac{\partial^2v}{\partial x^2}=0\\ 
& \nonumber \quad 	\text{on the set $[0,T)\times (0,\infty)$, with boundary condition }\; v(T,\cdot) = (\mathcal{K}-\cdot)^+.
\end{align}
To this end, let $P^{\overline{a}} = P^0\circ(\overline{a}^{1/2}B)^{-1}\in\Pred_{[\underline{a},\overline{a}]}$ such that $\langle B\rangle_t = \overline{a}t\ P^{\overline{a}}\otimes dt\text{-a.s.}$. 
The process $L$ under $P^{\overline{a}}$ is a geometric Brownian motion with dynamics
\begin{equation*}
dL_t = L_t\Big(\gamma dt + \bar{\beta}\big(\bar{\rho}dW^{1,P^{\overline{a}}}_t+\sqrt{1-\bar{\rho}^2}\,dW^{2,P^{\overline{a}}}_t\big)\Big),\ t\in[0,T],\ P^{\overline{a}}\text{-a.s.},
\end{equation*}
where $W^{P^{\overline{a}}} = (W^{1,P^{\overline{a}}},W^{2,P^{\overline{a}}}):={(\overline{a})}^{\,-1/2}B$ is a $P^{\overline{a}}$-Brownian motion and 
\begin{equation*}
\bar{\beta}:=\beta\Big(\rho^2\overline{a}_1+(1-\rho^2)\overline{a}_2\Big)^{\frac 1 2}>0\;\text{and}\;
 \bar{\rho} := \rho\sqrt{\overline{a}_1}\Big(\rho^2\overline{a}_1+(1-\rho^2)\overline{a}_2\Big)^{-\frac 1 2}\in[-1,1].
\end{equation*}
Hence a closed-form expression for $v(t,L_t)$, with $v$  being the solution to the PDE (\ref{eq:BSTypePDEvOLuNC}), is given by the Black-Scholes formula for the price of the put option 
$X= (\mathcal{K}-L_T)^+$ in the model $P^{\overline{a}}$.
By arguments analogous to the derivations of the formulas in \citep[][Sect.3.2.1]{BechererKentiaTonleu}, $v(t,L_t)$ coincides $P^{\overline{a}}$-a.s.\ with the valuation bound 
$\pi^{u,P^{\overline{a}}}_t\big(X\big)$ for all $t\le T$, and is given in closed form as 
\begin{align}\label{eq:CloFormUpperGDBPutExVOlUnc}
	v(t,L_t)&=\pi^{u,P^{\overline{a}}}_t(X)
				  =\mathcal{K}N(-d_{-})- L_te^{m(T-t)}N(-d_{+})\\
		\nonumber    &= e^{m(T-t)}\ast\text{B/S-put-price}\big(\text{time: } t,\ \text{spot: }L_t,\ \text{strike: } \mathcal{K}e^{-m(T-t)}, \text{vol: } \bar{\beta} \big), 		    
\end{align}
where ``B/S-put-price'' denotes the standard Black-Scholes put pricing formula with zero interest rate, ``vol'' abbreviating volatility,  $N$ denoting the cdf of the standard normal law,
$
	m:= \gamma-h\bar{\beta}\sqrt{1-\bar{\rho}^2}= \gamma - h\beta\sqrt{1-\rho^2}\sqrt{\overline{a}_2},
$
 and $d_{\pm}:=\big(\ln \big(L_t/\mathcal{K}\big)+\big(m\pm \frac{1}{2}\bar{\beta}^2\big)(T-t)\big)\big(\bar{\beta}\sqrt{T-t}\big)^{-1}$.
The details of proof for the following lemma can be found in the appendix. 
\begin{lemma}\label{lem:Explicit2BSDESotution}
	The triple $(Y,Z,K)$ with $Y_t=v(t,L_t)$, $Z_t = \beta L_t\frac{\partial v}{\partial x}(t,L_t)\big(\rho,\sqrt{1-\rho^2}\big)^{\text{tr}}$,
and $K$ given by (\ref{eq:DefKAggreg}) for $v\in\mathcal{C}^{1,2}\big([0,T)\times(0,\infty)\big)$ solution to the PDE (\ref{eq:BSTypePDEvOLuNC})
	satisfies $(Y,Z)\in \mathbb{D}^2\big(\filt^{\Pred_{[\underline{a},\overline{a}]}}\big)\times \mathbb{H}^2\big(\filt^{\Pred_{[\underline{a},\overline{a}]}}\big)$ and 
	is the unique solution to the 2BSDE (\ref{eq:2BSDEpiUvolUncert}). In particular the stochastic integral $\int_0^\cdot Z_s^{\text{tr}}dB_s$ can be defined pathwise.
\end{lemma}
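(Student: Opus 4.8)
The plan is to verify directly that the proposed triple solves the 2BSDE (\ref{eq:2BSDEpiUvolUncert}) and then to invoke the uniqueness part of Theorem \ref{thm:2BSDECHaractforPiU}. First I would record the qualitative features of the solution $v$ to the Black--Scholes-type PDE (\ref{eq:BSTypePDEvOLuNC}) that are read off from the closed form (\ref{eq:CloFormUpperGDBPutExVOlUnc}): the put price is bounded, $0\le v\le \mathcal{K}$, decreasing in the spot, $\frac{\partial v}{\partial x}\le 0$, and convex, $\frac{\partial^2 v}{\partial x^2}\ge 0$. Fixing an arbitrary $P\in\Pred_{[\underline{a},\overline{a}]}$ and applying It\^o's formula to $Y_t=v(t,L_t)$, using $d\langle L\rangle_t = L_t^2\beta^2\,u\widehat{a}_t u^{\text{tr}}\,dt$ with $u:=(\rho,\sqrt{1-\rho^2})$, the martingale part equals $\frac{\partial v}{\partial x}(t,L_t)\,\beta L_t\,u\,dB_t = Z_t^{\text{tr}}dB_t$, which matches the claimed $Z$. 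This pins down the finite-variation part of $Y$ and forces $dK^P_t = \big(\widehat{F}_t(\widehat{a}_t^{1/2}Z_t)-\mathrm{drift}_t\big)\,dt$, the process recorded in (\ref{eq:DefKAggreg}); it then remains to check the three defining requirements of a 2BSDE solution.

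The heart of the argument is to show that $K^P$ is non-decreasing, i.e. $\widehat{F}_t(\widehat{a}_t^{1/2}Z_t)\ge \mathrm{drift}_t$ for every $\widehat{a}_t\in[\underline{a},\overline{a}]$. Substituting the PDE (\ref{eq:BSTypePDEvOLuNC}) (written at $\overline{a}$, where $u\overline{a}u^{\text{tr}}=\rho^2\overline{a}_1+(1-\rho^2)\overline{a}_2$) to eliminate $\frac{\partial v}{\partial t}$, and inserting the generator (\ref{eq:SimplerGen}) with $g(\widehat{a}_t):=\big(\widehat{a}_t^{11}\widehat{a}_t^{22}-(\widehat{a}_t^{12})^2\big)^{1/2}(\widehat{a}_t^{11})^{-1/2}$, the difference collapses to
\begin{equation*}
\widehat{F}_t(\widehat{a}_t^{1/2}Z_t)-\mathrm{drift}_t = h\beta\sqrt{1-\rho^2}\,L_t\,\frac{\partial v}{\partial x}\big(g(\widehat{a}_t)-\sqrt{\overline{a}_2}\big) - \tfrac{1}{2}\beta^2 L_t^2\,\frac{\partial^2 v}{\partial x^2}\big(u\widehat{a}_t u^{\text{tr}}-u\overline{a}u^{\text{tr}}\big).
\end{equation*}
Both summands are non-negative: for $\widehat{a}_t\le\overline{a}$ one has $u\widehat{a}_t u^{\text{tr}}\le u\overline{a}u^{\text{tr}}$ because $\overline{a}-\widehat{a}_t$ is positive semidefinite, and $g(\widehat{a}_t)^2=\widehat{a}_t^{22}-(\widehat{a}_t^{12})^2/\widehat{a}_t^{11}\le \widehat{a}_t^{22}\le\overline{a}_2$ (the last step again from $\widehat{a}_t\le\overline{a}$), so $g(\widehat{a}_t)\le\sqrt{\overline{a}_2}$; combining these bounds with $\frac{\partial v}{\partial x}\le 0$ and $\frac{\partial^2 v}{\partial x^2}\ge 0$ yields the inequality, with equality exactly at $\widehat{a}_t=\overline{a}$, so that $K^{P^{\overline{a}}}\equiv 0$. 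The main obstacle is precisely recognizing that these two monotonicities of $\widehat{a}\mapsto u\widehat{a}u^{\text{tr}}$ and $\widehat{a}\mapsto g(\widehat{a})$ over the matrix interval $[\underline{a},\overline{a}]$ are the right structural inputs; once identified, the sign bookkeeping (driven by the put being decreasing and convex) is routine and does not require a genuine optimization.

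Finally I would dispatch the remaining three points. For the minimum condition (\ref{eq:MinCond}): since every $K^{P'}\ge 0$ is non-decreasing, $E^{P'}_t[K^{P'}_T]\ge K^{P'}_t=K^P_t$ whenever $P'=P$ on $\F^+_t$; and by pasting, choosing $P'\in\Pred_{[\underline{a},\overline{a}]}(t,P,\filt_+)$ with $\widehat{a}\equiv\overline{a}$ on $(t,T]$ gives $K^{P'}_T-K^{P'}_t=0$ (the equality case above), attaining the essential infimum $K^P_t$. For integrability, $0\le v\le\mathcal{K}$ yields $Y\in\mathbb{D}^2\big(\filt^{\Pred_{[\underline{a},\overline{a}]}}\big)$, while the put delta is bounded so that $|Z_t|\le \beta L_t e^{|m|T}$ and the uniform-in-$P$ $L^2$-moments of the geometric process $L$ (its volatility being dominated via $\widehat{a}\le\overline{a}$) give $Z\in\mathbb{H}^2\big(\filt^{\Pred_{[\underline{a},\overline{a}]}}\big)$; the associated family $(K^P)$ then lies in $\mathbb{I}^2$ by the solution structure of Proposition \ref{pro:ExistenceUniquenessSol2BSDEs}. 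Uniqueness is inherited from Theorem \ref{thm:2BSDECHaractforPiU}. Since $v\in\mathcal{C}^{1,2}$ and $L$ is continuous, $Z$ is continuous and $\filt$-adapted, so by \citet{Karandikar} (cf. Remark \ref{rem:OnAggPhi}) the family $\big\{\int_0^\cdot Z_s^{\text{tr}}\,dB_s\big\}_{P\in\Pred_{[\underline{a},\overline{a}]}}$ aggregates into a single pathwise-defined stochastic integral, completing the proof.
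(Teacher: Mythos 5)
Your proposal is correct and follows essentially the same route as the paper's proof: It\^o's formula plus the PDE (\ref{eq:BSTypePDEvOLuNC}) to identify $Z$ and $K$, the two matrix monotonicities ($u\widehat a u^{\text{tr}}\le u\overline a u^{\text{tr}}$ and $\big(\widehat a^{11}\widehat a^{22}-(\widehat a^{12})^2\big)^{1/2}(\widehat a^{11})^{-1/2}\le\sqrt{\overline a_2}$) combined with the put's non-positive delta and non-negative gamma to get $K$ non-decreasing, boundedness $0\le v\le\mathcal K$ for $\mathbb{D}^2$, uniform $L^2$-moments of $L$ for $\mathbb{H}^2$, uniqueness via Theorem \ref{thm:2BSDECHaractforPiU}, and \citet{Karandikar} for the pathwise integral. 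The only deviations are cosmetic: you bound $g(\widehat a)^2\le\widehat a^{22}\le\overline a_2$ directly instead of invoking $\widehat a^{1/2}\le\overline a^{1/2}$, and you sketch the pasting-to-$\overline a$ argument for the minimum condition (\ref{eq:MinCond}) that the paper delegates to \citet{SonerTouziZhang-Wellposedness} and \citet{KentiaPhD}.
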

\paragraph{Worst-case model for valuation and hedging:}\label{worstcasepara}
Using Lemma \ref{lem:Explicit2BSDESotution}, Theorem \ref{thm:2BSDECHaractforPiU} implies by (\ref{eq:CloFormUpperGDBPutExVOlUnc}) that the robust good-deal bound
for a put option $X=(\mathcal{K}-L_T)^+$ is given in closed-form by 
\begin{equation}\label{eq:finalGDB-BlackScholes}
\pi^u_t(X) =  \mathcal{K}N(-d_{-})- L_te^{m(T-t)}N(-d_{+}),\ t\in[0,T],\ P^{\overline{a}}\text{-a.s.}. 
\end{equation}
Hence the robust good-deal bound $\pi^{u}_\cdot(X)$ for a put option $X= (\mathcal{K}-L_T)^+$ is attained at the ``maximal'' volatility matrix $\overline{a}$, and can be computed as in the absence of uncertainty, 
but in a worst-case model $P^{\overline{a}}\in\Pred_{[\underline{a},\overline{a}]}$ in which $\langle B\rangle_t = \overline{a}t$ holds $ {P^{\overline{a}}}\otimes dt$-a.e., yielding 
$\pi^{u}_t\big(X\big)=\pi^{u,P^{\overline{a}}}_t\big(X\big)$ for all $t\in[0,T],\ {P^{\overline{a}}}$-a.s..
In addition, $\pi^{u}_t\big(X\big)$ is given explicitly in terms of a Black-Scholes' type formula, for modified strike $\mathcal{K}\exp( -m(T-t))$ and volatility 
$\bar{\beta}= \beta\big(\rho^2\overline{a}_1+(1-\rho^2)\overline{a}_2\big)^{1/2}$.
Similarly Theorem \ref{thm:GDHedgingTheorem} and Lemma \ref{lem:Explicit2BSDESotution} imply by (\ref{eq:ExplicitFOrmforZ}) that the robust good-deal hedging strategy $\bar{\phi}:=\bar{\phi}(X)$ for $X$
is given by
\begin{align}
	\bar{\phi}_t &= -\beta e^{ m(T-t)}N(-d_+)L_t\,\widehat{a}^{-1/2}_t\,\widehat{\Pi}_t\Big(\widehat{a}^{1/2}_t\big(\rho,\sqrt{1-\rho^2}\big)^{\text{tr}}\Big)\notag\\
		      &= -\beta e^{ m(T-t)}N(-d_+)L_t\Big(\rho+\frac{\widehat{a}^{12}_t}{\widehat{a}^{11}_t}\sqrt{1-\rho^2}\, ,\ 0\Big)^{\text{tr}},
\quad \text{$P^{\overline{a}}\otimes dt$-a.e.},
  \label{eq:FinalPhiBarExamp}
\end{align}
where we have used the fact that $\widehat{a}^{-1/2}\widehat{\Pi}\big(\widehat{a}^{1/2}z\big) = \Big(z_1+\frac{\widehat{a}^{12}}{\widehat{a}^{11}}z_2\, ,\ 0\Big)^{\text{tr}}$ for $z=(z_1,z_2)^{\text{tr}}\in\R^2,$ 
which is straightforward by (\ref{eq:ExpressionPizPibotZ}) and (\ref{eq:Relationaandsqrta}).  Analogously the lower good-deal bound $\pi^{l}_\cdot\big(X\big)$ and corresponding hedging 
strategy can also be computed, but under the worst-case measure $P^{\underline{a}}\in\Pred_{[\underline{a},\overline{a}]}$ 
for ``minimal'' volatility matrix $\underline{a}$. 

\paragraph{Comparison with robust superreplication under volatility uncertainty:}
Intuitively as the bound $h$ on the Sharpe ratios increases to infinity, the (upper) good-deal bound $\pi^u_\cdot(X)$ should increase towards the robust upper no-arbitrage valuation bound, studied in \citet{AvellanedaetAl,Lyons95,DenisMartini,Vorbrink2014,NutzSoner,NeufeldNutz}. 
Our result in this example is in accordance with \citet{AvellanedaetAl,Lyons95,KarouiJeanBlancShreve98,EpsteinJi2013,Vorbrink2014} 
who showed that under ambiguous volatility, Black-Scholes-valuation and hedging of vanilla put (or call) options under maximal (resp.\ minimal) volatility corresponds to the worst-case for the seller (resp.\ for the buyer).
Instead of superreplication
we focus here on robust good-deal hedging under volatility uncertainty for valuation at the worst-case good-deal bound.
Beyond some simplifications (particular payoff function) in comparison to \citet{AvellanedaetAl,KarouiJeanBlancShreve98,Vorbrink2014}, our setup also includes some more original aspects: We study an option on a non-tradable asset $L$  with possibly non-perfect correlation with $S$, and there is incompleteness in the sense that the option $X$ is already  clearly non-replicable by dynamic trading  under any individual prior (if $|\rho|<1$), even without model uncertainty.
Further, we now show that the robust good-deal hedging strategy $\bar{\phi}(X)$ is very different from (the risky asset component of) the super-replicating strategy, in general.
Indeed since $0\le X\le \mathcal{K}$ holds pathwise, the (upper) no-arbitrage bound (or superreplication price)  process $\widehat{V}(X)$ 
under $P^{\overline{a}}$ defined by 
\begin{equation*}
\widehat{V}_t(X):=\esssup^{\qquad\quad P^{\overline{a}}}_{Q\in\mathcal{M}^e(P^{\overline{a}})}E^{Q}_t[X],\ t\in[0,T],\ P^{\overline{a}}\text{-a.s.}
\end{equation*}
satisfies $\pi^{u,P^{\overline{a}}}_t(X;h)\le \widehat{V}_t(X)\le \mathcal{K},\ P^{\overline{a}}\text{-a.s.},$ for $\pi^{u,P^{\overline{a}}}_t(X;h)$ denoting the good-deal bound in the 
model $P^{\overline{a}}$ with Sharpe ratio constraint $h\in[0,\infty)$ and being given by (\ref{eq:CloFormUpperGDBPutExVOlUnc}) for arbitrary but fixed $h$.
If $\lvert \rho\rvert<1$ then $\pi^{u,P^{\overline{a}}}_t(X;h)$ for $t<T$ increases to $\mathcal{K}$ as $h$ tends to $+\infty$ (since $m\to-\infty$, $d_{\pm}\to-\infty$),
and hence we get, in striking difference to the good deal bound from (\ref{eq:finalGDB-BlackScholes}), that
\begin{equation}\label{eq:VbarExplicit}
	\widehat{V}_t(X)=\mathcal{K}\mathds{1}_{\{t<T\}}+X\mathds{1}_{\{t=T\}},\ t\in[0,T],\ P^{\overline{a}}\text{-a.s.}.
\end{equation}
The superreplication price process $\widehat{V}(X)$ has the optional decomposition \citep[cf.][Thm.3.2]{Kramkov}
\begin{equation*}
\widehat{V}_t(X)=\widehat{V}_0(X)+\int_0^t\widehat{\phi}_sdB^1_s-\widehat{C}_t,\ t\in[0,T],\ P^{\overline{a}}\text{-a.s.}, 
\end{equation*}
where $\int_0^\cdot\widehat{\phi}_sdB^1_s$ and $\widehat{C}$ are unique \citep[see][Thm.2.1 and Lem.2.1]{Kramkov}. One obtains by (\ref{eq:VbarExplicit}) that $\int_0^\cdot\widehat{\phi}_sdB^1_s=0$ and $C=(\mathcal{K}-X)\mathds{1}_{\{T\}}$.
Note from (\ref{eq:FinalPhiBarExamp}) that $\bar{\phi}=(Z^1,0)^{\text{tr}}\ P^{\overline{a}}\otimes dt\text{-a.e.}$ since $\widehat{a}=\overline{a}\ P^{\overline{a}}\otimes dt\text{-a.e.}$.
For $\rho\neq 0$, the process $Z^1$ is non-trivial under $P^{\overline{a}}\otimes dt$. Overall for $0<\lvert \rho\rvert<1$, $\int_0^\cdot\bar{\phi}^{\text{tr}}_sdB_s=\int_0^\cdot Z^1_sdB^1_s$ cannot be equal to 
$\int_0^\cdot\widehat{\phi}_sdB^1_s\equiv0$ $P^{\overline{a}}\otimes dt$-almost everywhere. This means that for $0<\lvert \rho\rvert<1$
the good-deal hedging strategy $\bar{\phi}$ for the put option on the non-traded asset is not the traded asset component of the option's super-replicating strategy for the model $P^{\overline{a}}$. Therefore the robust good-deal hedging strategy 
$\bar{\phi}$ is in general not a super-replicating strategy under volatility uncertainty for the set $\Pred_{[\underline{a},\overline{a}]}$ of reference priors and 
$\pi^u_t(X)+\int_t^T\bar{\phi}^{\text{tr}}_s dB_s$ does not dominate the  claim $X$  $P^{\overline{a}}$-almost surely, let alone $\Pred_{[\underline{a},\overline{a}]}$-quasi-surely.

\subsubsection{Case with combined drift and volatility uncertainty}\label{subsubsec:CaseCombinedUncertExample}
The approach of Section \ref{subsubsec:onlyVolUncert} to derive closed-form valuations and hedges may not work in general when there is drift uncertainty in  
addition to volatility uncertainty. Indeed one would need, by the preceding valuation and hedging Theorems \ref{thm:2BSDECHaractforPiU} and \ref{thm:GDHedgingTheorem}, first to identify a candidate worst-case drift parameter $\bar{\theta}$ (and then possibly a worst-case volatility) 
satisfying (\ref{eq:OptEquaThetaBar}). This requires finding a minimizer over $\theta\in \Theta\equiv\{x\in\R^2: \lvert x\rvert\le \delta\}$ of the quantities $\widehat{F}^\theta (\widehat{a}^{1/2}Z)$ given by 
(\ref{eq:GeneratorFunctionFtheta}) for the $Z$-component of the solution to the 2BSDE (\ref{eq:2BSDEpiU}). However, it is not clear whatsoever how this could be done  in general (let alone explicitly), given that the expression for $\widehat{F}^\theta$
is quite complex by non-triviality of the kernel $\mathrm{Ker}\,(\sigma\widehat{a}^{1/2})$. Recall that the latter is due to the possible singularity of the volatility matrices $\sigma_t\in\R^{d\times n}$ when $d< n$, i.e.\  
under market incompleteness under each prior. Furthermore, even if one could 
identify $\bar{\theta}$, it would still be questionable what the corresponding worst-case volatility should be as $\widehat{F}^{\bar{\theta}}(\widehat{a}^{1/2}Z) = \widehat{F}(\widehat{a}^{1/2}Z)$
is still a very complicated function of the coefficients of $a\in\mathbb{S}^{>0}_2$. This issue does not appear in the less general setting of Section \ref{subsubsec:onlyVolUncert} where,
thanks to the zero drift assumption $b=0$ for the traded asset $S$, the expression for $\widehat{F}(\widehat{a}^{1/2}Z)$ greatly simplifies to (\ref{eq:SimplerGen}) and this allows by direct comparison 
to obtain $\overline{a}$ as the corresponding worst-case volatility. If market incompleteness is mainly due to the presence of volatility uncertainty (i.e.\ market is complete under every prior, and hence $\mathrm{Ker}\,(\sigma\widehat{a}^{1/2})$ is trivial),
then drift uncertainty is redundant as it does not have influence on (essentially superreplication) valuation bounds (and respective strategies) 
for any contingent claim, as $\widehat{F}^\theta(z) = \widehat{F}(z)= \widehat{\xi}^{\,\textrm{tr}}z$ for any $\theta\in\Theta,\ z\in\R^n$.
This has been argued in more detail by \citep[][Example 3]{EpsteinJi2013} and is perhaps not surprising. Indeed, it is well-known that the Black-Scholes price does not depend on the drift of the underlying and consequently remains unaffected 
by drift uncertainty. Super-replication (q.s.) appears as a natural  notion for robust hedging under market completeness for every prior and is well studied in the literature, \citep[cf.][]{AvellanedaetAl,Lyons95,NutzSoner,NeufeldNutz,EpsteinJi2013,Vorbrink2014},
where for the above reason  it is standard to assume zero drift. However in the case of incomplete markets (i.e.\ for $d<n$),
combined uncertainty on drifts and volatilities becomes  relevant for related approaches to valuation and hedging, that are
less expensive than superreplication. One might wonder whether one could identify worst-case drifts and volatilities explicitly for certain examples. Yet, even for the vanilla put 
option of Section \ref{subsubsec:onlyVolUncert}, we are not aware of a closed-form solution for this. E.g.\ for non-zero drift $b$, which is not even subject to uncertainty, we admit 
that we are not able to state a worst-case volatility in closed form  (e.g.\ by identifying it as $\bar{a}$ like we did on p.\pageref{worstcasepara}). 
To better demonstrate why this case is effectively 
already less tractable, let us assume for simplicity the following values for the model parameters: $\sigma^S=1,\ \gamma=0,\ \beta=1,\ \delta=0$ and $\rho =0$. The dynamics for $S$ and $L$ are then 
\begin{equation*}
	dS_t = S_t(bdt +dB^1_t)\quad \text{and}\quad dL_t = L_tdB^2_t,\quad \Pred_{[\underline{a},\overline{a}]}\text{-q.s.}.
\end{equation*}
Here $\widehat{\xi} = \big(b\widehat{c}^{11}/\widehat{a}^{11},b\widehat{c}^{12}/\widehat{a}^{11}\big)^{\text{tr}}$, $\mathrm{Ker}\,(\sigma\widehat{a}^{1/2}) = \text{Span}\big\lbrace \widehat{\eta}\big\rbrace$ with $\widehat{\eta} := \big(1,-\widehat{c}^{11}/\widehat{c}^{12}\big)^{\text{tr}}$.
By (\ref{eq:NGDMeasures}) for $P$ in $\Pred_{[\underline{a},\overline{a}]}$, the set $\mathcal{Q}^{\text{ngd}}(P)$ consists of all measures $Q^\varepsilon$ with $dQ^\varepsilon=\phantom{}^{(P)}\hspace{-0.05cm}\mathcal{E}\big((-\widehat{\xi}+\varepsilon\widehat{\eta})\cdot W^P\big)dP,\ P$-a.s.,
for $\mathbb{R}$-valued processes $\varepsilon$ satisfying $\lvert\varepsilon\rvert\le \widehat{c}^{12}(\widehat{a}^{11})^{-1/2}\big(h^2-b^2/\widehat{a}^{11}\big)^{1/2}$ everywhere.
Since $B^2=(\widehat{c}^{12},\widehat{c}^{22})\cdot W^P$ and $(\widehat{c}^{12})^2+(\widehat{c}^{22})^2=\widehat{a}^{22}$, then a change of measure from $P$ to $Q^\varepsilon$ yields 
\begin{equation*}
	dL_t = L_t \left(\widehat{a}^{22}_tdW^\varepsilon_t -\left( b\,\widehat{a}^{12}_t\,(\widehat{a}^{11}_t)^{-1}+\varepsilon_t(\widehat{c}^{12}_t)^{-1}\big(\widehat{a}^{11}_t\widehat{a}^{22}_t-(\widehat{a}^{12}_t)^2\big)^{1/2}\right)dt\right)\quad Q^\varepsilon\text{-a.s.},
\end{equation*}
where $W^\varepsilon$ is a one-dimensional $Q^\varepsilon$-Brownian motion. As the put payoff function $(\mathcal{K}-\cdot)^+$ is non-increasing, the good-deal bound 
under $P$ at time zero $\pi^{u,P}_0((\mathcal{K}-L_T)^+) = \sup_{\varepsilon} E^{Q^\varepsilon}[(\mathcal{K}-L_T)^+]$ is then attained at the largest possible $\varepsilon$ which is $\widehat{c}^{12}(\widehat{a}^{11})^{-1/2}\big(h^2-b^2/\widehat{a}^{11}\big)^{1/2}$. 
After replacing in the dynamics of $L$ and taking the supremum over all $P\in \Pred_{[\underline{a},\overline{a}]}$ we obtain by Part~2 of Remark \ref{rem:DynProgPrinciple} that the robust good-deal bound at time $t=0$ is 
\begin{equation}\label{eq:OptControlPiuMarkovian}
	\pi^{u}_0(X) = \sup_{a\in\mathbb{S}^{>0}_{2}\cap [\underline{a},\overline{a}]} E^{P^0}\big[\big(\mathcal{K}-L^{\gamma(a),\,\beta(a)}_T\big)^+\big],
\end{equation}
with 
\(
	\beta(a) := a^{22}\) and  \( \gamma(a) :=  -b\,(a^{12}/a^{11})-\big(h^2-b^2/a^{11}\big)^{\frac{1}{2}}\big(a^{11} a^{22}-(a^{12})^2\big)^{\frac{1}{2}}(a^{11})^{-\frac{1}{2}}.
\)
Moreover $L^{\gamma,\beta}$ has dynamics $dL^{\gamma,\beta}_t=L^{\gamma,\beta}_t\big(\gamma_t dt+\beta_t dW_t)\big)$, $t\in[0,T]$, for some one-dimensional $P^0$-Brownian motion $W$.
Hence, we recognize that (\ref{eq:OptControlPiuMarkovian}) is given by the value of a standard (Markovian) optimal control problem with state process $L^{\gamma,\beta}$ but with (dynamic) control variables $(\gamma_t,\beta_t)_{t\in [0,T]}$ taking values in a 
generally non-rectangular set 
\begin{equation*}
	 \big\lbrace (\gamma,\beta)\in\R^2\ \big \lvert\big.\ \beta=\beta(a) \text{ and } \gamma = \gamma(a) \text{ for some }a\in \mathbb{S}^{>0}_{2}\cap[\underline{a},\overline{a}] \big\rbrace,
\end{equation*}
what clearly is a complication for the derivation of optimal controls. Because the payoff function $x\mapsto (\mathcal{K}-x)^+$ is non-increasing and convex, then 
clearly a volatility matrix $a^*$ which simultaneously maximizes $a\mapsto \beta(a)$ and minimizes $a\mapsto \gamma(a)$ would be an optimizer for (\ref{eq:OptControlPiuMarkovian}).
In the particular case where $b=0$,  one can  easily check that $a^*$ equals $\overline{a}$, yielding (again) the simple closed-form solution 
of Section~\ref{subsubsec:onlyVolUncert}. Yet, when $b\neq 0$, it becomes less simple to describe $a^*$.

\subsubsection{Sensitivity with respect to drift parameter of non-traded asset} 

Despite the lack of closed-form expression for valuations under combined uncertainties, still, some further insight can be obtained by investigating simply the sensitivity of the robust good-deal bound 
$\pi^u_\cdot(X)=:\pi^u_\cdot(X;\gamma)$ with respect to a  (fixed constant) drift $\gamma$ of the non-traded asset $L$, solely under volatility uncertainty. This is  straightforward, by exploiting the 
explicit formulas obtained in Section \ref{subsubsec:onlyVolUncert}.
By (\ref{eq:finalGDB-BlackScholes}) one gets for any $t\in[0,T]$ and $\gamma\in\R$ that $\pi^u_t(X;\gamma) = e^{(T-t)m(\gamma)}\pi^{BS}_t(X;\gamma)$, $P^{\overline{a}}\text{-a.s.}$, where $\pi^{BS}_t(X;\gamma)$ denotes 
the Black-Scholes price at time $t$ for the put option $X=(\mathcal{K}-L_T)^+$ on underlying $L$ with volatility $\bar{\beta}$ under $P^{\overline{a}}$, for risk-free rate $m(\gamma):=\gamma - h\beta\sqrt{1-\rho^2}\overline{a}_2$. Hence differentiating 
$\pi^u_t(X): \gamma\mapsto \pi^u_t(X;\gamma)$ gives
\begin{equation*}
\frac{\partial \pi^u_t}{\partial \gamma} = (T-t)e^{(T-t)m(\gamma)} \pi^{BS}_t(\gamma)-\mathcal{K}(T-t)N(-d_-)=-(T-t)e^{(T-t)m(\gamma)}L_tN(-d_-).
\end{equation*}
Since the far right-hand side is always non-positive, then $\frac{\partial \pi^u_t(X)}{\partial \gamma}\le 0$. Therefore we obtain,  what intuition suggests: The robust good-deal bound $\pi^u_\cdot(X;\gamma)$ for the 
European put option $X=(\mathcal{K}-L_T)^+$ is non-increasing in $\gamma$ for the model $P^{\overline{a}}$. This implies that for $\underline{\gamma},\overline{\gamma}\in\R$ specifying an interval range $[\underline{\gamma},\overline{\gamma}]$ for the (constant) drift parameter of $L$, the worst-case drift corresponds to $\underline{\gamma}$, and is that for which the supremum $\esssup^{P^{\overline{a}}}_{\gamma\in[\underline{\gamma},\overline{\gamma}]}\pi^u_t(X;\gamma)$ is 
attained for any $t\in[0,T]$.  Note that this supremum may be different from the worst-case good-deal bound (defined in (\ref{eq:DefGDBounds}) and characterized by Theorem \ref{thm:2BSDECHaractforPiU}) under combined drift and volatility uncertainties, 
as the latter parametrizes drift uncertainty instead in terms of stochastic drifts $\gamma$ that satisfy $\underline{\gamma}\le \gamma\le \overline{\gamma}$ pointwise and there is no apparent reason why the worst-case 
volatility for every fixed (stochastic) drift $\gamma$ should be $\overline{a}$. 
 
\section{Appendix}
This section contains some proofs and details omitted in the main body of the paper. 
This includes the Lemma \ref{lem:SaddePointOptPb} which is used 
in the proof of Theorem \ref{thm:GDHedgingTheorem}.
\begin{proof}[Proof of Proposition \ref{pro:ExistenceUniquenessSol2BSDEs}]
	Consider the family $(\Pred(t,\omega))_{(t,\omega)\in[0,T]\times\Omega}$ of sets probability measures given by $\Pred(t,\omega):= \Pred_{[\underline{a},\overline{a}]}$ for all 
	$(t,\omega)\in[0,T]\times\Omega$ and the constant set-valued process $\mathrm{D}:[0,T]\times\Omega\to \mathbb{S}^{>0}_n$ with $\mathrm{D}_t(\omega):=[\underline{a},\overline{a}]$ for all $(t,\omega)\in[0,T]\times\Omega$. 
	Clearly $\mathrm{D}$ has the properties required in \citep[][Example 2.1]{NeufeldNutz} from which the first claim of \citep[][Cor.2.6]{NeufeldNutz} implies that the constant family 
	$\Pred(t,\omega)\equiv \Pred_{[\underline{a},\overline{a}]}$ satisfies Condition A therein, hence in particular the measurability and stability conditions of 
	Assumption 2.1, (iii)-(v) of \citet{PossamaiTanZhou}, whereby (iii) in particular follows from  \citep[][Condition (A1)]{NeufeldNutz} since a countable product of analytic sets is 
	again analytic \citep[][Prop.7.38]{BertsekasShreve}.
	This together with our Assumption \ref{asp:Assumption1} and $X\in \mathbb{L}^2(\filt_+)$ imply Assumptions 2.1 and 3.1 of \citet{PossamaiTanZhou} from which 
	a direct application of Theorems 4.1 and 4.2 therein yields existence and uniqueness of a 2BSDE solution $(Y,Z,(K^P)_{P\in\Pred_{[\underline{a},\overline{a}]}})\in \mathbb{D}^2\big(\filt^{\Pred_{[\underline{a},\overline{a}]}}_+\big)\times \mathbb{H}^2\big(\filt^{\Pred_{[\underline{a},\overline{a}]}}\big)\times \mathbb{I}^2\big(\big(\filt^P_+\big)_{P\in\Pred_{[\underline{a},\overline{a}]}}\big)$
satisfying the representation (\ref{eq:RepSol2BSDEs}), where $\filt^{\Pred_{[\underline{a},\overline{a}]}}_+=\big(\F_{t+}^{\Pred_{[\underline{a},\overline{a}]}}\big)_{t\in[0,T]}$, with $\F_{t+}^{\Pred_{[\underline{a},\overline{a}]}}:=\bigcap_{P\in\Pred_{[\underline{a},\overline{a}]}}\F^P_{t+}.$
	Note that the additional orthogonal martingale components in the 2BSDE formulation of \citet{PossamaiTanZhou} vanishes in this case thanks to the martingale representation property 
	of $\Pred_{[\underline{a},\overline{a}]}$ in Lemma \ref{lem:QSAggregPS}. Moreover since by Lemma \ref{lem:QSAggregPS} the filtration $\filt^P$ is actually right continuous for every $P\in \Pred_{[\underline{a},\overline{a}]}$, it follows that 
	$\F^P_{t+} = \F^P_{t}$ for $t\in[0,T],\ P\in \Pred_{[\underline{a},\overline{a}]}$ which implies that the minimum condition in the definition of a 2BSDE solution in \citep[][Def.4.1]{PossamaiTanZhou} is equivalent to (\ref{eq:MinCond}).
	In particular, we have $\filt^{\Pred_{[\underline{a},\overline{a}]}}_+ = \filt^{\Pred_{[\underline{a},\overline{a}]}}$. 
\end{proof}
\begin{proof}[Proof of Proposition \ref{Pro:RepresentationRhoX}]
By the classical comparison theorem for standard BSDEs, one easily sees for every $P\in\Pred_{[\underline{a},\overline{a}]}$ that $\mathcal{\widetilde{Y}}^{P,X} = \rho^{P}_\cdot(X),$ 
where $(\mathcal{\widetilde{Y}}^{P,X},\mathcal{\widetilde{Z}}^{P,X})$ is the unique solution to the standard Lipschitz BSDE under $P$ with data $(-\widehat{F}'(\widehat{a}^{\frac{1}{2}}\cdot),X)$
 for $F'$ given by (\ref{eq:DefforRho}). In addition, one can verify as in the proof of Theorem \ref{thm:2BSDECHaractforPiU} for $(F,X)$ that $(F',X)$ satisfies Assumption \ref{asp:Assumption1} as well.
The required result then follows from an application of Proposition \ref{pro:ExistenceUniquenessSol2BSDEs}.
\end{proof}
\begin{proof}[Proof of Lemma \ref{lem:Explicit2BSDESotution}]
By Theorem \ref{thm:2BSDECHaractforPiU} the 2BSDE (\ref{eq:2BSDEpiUvolUncert}) admits a unique solution which remains to be identified as claimed. For any $P\in\Pred_{[\underline{a},\overline{a}]}$, 
It\^o's formula and (\ref{eq:BSTypePDEvOLuNC}) yield for $t\in[0,T]$, $\ P\text{-a.s.}$, that 
\begin{equation*}
  v(t,L_t) = X-\int_t^TZ_s^{\text{tr}}dB_s + h\int_t^T \big(\widehat{a}_s^{11}\widehat{a}_s^{22}-(\widehat{a}_s^{12})^2\big)^{1/2}\big(\widehat{a}_s^{11}\big)^{-1/2}\big\lvert Z^2_s\big\rvert ds +K_T-K_t, 
  \end{equation*}
where, by using (\ref{eq:CloFormUpperGDBPutExVOlUnc}), the processes  $Z = (Z^1,Z^2)^{\text{tr}}$ and $K$ are given by 
\begin{align}
\label{eq:ExplicitFOrmforZ}
Z_t &=\beta L_t\frac{\partial v}{\partial x}(t,L_t)\big(\rho,\sqrt{1-\rho^2}\big)^{\text{tr}} = -\beta e^{m(T-t)}N(-d_+)L_t\big(\rho,\sqrt{1-\rho^2}\big)^{\text{tr}},\\
K_t &=  \int_0^t \bigg[h\beta\sqrt{1-\rho^2}L_s\frac{\partial v}{\partial x}(s,L_s)\Big(\big(\widehat{a}_s^{11}\widehat{a}_s^{22}-(\widehat{a}_s^{12})^2\big)^{1/2}\big(\widehat{a}_s^{11}\big)^{-1/2}-\sqrt{\overline{a}_2}\Big)\label{eq:DefKAggreg}\\
\nonumber
	&\;+\frac{1}{2}\beta^2L^2_s\frac{\partial^2 v}{\partial x^2}(s,L_s)\Big(\rho^2(\overline{a}_1-\widehat{a}^{11}_s)+(1-\rho^2)(\overline{a}_2-\widehat{a}^{22}_s)-2\rho\sqrt{1-\rho}\widehat{a}^{12}_s\Big)\bigg] ds.
\end{align}
To show that $K$ is a non-decreasing process, notice that $\widehat{a}\le \overline{a}\ P\otimes dt\text{-a.e.}$ yields $\widehat{a}^{1/2}\le \overline{a}^{1/2}\ P\otimes dt\text{-a.e.}$ and both inequalities imply that 
$P\otimes dt$-a.e.\ $\big(\widehat{a}^{11}\widehat{a}^{22}-(\widehat{a}^{12})^2\big)^{1/2}\big(\widehat{a}^{11}\big)^{-1/2}\le \big(\widehat{a}^{22}\big)^{1/2}\le \sqrt{\overline{a}_2}$ and 
\begin{equation*}
\begin{split}
	\rho^2(\overline{a}_1&\,-\widehat{a}^{11})+(1-\rho^2)(\overline{a}_2-\widehat{a}^{22})-2\rho\sqrt{1-\rho}\,\widehat{a}^{12}\\
	&=\big(\rho,\sqrt{1-\rho^2}\big)\overline{a}\,\big(\rho,\sqrt{1-\rho^2}\big)^{\text{tr}} - \big(\rho,\sqrt{1-\rho^2}\big)\widehat{a}\,\big(\rho,\sqrt{1-\rho^2}\big)^{\text{tr}}\ge 0.
\end{split}
\end{equation*}
Thus the process $K$ is $P$-a.s.\ non-decreasing because the delta of the put option in the Black-Scholes model is non-positive and the gamma is non-negative, i.e.\ $\frac{\partial v}{\partial x}(t,L_t)\le 0$ and $\frac{\partial^2 v}{\partial x^2}(t,L_t)\ge 0$
for all $t\in[0,T]$ using (\ref{eq:CloFormUpperGDBPutExVOlUnc}). Moreover it can be shown following arguments analogous to those in the proof of \citep[][Thm.5.3]{SonerTouziZhang-Wellposedness}
that the process $K$ satisfies the minimum condition (\ref{eq:MinCond}); we omit the essentially technical details which we refer to \citep[][Sect.4.3.3, proof of Lem.4.25]{KentiaPhD}. 

It remains to show that $v(\cdot,L_\cdot)\in\mathbb{D}^2\big(\filt^{\Pred_{[\underline{a},\overline{a}]}}\big)$, $Z\in \mathbb{H}^2\big(\filt^{\Pred_{[\underline{a},\overline{a}]}}\big)$ and that the stochastic integral 
$\int_0^\cdot Z_s^{\text{tr}}dB_s$ can be constructed pathwise. This will conclude by uniqueness of the solution to the 2BSDE that $(v(\cdot,L_\cdot),Z,K)$  is the unique solution to the 2BSDE (\ref{eq:2BSDEpiUvolUncert}) 
for $Z$ given as in (\ref{eq:ExplicitFOrmforZ}) and $K$ as in (\ref{eq:DefKAggreg}).  
Since $v$ is $\mathcal{C}^{1,2}$ and $L$ is $\Pred_{[\underline{a},\overline{a}]}$-q.s.\ continuous and $\filt^{\Pred_{[\underline{a},\overline{a}]}}$-adapted, then $v(\cdot,L_\cdot)$ is $\Pred_{[\underline{a},\overline{a}]}$-q.s.\ c\`ad\`ag 
and $\filt^{\Pred_{[\underline{a},\overline{a}]}}$-progressive and $Z$ is $\filt^{\Pred_{[\underline{a},\overline{a}]}}$-predictable. That $v(\cdot,L_\cdot)$ is in $\mathbb{D}^2\big(\filt^{\Pred_{[\underline{a},\overline{a}]}}\big)$
now follows from (\ref{eq:CloFormUpperGDBPutExVOlUnc}) which indeed implies $0\le v(t,L_t)\le \mathcal{K}$ pathwise. 
By (\ref{eq:ExplicitFOrmforZ}) and since $\underline{a}\le \widehat{a}\le \overline{a}$ holds $P\otimes dt$-a.e.\ for any $P\in\Pred_{[\underline{a},\overline{a}]}$, one has $\big\lvert \widehat{a}^{1/2}_tZ_t\big\rvert^2 \le \max(\overline{a}_1,\overline{a}_2)\beta^2e^{2\lvert m\rvert T} L^2_t $
$P\otimes dt$-a.e.\ for any $P\in\Pred_{[\underline{a},\overline{a}]}$. Hence to conclude that $Z\in \mathbb{H}^2\big(\filt^{\Pred_{[\underline{a},\overline{a}]}}\big)$ it suffices to show that $\sup_{P\in\Pred_{[\underline{a},\overline{a}]}}E^P\big[\int_0^TL^2_tdt\big]<\infty$. 
To this end, note that for any $P\in\Pred_{[\underline{a},\overline{a}]}$ hold
\begin{equation}\label{eq:IntermediaryforL}
\int_0^TL^2_tdt\le \beta^{-2}(\min(\underline{a}_1,\underline{a}_2))^{-1}\langle L\rangle_T\quad\text{and}\quad
L^2_T \le L^2_0e^{\big(2\lvert\gamma\rvert+\beta^2\max(\overline{a}_1,\overline{a}_2)\big)T}\tilde{L}_T
\end{equation}
$P$-almost surely, for $\tilde{L} = 1+\int_0^\cdot2\tilde{L}_s\beta\big(\rho dB^1_s+\sqrt{1-\rho^2}dB^2_s\big)\ \Pred_{[\underline{a},\overline{a}]}$-q.s.. Clearly 
$E^P[\tilde{L}_T]\le 1$ for every $P\in\Pred_{[\underline{a},\overline{a}]}$. 
Thus taking expectations in (\ref{eq:IntermediaryforL}) gives 
\begin{equation*}
E^P\Big[\int_0^TL^2_tdt\Big] \le \beta^{-2}(\min(\underline{a}_1,\underline{a}_2))^{-1}L^2_0e^{\big(2\lvert\gamma\rvert+\beta^2\max(\overline{a}_1,\overline{a}_2)\big)T},\quad \text{for all }P\in\Pred_{[\underline{a},\overline{a}]}.
\end{equation*}
Now taking the supremum over all $P\in\Pred_{[\underline{a},\overline{a}]}$ implies that $Z\in \mathbb{H}^2\big(\filt^{\Pred_{[\underline{a},\overline{a}]}}\big)$. As a consequence $(v(\cdot,L_\cdot),Z,K)$ is the unique solution to the 2BSDE (\ref{eq:2BSDEpiUvolUncert}).
Finally $\int_0^\cdot Z_s^{\text{tr}}dB_s$ can be constructed pathwise by \citet{Karandikar}, since $Z$ is  continuous and $\filt$-adapted. 
\end{proof}

\begin{lemma}\label{lem:SaddePointOptPb}
	For $d< n$, let $\sigma \in \R^{d\times n}$ be of full (maximal) rank $d$, $h>0$, $z\in\R^n$, $\xi\in C:=\mathrm{Im}\,\sigma^{\textrm{tr}}$, and
	$ U\subset \R^n$ be a convex-compact set. 
	Let $\Pi(\cdot)$ and $\Pi^\bot(\cdot)$ denote the orthogonal projections onto the linear subspaces $C$ and $C^\bot=\mathrm{Ker}\,\sigma$, respectively, and let $F: \R^n\times\R^n\to \R$
	with $F((\phi,\vartheta)= {\xi}^\textrm{tr}\phi -\vartheta^\textrm{tr}(z-\phi) - h\lvert z-\phi\rvert$.
	Assume that $\abs{\xi+\Pi(\vartheta)}<h$ for all $\vartheta\in U$. Then:
\\
1.\  the vector 
 \(
 \bar{\phi}(\vartheta) :=  \Pi(z)+\big\lvert\Pi^\bot(z)\big\rvert\Big(h^2-\lvert\xi+\Pi(\vartheta)\rvert^2\Big)^{-1/2}\big(\xi+\Pi(\vartheta)\big)
 \)
 is, for any $\vartheta\in\R^n$, the unique maximizer of $\phi\mapsto F(\phi,\vartheta)$ over $C$ , the maximum being
 \begin{equation*}
 G(\vartheta):=F(\bar{\phi}(\vartheta),\vartheta)=-\Pi^\bot(\vartheta)^{\text{tr}}\,\Pi^\bot(z)+\xi^{\text{tr}}\Pi(z) -\Big(h^2-\lvert\xi+\Pi(\vartheta)\rvert^2\Big)^{1/2}\big\lvert \Pi^\bot(z)\big\rvert.
 \end{equation*}
 \\
2.\  The minmax identity 
\[
\inf_{\vartheta\in U}\ \sup_{\phi\in C}F(\phi,\vartheta)= F(\bar{\phi}(\bar{\vartheta}),\bar{\vartheta})=G(\bar{\vartheta})=\sup_{\phi\in C}\ \inf_{\vartheta\in U}F(\phi,\vartheta)
\]
holds, for $\bar{\phi}(\bar{\vartheta})$ being the $\phi$-component of the saddle point with $\bar{\vartheta}=\argmin_{\vartheta\in U} G(\vartheta)$.
\\
3.\ Assume $0\in U$, then for $\bar{\vartheta}$ and $\bar{\phi}(\bar{\vartheta})$
 we have $F(\bar{\phi}(\bar{\vartheta}),\bar{\vartheta}) = \inf_{\vartheta\in U} F(\bar{\phi}(\bar{\vartheta}),\vartheta)$.
\end{lemma}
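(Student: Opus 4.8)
The plan is to handle the three parts in the order $1\to3\to2$, using the explicit inner maximizer of Part~1 together with an envelope identity as the bridge to the two saddle-point statements; the assumption $\lvert\xi+\Pi(\vartheta)\rvert<h$ will be what guarantees the inner supremum is finite and attained.

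First I would prove Part~1 by splitting everything along $C=\mathrm{Im}\,\sigma^{\text{tr}}$ and $C^\bot=\mathrm{Ker}\,\sigma$. Writing $\phi=\Pi(z)+w$ with $w\in C$, so that $z-\phi=\Pi^\bot(z)-w$ with $\Pi^\bot(z)\in C^\bot$ orthogonal to $w\in C$, and using $\xi\in C$, the objective becomes, up to the additive constant $\xi^{\text{tr}}\Pi(z)-\Pi^\bot(\vartheta)^{\text{tr}}\Pi^\bot(z)$,
\[
g(w)=(\xi+\Pi(\vartheta))^{\text{tr}}w-h\bigl(\lvert\Pi^\bot(z)\rvert^2+\lvert w\rvert^2\bigr)^{1/2},\qquad w\in C.
\]
Setting $p:=\xi+\Pi(\vartheta)\in C$ and $r:=\lvert\Pi^\bot(z)\rvert$, Cauchy--Schwarz forces the optimal $w$ to be a nonnegative multiple of $p$, reducing matters to the scalar problem $\sup_{u\ge0}\bigl(u\lvert p\rvert-h(r^2+u^2)^{1/2}\bigr)$. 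The hypothesis $\lvert p\rvert<h$ is precisely what makes this coercive (otherwise the supremum is $+\infty$), and its stationarity condition gives $u^\ast=\lvert p\rvert\,r\,(h^2-\lvert p\rvert^2)^{-1/2}$, hence the stated $\bar\phi(\vartheta)$; strict concavity of $g$ for $r>0$, and a direct argument for $r=0$, give uniqueness, while back-substitution of $u^\ast$ produces the claimed closed form for $G(\vartheta)$.

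Next I would record the key link. Differentiating the explicit expression for $G$ yields
\[
\nabla G(\vartheta)=-\Pi^\bot(z)+\lvert\Pi^\bot(z)\rvert\bigl(h^2-\lvert\xi+\Pi(\vartheta)\rvert^2\bigr)^{-1/2}\bigl(\xi+\Pi(\vartheta)\bigr)=-\bigl(z-\bar\phi(\vartheta)\bigr),
\]
the envelope identity relating $\nabla G$ to the $\vartheta$-gradient of the affine map $\vartheta\mapsto F(\bar\phi(\vartheta),\vartheta)$. I would also note that $\vartheta\mapsto(h^2-\lvert\xi+\Pi(\vartheta)\rvert^2)^{1/2}$ is concave (its graph over $C$ is an upper hemisphere), so $G$ is convex and differentiable on the region $\lvert\xi+\Pi(\vartheta)\rvert<h$ that contains $U$. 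For Part~3, since $\bar\vartheta=\argmin_{\vartheta\in U}G(\vartheta)$ and $U$ is convex, the first-order necessary condition reads $\langle\nabla G(\bar\vartheta),\vartheta-\bar\vartheta\rangle\ge0$ for all $\vartheta\in U$; because $\vartheta\mapsto F(\bar\phi(\bar\vartheta),\vartheta)$ is affine with constant gradient $-(z-\bar\phi(\bar\vartheta))=\nabla G(\bar\vartheta)$, this is exactly $F(\bar\phi(\bar\vartheta),\vartheta)\ge F(\bar\phi(\bar\vartheta),\bar\vartheta)$, i.e.\ $\bar\vartheta$ minimizes $F(\bar\phi(\bar\vartheta),\cdot)$ over $U$. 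The argument needs only $U$ convex; the hypothesis $0\in U$ simply records that $U=\Theta_t$ is a centred ball in the intended application.

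Finally, Part~2 assembles a saddle point at $(\bar\phi(\bar\vartheta),\bar\vartheta)$: Part~1 at $\vartheta=\bar\vartheta$ gives $F(\phi,\bar\vartheta)\le F(\bar\phi(\bar\vartheta),\bar\vartheta)$ for all $\phi\in C$, and Part~3 gives $F(\bar\phi(\bar\vartheta),\bar\vartheta)\le F(\bar\phi(\bar\vartheta),\vartheta)$ for all $\vartheta\in U$. The standard saddle-point argument, combined with weak duality $\sup_\phi\inf_\vartheta F\le\inf_\vartheta\sup_\phi F$, then forces all four quantities to coincide with $F(\bar\phi(\bar\vartheta),\bar\vartheta)=G(\bar\vartheta)$, which is the asserted minmax identity. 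I expect the main obstacle to be the bookkeeping in Part~1 — keeping the $C/C^\bot$ orthogonal splitting straight so the cross terms genuinely vanish — together with verifying the envelope identity exactly, since it is precisely this identity that lets Parts 2--3 be settled by elementary first-order convex optimization rather than by invoking a general minimax theorem such as Sion's.
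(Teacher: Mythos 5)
Your proof is correct, but for Parts~2 and~3 it takes a genuinely different route from the paper's. The paper proves Part~1 by reference to earlier work (your direct $C/C^\bot$ splitting and scalar reduction is the same computation, carried out in full, and your formulas check out), but it obtains the minmax identity of Part~2 by invoking a general minimax theorem of Ekeland--Temam (convexity--continuity in $\vartheta$, concavity--continuity in $\phi$, $U$ convex compact, $C$ convex closed), identifies the $\phi$-component of the saddle points with $\bar\phi(\bar\vartheta)$ through strict concavity, and only then derives Part~3 from Part~2 via a separate existence argument: $\phi\mapsto\inf_{\vartheta\in U}F(\phi,\vartheta)$ is shown to be coercive on $C$, which is precisely where the hypothesis $0\in U$ enters (it gives $\sup_{\vartheta\in U}\vartheta^{\text{tr}}(z-\phi)\ge 0$), so a maximizer $\phi^*$ exists and is matched to $\bar\phi(\bar\vartheta)$. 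You reverse the logical order ($1\to3\to2$) and construct the saddle point by hand: your envelope identity $\nabla G(\vartheta)=-(z-\bar\phi(\vartheta))$ is correct (I verified the differentiation, using that $\Pi$, $\Pi^\bot$ are self-adjoint and $\xi+\Pi(\vartheta)\in C$), the first-order variational inequality $\langle\nabla G(\bar\vartheta),\vartheta-\bar\vartheta\rangle\ge 0$ at the minimizer of the differentiable $G$ over the convex $U$ gives Part~3 because $\vartheta\mapsto F(\bar\phi(\bar\vartheta),\vartheta)$ is affine with constant gradient $-(z-\bar\phi(\bar\vartheta))$, and Part~2 then follows from the two saddle inequalities together with weak duality. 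Your route is more elementary and self-contained (no minimax theorem; even the convexity of $G$ you record is not actually needed, since only the necessary first-order condition is used), and it exposes that $0\in U$ is dispensable for Part~3 --- compactness of $U$ serves only to guarantee that $\bar\vartheta$ exists. What the paper's machinery buys in exchange is slightly more than the identity itself: via strict concavity it shows every saddle point of $F$ on $C\times U$ has $\phi$-component $\bar\phi(\bar\vartheta)$, whereas you exhibit one saddle point and do not address uniqueness of the $\phi$-component across saddle points; since the lemma is applied in Theorem~\ref{thm:GDHedgingTheorem} only through the identity at the specific point $(\bar\phi(\bar\vartheta),\bar\vartheta)$, this difference is immaterial for the paper's purposes.
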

\begin{proof} As the proof of part 1\ is analogous to that of \citep[][Lem.6.1]{Becherer-Good-Deals} \citep[or, in a more general ellipsoidal setup, of ][Lem.5.1]{BechererKentiaTonleu},
we leave details to reader and just show parts 2 and 3 here. 

Part 2: For every $\phi\in\R^n$, the function $\vartheta\mapsto F(\phi,\vartheta)$ is convex, continuous. For every $\vartheta\in\R^n$ the function $\phi\mapsto F(\phi,\vartheta)$ is concave,
 continuous. As $ U\subset\R^n$ is convex and compact, and  $C$ is convex and closed,  a minmax theorem \citep[][Ch.VI, Prop.2.3]{EkelandTemam}
 applies and the minmax identity holds. Furthermore for any $\vartheta\in U$, the function $\phi\mapsto F(\phi,\vartheta)$ 
is strictly concave over $\{\Pi^\bot(\phi)=0\}$ if $\Pi^\bot(z)\neq 0$, and strictly concave at $\phi=z$ if $\Pi^\bot(z)= 0$, since $\abs{\xi+\Pi(\vartheta)}<h$. 
Hence \citep[][Ch.VI, Prop.1.5]{EkelandTemam} implies that the $\phi$-components of the saddle points are identical, in particular, to $\bar{\phi}(\bar{\vartheta})$ since indeed $(\bar{\phi}(\bar{\vartheta}),\bar{\vartheta})$
is a saddle point.

Part 3: The function $\R^n\ni\phi\mapsto \inf_{\vartheta\in U}F(\phi,\vartheta) = {\xi}^\textrm{tr}\phi -\sup_{\vartheta\in U}\vartheta^\textrm{tr}(z-\phi) - h\lvert z-\phi\rvert$ is concave and continuous. 
In addition this function is also coercive on $C$, i.e.\ $F(\phi)\to -\infty$ as $\abs{\phi}\to +\infty$ for $\Pi^\bot(\phi)=0$ because $\abs{\xi}<h$ and $\sup_{\vartheta\in U}\vartheta^{\text{tr}}(z-\phi)\ge0$
since $0\in U$. Hence by \citep[][Ch.II, Prop.1.2]{EkelandTemam} there exists $\phi^*\in C$ such that $\sup_{\phi\in C}\inf_{\vartheta\in U}F(\phi,\vartheta) = \inf_{\vartheta\in U}F(\phi^*,\vartheta)$.
In other words, $\phi^*$ is the $\phi$-component of a saddle point of $F$, with the other component being $\vartheta^* = \argmax_{\vartheta\in U} \vartheta^\textrm{tr}(z-\phi^*)$. By Part 2.,$\phi^* = \bar{\phi}(\bar{\vartheta})$,
and hence claim 3.\ follows.
\end{proof}
\bibliographystyle{abbrvnat}
\bibliography{Paper_BechererKentia}
\end{document}